\newcommand{\quotes}[1]{``#1''}
\newcommand{\eg}{\emph{e.g.}\xspace}
\newcommand{\ie}{\emph{i.e.}\xspace}
\newcommand{\identity}{\mathbbm{1}}
\newcommand{\thetabf}{\boldsymbol{\theta}}
\newcommand{\etat}{\eta_{t}}
\newcommand{\vecu}{\boldsymbol{u}}
\newcommand{\vecv}{\boldsymbol{v}}
\newcommand{\vecx}{\mathbf{x}}
\newcommand{\veca}{\mathbf{a}}
\newcommand{\normx}{\norm{\vecx}}
\newcommand{\sphere}{\mathbb{S}^{d-1}}
\newcommand{\euclidean}{\mathbb{R}^{d}}
\newcommand{\metric}{\boldsymbol{g}}
\newcommand{\loss}{\mathcal{L}}
\newcommand{\partialbf}{\boldsymbol{\partial}}
\newcommand{\jacobian}{\mathbf{J}}
\newcommand{\cnot}{$\operatorname{CNOT}~$}
\newcommand{\xxz}{$\operatorname{XXZ}~$}
\newcommand{\tfim}{$\operatorname{TFIM}$}
\newcommand{\hwe}{$\operatorname{HWE}$}
\newcommand{\he}{$\operatorname{HE}$}
\newcommand{\heop}{$\operatorname{HE \, OP}$}
\newcommand{\qol}{$\operatorname{QOL}$}
\DeclareMathOperator{\Var}{Var}
\DeclareMathOperator{\expmap}{ExpMap}
\DeclareMathOperator{\Grad}{grad}
\DeclareMathOperator{\Time}{TIME}
\newcommand{\chem}[1]{\ensuremath{\operatorname{#1}}}
\definecolor{darkGreen}{RGB}{0,110,0}
\theoremstyle{definition}
\newtheorem{theorem}{Theorem}[section]
\newcommand{\TII}{\affiliation{Quantum Research Center, Technology Innovation Institute, Abu Dhabi, UAE}}
\newcommand{\LIPSIX}{\affiliation{Laboratoire d'Informatique de Paris 6, CNRS, Sorbonne Universit\'{e}, Paris, France}}
\newcommand{\HKUSTGZ}{\affiliation{Thrust of Artificial Intelligence, Information Hub, The Hong Kong University of Science and Technology (Guangzhou), Guangzhou 511453, China}}
\begin{document}

\title{Quantum optimization with exact geodesic transport}

\author{Andr\'{e} J. Ferreira-Martins}
\email{andre.jfm.sci@gmail.com}
\TII
\LIPSIX

\author{Renato M. S. Farias}
\TII

\author{Giancarlo Camilo}
\TII

\author{Thiago O. Maciel}
\TII

\author{Allan Tosta}
\TII

\author{Ruge Lin}
\HKUSTGZ

\author{Abdulla Alhajri}
\TII

\author{Tobias Haug}
\TII

\author{Leandro Aolita}
\TII

\begin{abstract}
We introduce an architecture for variational quantum algorithms that can be efficiently trained via parameter updates along exact geodesics on the Riemannian state manifold. 
This features a parameter-optimal circuit ansatz which supersedes known quantum natural gradient methods by removing expensive estimations of the metric tensor and provably reducing gradient estimation costs by $62.5\%$.
Moreover, the framework also naturally incorporates conjugate gradients as a built-in feature, giving an accelerated descent method with convergence guarantees that we dub \emph{exact geodesic transport with conjugate gradients}.
Numerical benchmarks against state-of-the-art variational methods for ground-state preparation of molecular Hamiltonians or $1$-dimensional spin chains (both with and without particle-number conservation) up to $n=16$ qubits show reductions of over one order of magnitude in the number of optimization steps, with global convergence even for degenerate cases and competitive quantum-resource scalings.
In addition, we perform proof-of-principle demonstrations on \textrm{IonQ}'s \texttt{Forte} quantum processor, showcasing deployment of pre-trained circuits for the $\chem{H_{3}^{+}}$ molecule and experimental training for $\chem{H_{2}}$.
Our work enables quantum machine learning applications with shorter training runtime, with implications at the interface of quantum simulation, differential geometry, and optimal control theory.
\end{abstract}

\maketitle
\section{Introduction}
\label{sec:introduction}

Variational quantum algorithms (VQAs) are hybrid quantum-classical methods
with a variety of promising applications using current noisy and future fault-tolerant devices~\cite{Cerezo2021, Bharti2022}. 
The idea is to use a classical computer to iteratively optimize the parameters of a parametrized quantum circuit (PQC) to minimize a target cost function of the circuit's output state. 
However, this classical-quantum optimization loop represents a major challenge in practice, largely due to the large number of iterations required and the presence of flat optimization landscapes~\cite{Larocca2025}. 
The \emph{quantum natural gradient} (QNG)~\cite{Stokes2020, Koczor2022} is a first-order Riemannian optimization method~\cite{absil2008, sato2021} that improves gradient-based VQAs optimization, aiming at faster convergence by accounting for the curvature of the space of quantum states produced by a PQC architecture. 
The QNG moves in the steepest descent direction determined by the metric tensor $\metric$ of the space. 
Further attempts to improve the QNG descent include second-order geodesic corrections~\cite{Halla2025secondorder} and conjugate gradient methods~\cite{Halla2025conjugate}, the latter including also a memory from the previous step. 
However, given that for most PQC ansatze the metric tensor is not a priori available, implementing the QNG in practice comes with a significant overhead of empirically estimating it on quantum hardware~\cite{Haug2022}. 
For a generic PQC with $M$ free parameters, this requires $\order{M^{2}}$ measurements at each optimization step or even more when higher-order corrections are added (\eg $\order{M^{3}}$ measurements are needed to estimate generic second-order geometric quantities~\cite{Halla2025secondorder}). 
Attempts to lower the cost of estimating the metric to $\order{M}$ rely on block-diagonal approximations of $\metric$~\cite{Yao2022, Meyer2023, Qi2024, Fitzek2024, DellAnna2025, GomezLurbe2025}, though it has been shown that they may lead to suboptimal optimization due to loss of information on parameter correlation~\cite{Wierichs2020}.
Further attempts use stochastic approximations of $\metric$ to reduce the cost to $\mathcal{O}(1)$ 
\cite{Straaten2021, Gacon2021, Gacon2023, Halla2025fisher}.
However, numerical evidence suggests that their approximations deteriorate the performance when compared to the QNG with the exact metric tensor. 
To date, it is unknown how to benefit from Riemannian optimization for training VQAs in practice.

In this work, we develop VQAs based on a specific class of PQCs whose circuit parameters are the hyperspherical coordinates of the vector of quantum amplitudes. 
In these coordinates, the metric tensor $\metric$ is a diagonal matrix with a closed-form expression, which completely removes the need for empirical estimations of it.
Moreover, this leads to analytical solutions for geodesic paths that allow for parameter-update rules via \emph{Exact Geodesic Transport} (EGT), see Fig. \ref{fig:schematics}.
Under the assumption of small step-sizes (learning rates), the QNG~\cite{Stokes2020} and its second-order correction~\cite{Halla2025secondorder} correspond respectively to only the first and second-order Taylor series approximations to the exact geodesic paths leveraged by EGT. 
Moreover, we also propose \emph{Exact Geodesic Transport with Conjugate Gradients} (EGT-CG), an enhanced optimizer with provable global convergence guarantees as a built-in feature and faster convergence observed in all analyzed examples. 
Additionally, we show how to estimate the gradients required at each iteration using  $62.5\%$ fewer quantum resources than with the standard parameter shift rule (PSR)~\cite{Schuld2019, Anselmetti2021, Kottmann2021}.

We apply our optimizer to ground-state optimization (GSO) for molecular and spin-chain Hamiltonians, and benchmark it against state-of-the-art optimizers.
Numerical simulations for molecules with up to $n=14$ spin orbitals show that EGT-CG consistently displays a superior performance in the number of iterations required for convergence (with up to $20.5$ times speedups over the Adam optimizer), even in the challenging case of degenerate ground states. 
We also benchmark the scheme for $1$-dimensional chains of up to $n=16$ spins,  with (XXZ model) and without (transverse-field Ising model) particle-number conservation, showing over an order of magnitude faster convergence to chemical accuracy than, \emph{e.g.}, the Adam optimizer.  
In the case of XXZ, we show that this is possible with no overhead in the total CNOT-gate count.  
Moreover, we perform demonstrations on the \texttt{Forte} ion-trap quantum processor from \textrm{IonQ}~\cite{IonQ}: 
For the $\chem{H_{3}^{+}}$ molecule, we prepare the ground state at $6$ bond lengths using a classically pre-trained circuit. 
In turn, for the $\chem{H}_{2}$ molecule, we perform $3$ full steps of the EGT-CG optimization loop, including experimental gradient estimation, reaching the target ground-state energy up to a $\sim 4\%$ deviation. 
Such precision, even under hardware noise, is possible thanks to an error mitigation functionality based on particle-number preservation intrinsic to our scheme. 
Finally, on a more technical level, our framework also includes a mechanism to avoid metric singularities via parameter re-initializations and, in the case of molecules, allows an efficient state initialization via a problem-informed warm start state with high overlap with the Hartree state that yields even faster convergence and avoids barren plateaus~\cite{Larocca2025}.

\section{Results}
\label{sec:results}
\subsection{Framework}
\label{ssec:framework}

We use the standard setup of variational quantum algorithms~\cite{Cerezo2021}: one is given an $n$-qubit parametrized quantum circuit $U_{\thetabf}$ (the \emph{VQA ansatz}) with a set of parameters $\thetabf$ that prepares a quantum state $\ket{\psi(\thetabf)} = U_{\thetabf} \, \ket{0^{n}}$, where the superscripts represent a bit value that gets repeated (\eg $0^{3}1^{2} \equiv 00011$). 
The goal is to find the parameters $\thetabf^{\star}$ that minimize a given loss function $\loss(\thetabf)$ of interest. 
The end-to-end algorithm is a classical-quantum method in which the loss function value and gradient at any specific $\thetabf$ are queried using the quantum device, while a classical computer updates $\thetabf$ iteratively in search of $\thetabf^{\star}$. 
Here, we focus on ground-state optimization, where $\loss(\thetabf) \coloneq \bra{\psi(\thetabf)} H \ket{\psi(\thetabf)}$ is the expectation value of a Hamiltonian $H$, although we emphasize our methods generalize to arbitrary $\loss(\thetabf)$. 

Given $U_{\thetabf}$, the set of all reachable quantum states defines a hypersurface on the Hilbert space.
Whenever this surface is a manifold, we can define a Riemannian metric $\metric$ by taking the real part of the quantum geometric tensor~\cite{Liu2020quantum, Stokes2020, Meyer2021}
\begin{align}\label{eq:metric_components}
    g_{j\ell}(\thetabf) = \Re{\braket{\partial_{\theta_{j}}\psi \, \vert \, \partial_{\theta_{\ell}}\psi} - \braket{\partial_{\theta_{j}}\psi \, \vert \, \psi} \!\! \braket{\psi \, \vert \, \partial_{\theta_{\ell}}\psi}}\,,
\end{align}
where $\partial_{\theta_{j}} \coloneqq \partial / \partial \theta_{j}$ denotes partial derivatives with respect to a variable $\theta_{j}$. 
Eq. \eqref{eq:metric_components} is also known as the quantum Fisher information metric, or Fubini-Study metric. 
From $\metric$, various geometrical properties can be derived, such as geodesics and the exponential map describing parallel transport along geodesics (see App. \ref{sec:differential_geometry}). 

\begin{figure}[!t]
    \includegraphics[width=\columnwidth]{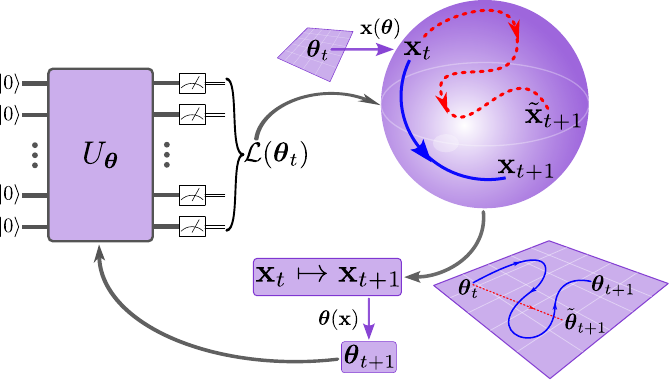}
    \caption{
        \textbf{Schematic illustration of our method.} 
        VQA pipeline with standard gradient descent (GD, red dashed curves) update rule \emph{vs.} the proposed descent with exact geodesic transport (EGT, blue solid curves). 
        In GD, the search for a minimum of the loss function $\loss(\thetabf)$ is done via steps along straight lines on a flat parameter space, whereby the parameters $\thetabf$ are updated directly.
        In contrast, EGT implements an amplitude-based update rule, based on paths along geodesics on the curved manifold that contains the state vector $\ket{\psi(\thetabf)}$ output by the variational circuit $U_{\thetabf}$.
        For our circuit ansatz, such paths define great-circle arcs on a hypersphere.
        Both spaces are related via the (exact) coordinate transformation $\thetabf = \thetabf(\vecx)$ between circuit parameters $\thetabf$ and the amplitudes $\vecx$ of $\ket{\psi(\thetabf)}$ in the computational basis, and its inverse $\vecx(\thetabf)$.
        This allows one to perform the amplitudes update via EGT, then recover the updated parameters $\thetabf_{t+1}$ from the updated amplitudes $\vecx_{t+1}$, as in Eq. \ref{eq:update_exact}.
    }
    \label{fig:schematics}
\end{figure}

Here, we consider as VQA ansatze the exact amplitude encoders introduced in Ref.~\cite{Farias2025}. 
In it, they construct a $\thetabf$-parameterized quantum circuit that prepares an arbitrary state $\ket{\psi(\thetabf)} = \sum_{j=1}^{d} \, x_{j}(\thetabf) \, \ket{b_{j}}\,$, with $\vecx \coloneqq \{x_{j}(\thetabf)\}_{j \in [d]}$ being an $\ell_{2}$-normalized complex amplitude vector, in the span of any subset $B$ of $d \leq 2^n$ computational basis states of a $n$-qubit system.
For simplicity, we focus on the case of real-valued amplitudes (for the complex case, see App. \ref{sec:differential_geometry}), since, for all the examples analyzed, the Hamiltonian $H$ has real-valued components in the computational basis.
In this case, the circuit consists of $M=d-1$ parameterized gates whose angles $\thetabf \equiv \thetabf(\vecx)$ are the hyperspherical coordinates of the amplitude vector $\vecx$ on the $(d-1)$-dimensional sphere (see Ref.~\cite{Farias2025} for details). 
As a result, using these ansatze, we guarantee full expressivity, \ie, overparametrization~\cite{Larocca2023, Haug2021capacity, Haug2024} in the subspace spanned by $B$, with the minimum possible number of parameters.
In addition, we fully characterize the absence of (probabilistic)  barren plateaus~\cite{Larocca2025} for $\mathrm{poly}(n)$-dimensional (sub)spaces in the optimization landscape. App. \ref{app:barren-plateau} shows the variance of the loss function $\loss$ and its (natural) gradient components.

The manifold spanned by the set of all states $\ket{\psi(\thetabf)}$ reachable by this ansatz is diffeomorphic to the ($d-1$)-dimensional sphere $\sphere$, whose metric $\metric$ is diagonal in the coordinate basis, with components $g_{11} = 1$ and $g_{jj} = \prod_{\ell = 1}^{j-1} \, \sin^{2}(\theta_{\ell})$ for $j \in [2, \, d - 1]$.
Consequently, the exponential map on $\sphere$, for a starting point $\vecx$ with tangent vector $\vecv$, is given by~\cite{absil2008, sato2021}
\begin{align}
    \expmap_{\vecx}(\eta \, \vecv) = \cos(\eta \, \norm{\vecv}) \, \vecx + \sin(\eta \, \norm{\vecv}) \, \frac{\vecv}{\norm{\vecv}} \,,
    \label{eq:expmap}
\end{align} 
where $\eta$ is an affine parameter, and $\norm{\vecv} \coloneqq \sqrt{\braket{\vecv, \, \vecv}_ {\vecx}}$, with $\braket{\vecu, \, \vecv}_{\vecx} \coloneq  \vecu^{T} \vecv$ being the induced metric in the tangent space of $\vecx \in \sphere$. 
See App.\ref{sec:differential_geometry} for derivation.

\subsection{VQAs with exact geodesic transport}
\label{ssec:EGT}

We propose a gradient-based parameter optimization with exact geodesic descent that removes the issue of metric estimation on hardware, solving the main practical 
bottlenecks of previous QNG approaches, while replacing low-order approximations of the geodesic flow by the exact procedure.

Our Exact Geodesic Transport (EGT) parameter-update rule, illustrated in Fig. \ref{fig:schematics} reads
\begin{align}
    \vecx_{t+1} = \mathrm{ExpMap}_{\vecx_{t}}(\etat\,\vecv_{t}) \, , \quad \thetabf_{t+1} = \thetabf(\vecx_{t+1}) \, ,
    \label{eq:update_exact}
\end{align} 
where $\etat$ is the learning rate at step $t$, and $\vecv_{t} \coloneq - \jacobian(\thetabf_{t}) \, \metric^{-1}(\thetabf_{t}) \, (\partialbf_{\thetabf}\loss)_{\thetabf_{t}}$ (see App. \ref{app:riemannian_opt_wolfe}).
Here, $\thetabf(\vecx)$ is the standard hyperspherical-coordinate transformation (see Ref. \cite[Eq. $(7)$]{Farias2025}), $\metric^{-1}(\thetabf_{t}) \, (\partialbf_{\thetabf}\loss)_{\thetabf_{t}}$ is the \emph{natural gradient} of $\loss(\thetabf)$, $\jacobian(\thetabf)$ is the Jacobian of the transformation $\thetabf(\vecx)$, and $\partialbf_{\thetabf} \coloneqq \{\partial_{\theta_{j}}\}_{j\in[\abs{\thetabf}]}$ is used for the vector of all partial derivatives over $\abs{\thetabf}$ variables. 
We denote $(\partialbf_{\thetabf} \, \loss)_{\thetabf^{\prime}}$ as the derivative of the function $ \loss $ w.r.t. $\thetabf$ and evaluated at $\thetabf = \thetabf^{\prime}$. 
For $\etat \, \norm{\vecv_{t}} \ll 1$, first- and second-order Taylor approximations of Eq. \eqref{eq:update_exact} recover, up to a coordinate transformation, the standard QNG update rule~\cite{Stokes2020} and its second-order correction~\cite{Halla2025secondorder}, respectively. 
In addition, in App. \ref{app:relation} we show how Eq. \eqref{eq:update_exact} corresponds to an approximation of quantum imaginary-time evolution (QITE) at each step $t$, which becomes exact whenever $\loss$ is the fidelity between $\ket{\psi(\thetabf)}$ and a pure state.

Having analytical access to geometric quantities on the sphere allows us to improve the EGT update rule in Eq. \eqref{eq:update_exact} using the Riemannian conjugate gradient (CG) method in~\cite{sato2022}.
The resulting Exact Geodesic Transport with Conjugate Gradient (EGT-CG) optimizer takes the same form as Eq. \eqref{eq:update_exact}, replacing $\vecv_{t}$ by a new tangent vector $\vecu_{t}$ that carries a memory from the previous step, given recursively by 
\begin{align}
\vecu_{0} \coloneq \vecv_{0} \, , \quad \vecu_{t} \coloneq \vecv_{t} + \beta_{t} \, \mathcal{T}_{\etat \vecu_{t}}(\vecu_{t-1}) \, ,
\label{eq:egt-cg-update}
\end{align}
where $\beta_{t}\geq 0$ and the expression $\mathcal{T}_{\etat \vecu_{t}}(\vecu_{t-1}) \coloneq \cos(\etat \, \norm{\vecu_{t-1}}) \, \vecu_{t-1} - \sin(\etat \, \norm{\vecu_{t-1}}) \, \norm{\vecu_{t-1}} \, \vecx_{t-1}$ is the exact vector transport of the previous direction $\vecu_{t-1}$ from the tangent space at $\vecx_{t-1}$ to the tangent space at $\vecx_{t}$ (see App. \ref{sec:differential_geometry}). 
The choice of $\etat$ and $\beta_{t}$ dictates the performance and convergence guarantees of the optimizer, with $\beta_{t} = 0$ for all $t$ recovering the EGT update rule in Eq. \eqref{eq:update_exact}. 
For the GSO problem, EGT-CG allows convergence guarantees by scheduling $\etat$ using the so-called \emph{strong Wolfe conditions}~\cite{wolfe1969,wolfe1971},
and $\beta_{t}$ using the hybrid conjugate gradient, as discussed in detail in App. \ref{app:riemannian_opt_wolfe}. 
We compare these convergence guarantees to heuristic choices of $\etat$ and $\beta_{t}$ using Bayesian optimization~\cite{Movckus1975, Wang2023} in App. \ref{app:global_convergence}, showing that assuring Wolfe conditions can lead to significantly better performance in cases where the ground state is degenerate, whilst using fewer loss function evaluations. 
In Methods \ref{ssec:ansatz_opt_features}, we show extra benefits and features when using EGT-CG in conjunction with the ansatz in Ref.~\cite{Farias2025}.

\subsection{Numerical simulations}
\label{ssec:results_numerical}

We numerically benchmark the performance of the EGT-CG optimizer against numerous optimizers and learning rate schedulers (referred to as \emph{optimization schemes}) on the GSO problem. 
We show results for the electronic structure problem of $14$ different molecules, and the performance scaling with system size using the \xxz spin-chain Hamiltonian.

\subsubsection{EGT-CG for the electronic structure problem}
\label{sssec:electronic_structure}
\begin{figure*}[!t]
    \includegraphics[width=\textwidth]{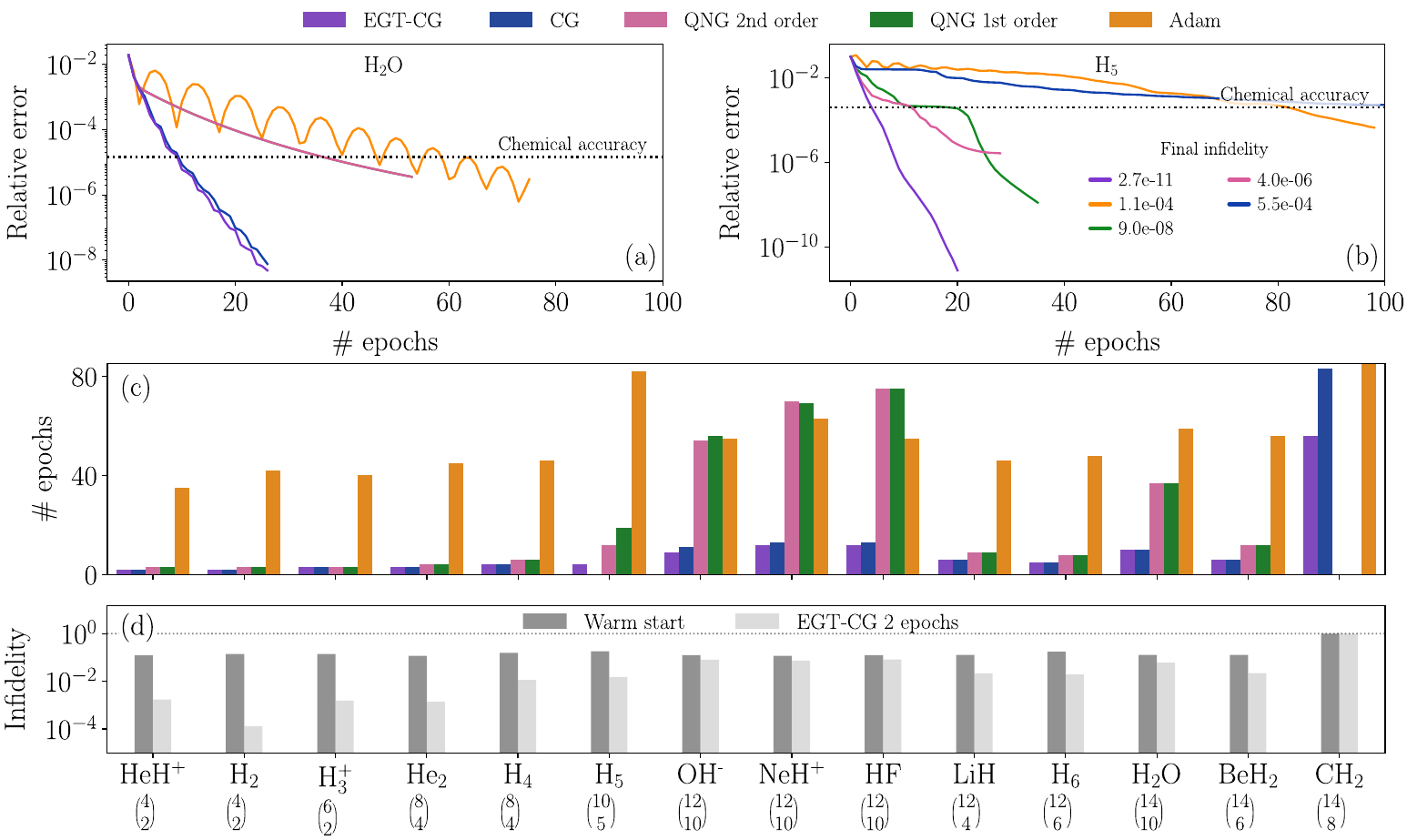}
    \caption{
        \textbf{Ground-state optimization for molecules.}
        Performance of different optimization schemes using $\operatorname{HWE}_{k}$ as VQA ansatz and the warm start in Eq. \eqref{eq:warm_start} with $\alpha=0.9$.
        The different optimizers are: exact geodesic transport with conjugate gradients (EGT-CG, purple), conjugate gradient method with flat-space gradients (CG, blue), quantum natural gradient (QNG) of first (green) and second (pink) orders, and the standard Adam optimizer (orange). 
        See Methods \ref{ssec:details_numerical} for the respective learning rate schedulers. 
        (\textrm{a}) Relative ground-state energy error for \chem{H_{2}O} \emph{vs.} number of epochs.
        Using the Jordan-Wigner transformation and the STO-3G basis set, the ground state is represented by $n = 14$ spin-orbitals as the active space with $k = 10$ electrons. 
        EGT-CG and CG display a similar performance, both comfortably beating the other methods ($\eta = 0.005$ for Adam here). 
        (\textrm{b}) Relative ground-state energy error for \chem{H_{5}}, represented by $n = 10$ spin-orbitals and $k = 5$ electrons. 
        The color code is the same as in (\textrm{a}), and here Adam has $\eta = 0.05$.
        The infidelity $1 - F_{\text{final}}$ of the state prepared by each scheme at the end of the optimization relative to the true ground state is displayed in the inset. 
        (\textrm{c}) Number of epochs to achieve chemical accuracy for $14$ molecules and different optimizers. 
        For \chem{H_{5}} and \chem{CH_{2}}, some optimization schemes did not converge to chemical accuracy, in which case no bar is shown. 
        \chem{CH_{2}} with Adam reached chemical accuracy in 130 epochs ($y$-axis was limited for better presentation).
        (\textrm{d}) Infidelities relative to the true ground state for the initial state $\ket{\psi_\text{warm}}$ with $\alpha=0.9$ and for the state after 2 epochs of EGT-CG, for the same molecules of (\textrm{c}). 
        Below each molecule label in the $x$-axis, the size of the corresponding parametrized subspace is indicated by $\binom{n}{k}$. 
    }
    \label{fig:molecules_comparison}
\end{figure*}

Given a molecule having $k$ electrons and a basis set describing its orbitals, the Jordan-Wigner transform~\cite{Jordan1928, Tranter2018, Tilly2022} maps its fermionic Hamiltonian into a qubit Hamiltonian $H$ where each qubit corresponds to one of the $n$ spin-orbitals.
The ground state of the qubitized Hamiltonian has real-valued amplitudes and is supported on the $\binom{n}{k}$-dimensional subspace of Hamming-weight $k$ (HW-$k$) computational basis states. Hereinafter, we denote by $\operatorname{HWE}_{k}$ the VQA ansatz in Ref.~\cite{Farias2025} when selecting the amplitudes in fixed-Hamming weight subspaces.
$\operatorname{HWE}_{k}$ also allows to initialize the circuit parameters $\thetabf$ 
to produce arbitrary warm start states. For this problem, we initialize the optimization in $\ket{\psi_\text{warm}}$ (see Methods \ref{ssec:ansatz_opt_features} for details), which has high fidelity with the Hartree state, known to provide a rough approximation to the true ground state of some molecules~\cite{Szabo1996}. 

Fig. \ref{fig:molecules_comparison}(a) shows our results for the \chem{H_{2}O} molecule.
The EGT-CG and the CG optimizers with flat-space gradients achieve chemical accuracy after $10$ epochs, which is $3.7$ times faster than the QNG optimizers of first and second orders, and $4.7$ times faster than the best Adam optimizer tested.
From these results, one might be tempted to conclude that the performance advantage of EGT-CG is mainly attributed to the CG method and not to the properties of the curved manifold.  
However, Fig. \ref{fig:molecules_comparison}(b) shows numerical evidence that this is not the case: we repeat the analysis for the \chem{H_{5}} molecule --- a free radical with a degenerate ground state at the equilibrium bond length --- and observe that EGT-CG achieves chemical accuracy after only $4$ epochs while the CG optimizer reaches a plateau near chemical accuracy without ever crossing it. 
There, EGT-CG is also $3$ and $4.75$ faster than the first and second order QNG methods, respectively, and $20.5$ faster than the Adam optimizer.
In the inset of Fig. \ref{fig:molecules_comparison}(b), we also display the infidelities of the final state achieved by each optimization scheme, relative to the true ground states found by direct diagonalization.
EGT-CG achieves by far the best accuracy, both in terms of ground state fidelity as well as its energy.

The analysis above was repeated for $12$ other molecules.
All displayed similar patterns to those observed in Figs. \ref{fig:molecules_comparison}(a)-(b). 
Fig. \ref{fig:molecules_comparison}(c) shows a comparison of the number of epochs needed by each optimizer to achieve chemical accuracy in the GSO problem for each molecule. 
EGT-CG has the best performance overall, consistently reaching chemical accuracy in fewer epochs than the other optimizers. 
For $12$ of the $14$ molecules, the CG optimizer with flat-space gradients performs comparably to EGT-CG. 
However, for those with a degenerate ground state --- namely, \chem{H_{5}} and \chem{CH_{2}} ---, either the CG or the QNG optimizers fail to achieve chemical accuracy while the EGT-CG still shows the fastest convergence. 
This highlights the practical importance of the global convergence guarantees of EGT-CG. 

In Fig. \ref{fig:molecules_comparison}(d), we plot the infidelity of the warm start, $\ket{\psi_\text{warm}}$, relative to the true ground state found by numerical diagonalization.
The fidelities where calculated as $\sum_{j=1}^s | \langle \psi_{\text{warm}} | \alpha_{j} \rangle |^2$, where $\{ \ket{\alpha_j}\}_{j=1}^s$ are the eigenstates with degeneracy $s$ ($s=2$ and $s=3$ respectively for $\chem{H_{5}}$ and $\chem{CH_{2}}$, and $s=1$ for all others). 
The small infidelities highlight the fact that $\ket{\psi_\text{warm}}$ is indeed adequate for all molecules considered except for $\chem{CH_{2}}$, for which the Hartree state is nearly orthogonal to the ground state subspace (see App. \ref{app:opt_curves_ch2} for a discussion). 
Notice that global convergence is guaranteed by EGT-CG with strong Wolfe conditions regardless of the initialization, though warm starts can only reduce the number of epochs to achieve optimality.  
Moreover, we verify that performing only $2$ epochs of EGT-CG already allows us to move significantly in the direction of the ground state (see the light-gray bars in Fig. \ref{fig:molecules_comparison}(d)). 

\subsubsection{Ground-state optimization for $1$d spin chains}
\label{ssec:xxz}

\begin{figure}[!t]
    \includegraphics[width=\columnwidth]{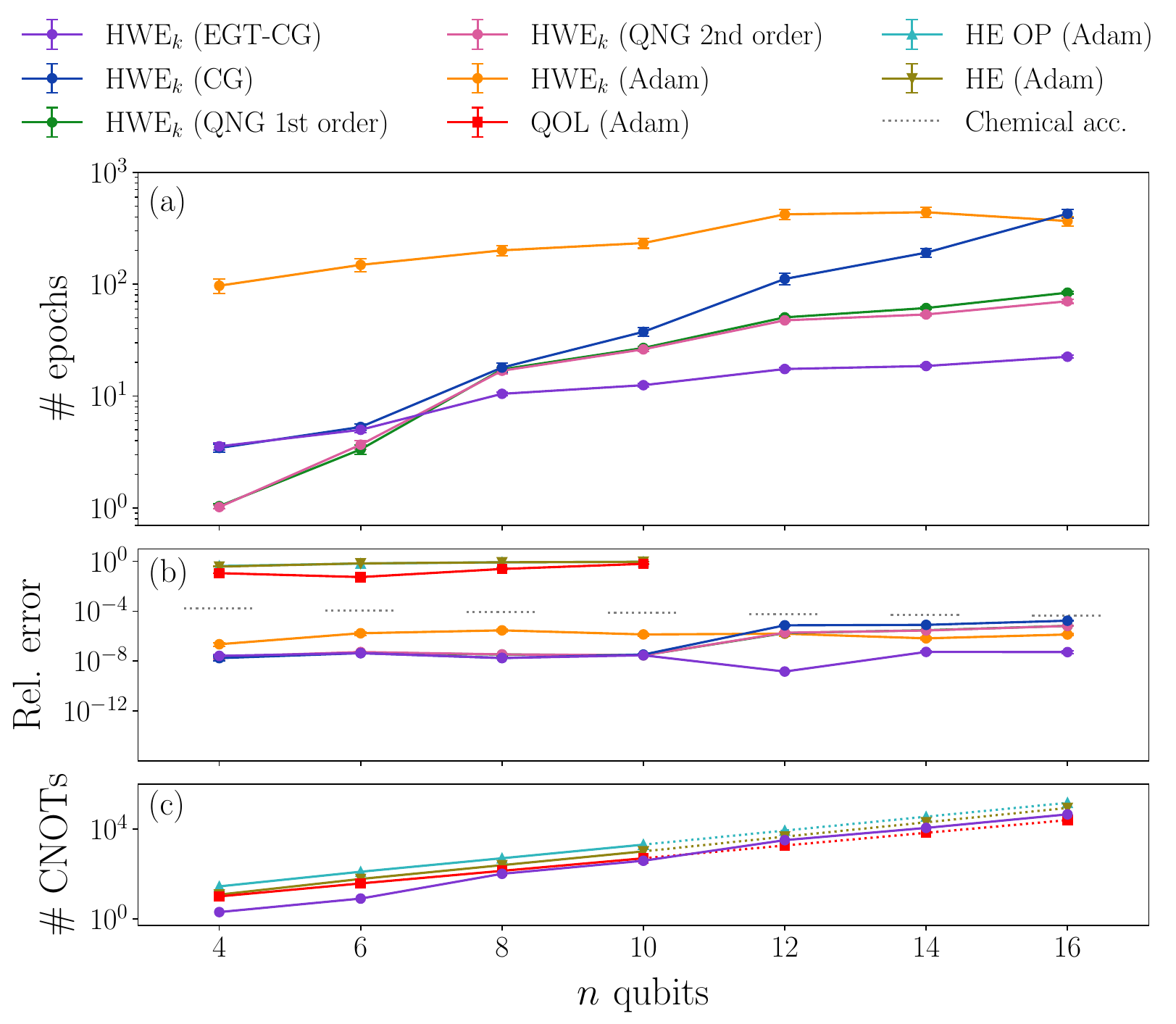}
    \caption{
        \textbf{Ground-state optimization for the XXZ model.}
        Performance of different variational circuit ansatze and optimization schemes as a function of system size~$n$.
        We compare the Hamming-weight encoder ($\operatorname{HWE}_{k}$) with all the considered optimization schemes against the quantum orthogonal layer (QOL), hardware-efficient (HE), and overparametrized HE (HE OP) ansatze optimized with Adam.
        Each point is averaged over $50$ Haar-random initializations, with error bars indicating the $95\%$ confidence interval.
        (\textrm{a}) Average number of epochs required to reach chemical accuracy in energy estimation; missing points indicate failure to reach chemical accuracy.
        (\textrm{b}) Relative energy estimation error at the end of training.  
        Chemical accuracy for each system size is plotted as a dotted black line.
        (\textrm{c}) Number of CNOT gates per ansatz; for $n>10$, CNOT counts are shown for scaling comparison only. 
        For further details on optimization schemes and quantum resources, see the Methods \ref{ssec:details_numerical} and App. \ref{app:quantum_resources}.
        }
    \label{fig:xxz_comparison}
\end{figure}

We then apply the EGT-CG optimizer to the GSO problem of the one-dimensional \xxz Hamiltonian $H_{\operatorname{XXZ}} \coloneqq \sum_{j=0}^{n-1} \, \left( X_{j} \, X_{j+1} + Y_{j} \, Y_{j+1} + \Delta \, Z_{j} \, Z_{j+1} \right)$ with closed-boundary condition. 
Here $\Delta$ is the anisotropy strength, and $\{X_{j}, \, Y_{j}, \, Z_{j}\}$ are the usual single-qubit Pauli operators acting on the $j$th qubit. 
For simplicity, we focus on the case $\Delta = 1/2$ and investigate $n$ even in the interval 
$n \in [4, \, 16]$. 
The ground state is known to have real-valued amplitudes supported on the \emph{half-filling} subspace, \ie, the $\binom{n}{n/2}$-dimensional subspace~\cite{Gomez1996}.
Apart from $\operatorname{HWE}_{k}$, here we also tried different ansatze (see Methods \ref{ssec:details_numerical} for details) with the Adam optimizer, using Haar-random initialization as no warm state is useful. 
Due to cyclic translational and parity symmetries in the system, only $d = \order{n^{-1} \binom{n}{n/2}}$ unique amplitudes in absolute value contribute to the ground state~\cite{Gomez1996}, leading to the possibility of restricting the circuit implementation when using $\operatorname{HWE}_{k}$ as ansatz.
Figure \ref{fig:xxz_comparison}(a) shows the average number of optimization epochs taken by each ansatz and optimization scheme to reach chemical accuracy as a function of the system size $n$.
Several ansatze fail to reach chemical accuracy across the tested system sizes.
This is further quantified in Fig. \ref{fig:xxz_comparison}(b), which shows the relative energy error at the end of training. 
Among all methods considered, EGT-CG combined with the $\operatorname{HWE}_{k}$ encoder consistently yields the best performance, achieving chemical accuracy up to 25 times faster than all competing optimization schemes and exhibiting a milder growth in the number of required epochs as the system size increases. 
These results show that EGT-CG has favorable scalability when applied to symmetry-adapted ansätze. 
In turn, Fig. \ref{fig:xxz_comparison}(c) compares the quantum resources required by the different circuit architectures in terms of CNOT counts.
The quantum orthogonal layer (QOL) is the only circuit ansatz exhibiting a more favorable CNOT scaling than the $\operatorname{HWE}_{k}$ ansatz used by EGT-CG.
However, it fails to converge to chemical accuracy even at small system sizes, while EGT-CG does. 

To end up with, our scheme is not restricted to optimizations on fixed particle-number subspaces.
For the case of the transverse-field Ising model, which does not obey this symmetry, we showed that EGT-CG still outperforms all the other schemes, reaching chemical accuracy up to 60 times faster (see App. \ref{app:tfim} for details).

\subsection{Experimental demonstrations on trapped ions}
\label{ssec:results_experimental}

\begin{figure}[!t]
    \includegraphics[width=\columnwidth]{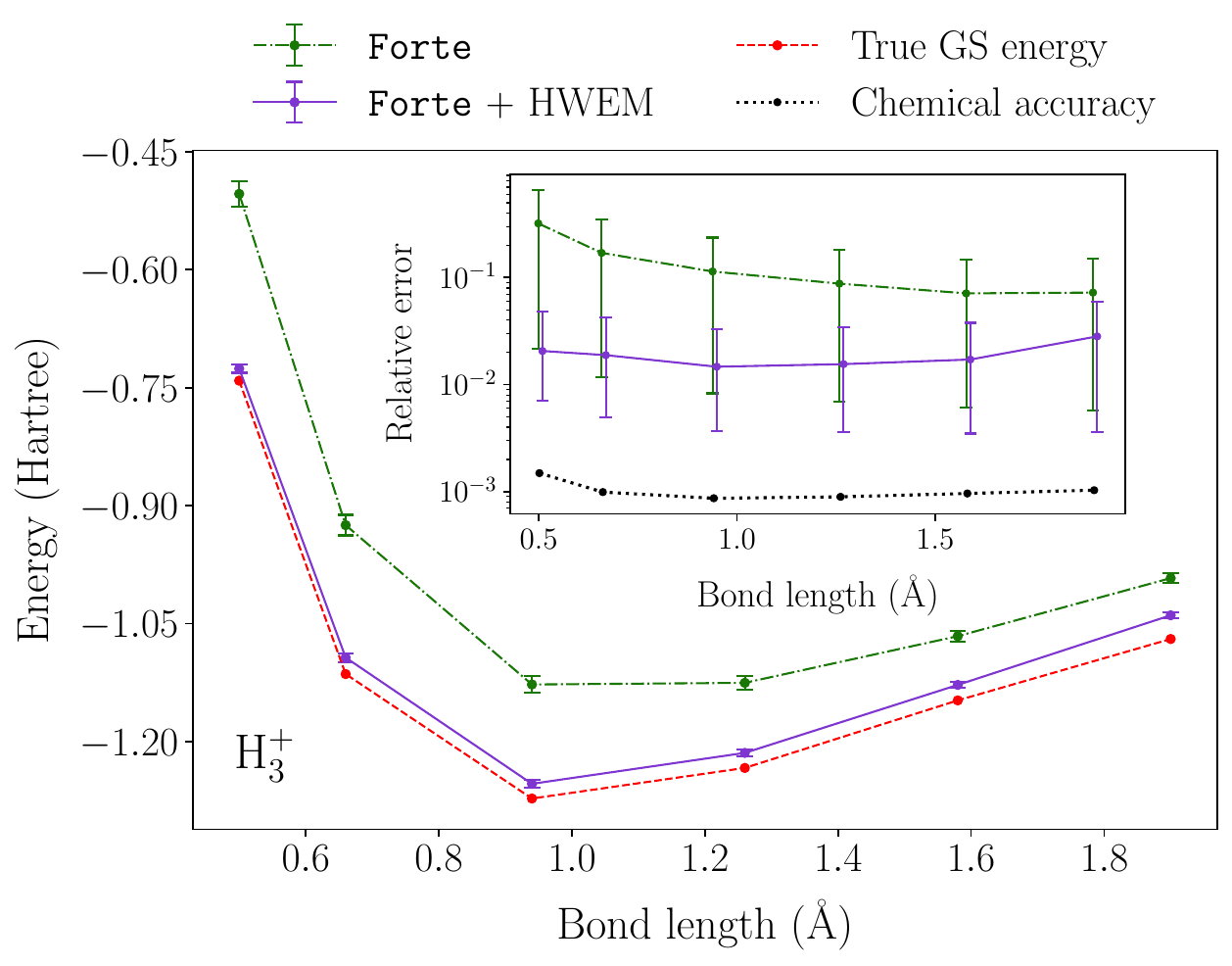}
    \caption{
        \textbf{Ground state energy of $\chem{H_{3}^{+}}$ on IonQ \texttt{Forte}.}
        (\emph{main}) Energy estimates (in Hartrees) for the \chem{H_{3}^{+}} molecule as a function of bond length (in Angstrom).
        The optimal circuit parameters were trained classically.
        The dashed red curve indicates the true ground-state energies.
        The dashed green curve is the result of the deployment on IonQ's \texttt{Forte} quantum processor.
        The solid purple curve shows the hardware results after debiasing and Hamming-weight-preservation error mitigation (HWEM) for $k = 2$.
        Error bars indicate the $95\%$ confidence interval for the mean.
        (\emph{inset}) Energy estimation error relative to the exact ground state energy as a function of the bond length.
        The color scheme is the same, with chemical accuracy as a dotted black line.
    }
    \label{fig:h3+_qhw}
\end{figure}

Lastly, we performed two proof-of-concept experimental demonstrations of the EGT-CG optimizer for molecular ground state optimizations on the \textrm{IonQ} \texttt{Forte} ion-trap quantum processor~\cite{IonQ}. 

The first is a quantum hardware deployment of classically-trained circuits that prepare the ground state of the $\chem{H_{3}^{+}}$ molecule at $6$ different bond lengths (see Methods \ref{ssec:details_experimental} for details). 
The main panel in Fig. \ref{fig:h3+_qhw} shows the estimated ground-state energy as a function of the bond length, while the inset shows the relative error against the true ground energy. 
For each measurement basis, we collected $10^4$ samples. 
The results using IonQ's default \emph{debiasing} error mitigation technique~\cite{debiasing} exhibit $10$-$30\%$ relative error in the estimated energies, which we attribute to circuit and readout noise in the device. 
We then perform, on top of debiasing, a Hamming-weight-based error mitigation (HWEM) by post-selecting the measurements of Pauli-$Z$ observables that have fixed Hamming weight $k = 2$. 
This reduces the relative errors by roughly an order of magnitude to $2$-$4\%$ with approximately $13\%$ of the measured bitstrings discarded on average, 
which we attribute to the fact that Pauli-$Z$ observables contribute the most to the \chem{H_{3}^{+}} Hamiltonian. 
However, the demonstration fails to reach chemical accuracy in both cases.

\begin{figure}[!t]
    \includegraphics[width=\columnwidth]{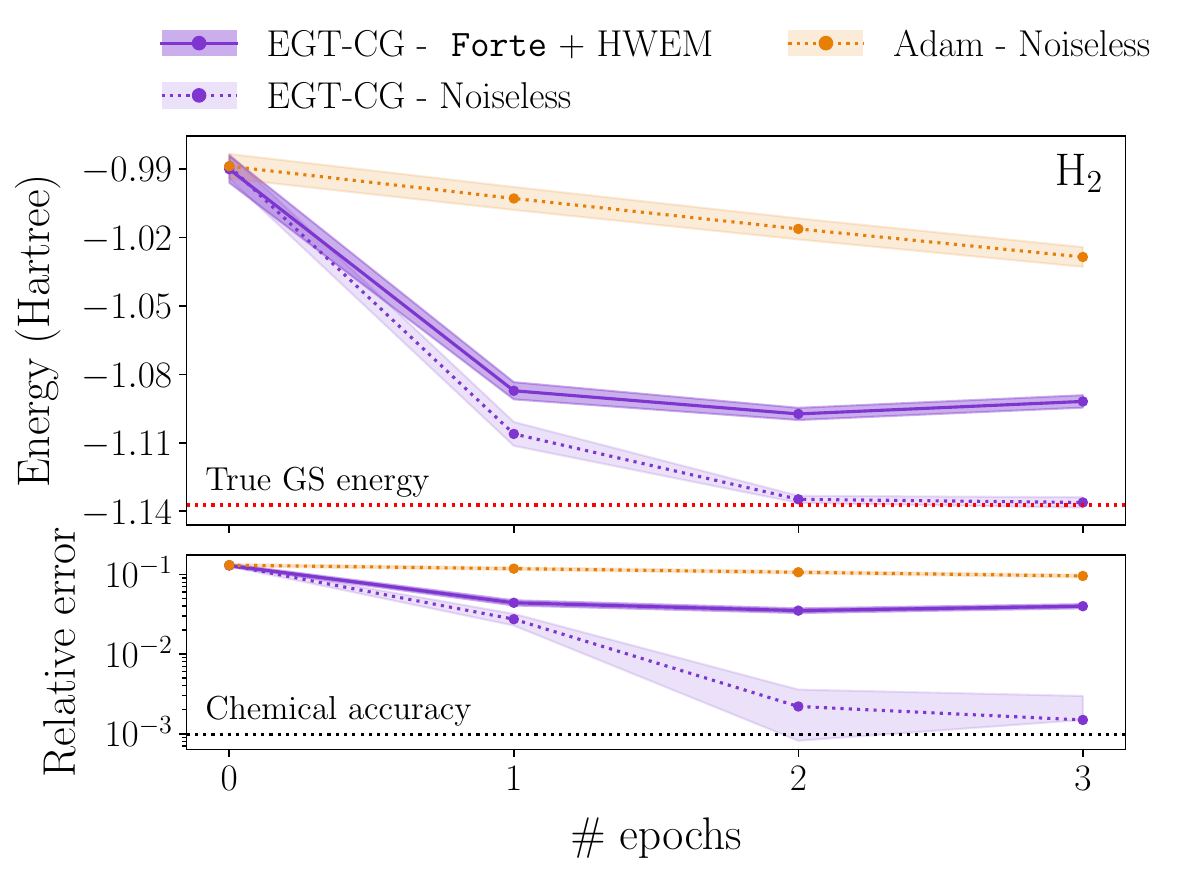}
    \caption{
        \textbf{EGT-CG optimization for the ground state of the $\chem{H_{2}}$ molecule on IonQ \texttt{Forte}.}
        (\emph{top}) Energy (in Hartree) during ground-state optimization for the $\chem{H_{2}}$ molecule, as a function of the number of epochs.
        In dark purple, we present the quantum hardware results for the EGT-CG optimizer with debiasing and HW $k=2$ error mitigation (HWEM). 
        We used $2 \times 10^{4}$ shots for each measurement basis required to estimate the loss function $\loss$ and the gradient $\partialbf_{\thetabf} \loss$ at the first epoch, and $4 \times 10^{4}$ shots at the last two.
        In light purple, we show the noiseless simulation of the same process, only with $2 \times 10^4$ shots in all epochs.
        In orange, we show a noiseless simulation of the iteration using the Adam optimizer with fixed learning rate $\eta = 10^{-2}$, and $2 \times 10^4$ shots in all epochs.
        The shaded regions represent the $95\%$ confidence interval for the mean.
        The true ground state energy is plotted as a dotted red line.
        (\emph{bottom}) Energy estimation error relative to the true ground state energy as a function of epochs.
        Chemical accuracy is plotted as a dotted black line.
        }
    \label{fig:h2_qhw}
\end{figure}

Second, we validate the end-to-end EGT-CG pipeline on the IonQ hardware by executing $3$ steps of optimization for the ground state optimization of the \chem{H_{2}} molecule (see Methods \ref{ssec:details_experimental} for details). 
As before, we mitigate the errors using debiasing and the Hamming-weight-based post-selection. 
The optimization curves are displayed in Fig. \ref{fig:h2_qhw}, with comparisons against two noiseless, finite-precision classical simulations using both EGT-CG and the Adam optimizers. 
We observe that the experimental results drop to and stabilize at a relative error of $\sim5\%$ after $3$ epochs, while the Adam optimizer (even in the absence of noise) only reaches a relative error around $10\%$ in the same number of steps.

\section{Discussion}
\label{sec:discussion}

We introduced \emph{exact geodesic transport with conjugate gradients} (EGT-CG), a framework to optimize circuit parameters along geodesic paths.
This opens up new possibilities for quantum machine learning theory at the nexus with differential geometry and optimal control theory, which may in turn lead to a new generation of practical VQAs.
Importantly, we emphasize that our framework is fully versatile, extending beyond particle-number-preserving Hamiltonians and non-particle-number-preserving models as well as, for instance, non-linear loss functions~\cite{Sciorilli2025}. 
We also highlight the importance of using the strong Wolfe conditions to schedule the learning rate, which guarantees that the optimizer is globally convergent (\ie, the optimization reaches a stationary point regardless of the initialization) even in cases where barren plateaus are present.

An important question is the relevance of our framework for quantum advantages with VQAs.
For that, a necessary condition is that the circuit ansatz is hard to simulate classically, which may be achieved as follows: assume we have a classically intractable state $V \, \ket{1^{k}0^{n-k}}$, where $V$ is an arbitrary, fixed unitary (\ie, independent of the optimization parameters) with polynomial circuit depth, and $k = \order{\log(n)}$.
This state may, for instance, be a known approximate solution to a given problem. 
Then, one can improve on that by variationally optimizing the state $\ket{\psi_{\mathrm{qa}}(\thetabf)} = V \, U_{\thetabf} \, \ket{1^{k}0^{n-k}}$ over $\thetabf$, where $U_{\thetabf}$ is our $\operatorname{HWE}_k$ ansatz with polynomial circuit depth.
Notably, since the metric is invariant under left multiplication by a fixed unitary, $V \, U_{\thetabf}$ has also an analytic metric and is therefore directly amenable to EGT-CG.

Alternatively, our scheme can also be used as an efficient primitive for input-state preparation for quantum algorithms.
For example, one can use EGT-CG for restricted ground-state optimization over a polynomially large subspace of an exponentially large target space and then use the output state as a warm start for quantum imaginary time evolution~\cite{Motta2019, Silva2023, Tosta2024, Silva2024}, eigenvalue thresholding~\cite{Lin2022, Wan2022, Tosta2024}, or quantum phase estimation~\cite{Berry2024}. 
This is particularly appealing for molecular Hamiltonians, where good warm-start states are known --- such as spin-projected matrix-product states~\cite{Li2017spin} --- but whose preparation requires an extensive number of auxiliary qubits~\cite{Berry2024}; 
whereas our method requires none. 
These are some of the questions for future exploration. 

\section{Methods}
\label{sec:methods}
\subsection{Ansatz/optimizer features}
\label{ssec:ansatz_opt_features}

The VQA ansatz of Ref.~\cite{Farias2025} is based on an analytic map $\vecx: [0, \, \pi]^{d-2} \cross [0, \, 2\pi) \subset \mathbb{R}^{d-1}\rightarrow\sphere$ (the hyperspherical coordinates map) from the circuit parameters $\thetabf$ to the state amplitudes with components $x_{1}(\thetabf)\coloneq\cos(\theta_{1})$, $x_{j}(\thetabf)\coloneqq \cos(\theta_{j})\prod_{\ell\in[j-1]} \, \sin(\theta_{\ell})$ for all $1<j<d$, and $x_{d}(\thetabf)\coloneqq \prod_{\ell\in[d-1]} \, \sin(\theta_{\ell})$. 
This simple form gives rise to a number of additional features that we briefly highlight below. 

\vspace{.2cm}

\paragraph{Efficient gradient estimation. }
The gradient at any step $t$ of the optimization is given by $\vecv_t \coloneq - \jacobian(\thetabf_t) \,   \metric^{-1}(\thetabf_t) \, (\partialbf_{\thetabf}\loss)_{\thetabf_t}$ with $\thetabf_t=\thetabf(\vecx_t)$. 
This can be estimated by first estimating $(\partialbf_{\thetabf}\loss)_{\thetabf_t}$ on hardware and then classically computing the action of $\jacobian(\thetabf_t) \,   \metric^{-1}(\thetabf_t)$ on it, which can be done recursively in $\mathcal{O}(d)$ space (see App. \ref{app:jacobian_regularization}). 
The exact map between $\vecx$ and $\thetabf$ allows us to reduce the required quantum resources to estimate $(\partialbf_{\thetabf}\loss)_{\thetabf_t}$ to $3/8$ of those using the standard parameter-shift rule (PSR). 
This follows from the following identity
\begin{align}
(\partial_{\theta_{\ell}} \loss)_{\thetabf_{t}} = g_{\ell\ell}^{1/2} \, \left[2 \, \loss_{\phi_{\ell}}(\thetabf_{t}) - \loss_{\varphi_{\ell}}(\thetabf_{t}) -  \loss_{\psi}(\thetabf_{t})\right] \, ,
\end{align}
where $\loss_\xi(\thetabf_{t})\coloneq \bra{\xi(\thetabf_{t})} H \ket{\xi(\thetabf_{t})}$ and $\ket{\phi_{\ell}(\thetabf_{t})}$, $\ket{\varphi_{\ell}(\thetabf_{t})}$ are states that can be prepared using the same circuit ansatz structure (see App. \ref{app:hwe_gradient} for details). 

\vspace{.2cm}

\paragraph{Efficient warm start.}
For the electronic structure problem, we initialize the optimization such that the corresponding state has high fidelity with the Hartree state $\ket{1^{k} 0^{n-k}}$, which is known to provide a rough approximation to the true ground state of some molecules~\cite{Szabo1996}. 
We propose, as a warm start, the superposition state
\begin{align}
    \ket{\psi_\text{warm}} = \sqrt{\alpha} \, \ket{1^{k} 0^{n-k}} + \sqrt{\frac{1- \alpha}{\binom{n}{k} - 1}} \, \sum_{\ket{b} \in B_{k}^{\prime}} \, \ket{b} \,,
    \label{eq:warm_start}
\end{align}
where $\alpha \in (0, 1)$ is a free parameter ($\alpha = 0.9$ was used in all the simulations), $B_{k}$ is the set of all $n$-qubit computational basis states of HW-$k$, and $B_{k}^{\prime} \coloneq B_{k} \setminus \{\ket{1^{k} 0^{n-k}}\}$. 
This state has fidelity $\alpha$ with the Hartree state, is supported over all the HW-$k$ basis states, and can be prepared deterministically using the $\operatorname{HWE}_{k}$ ansatz, but not with other standard VQA ansatze. 
The motivation for \eqref{eq:warm_start} is two-fold: while the high fidelity naturally reduces the overall number of optimization steps, populating all directions yields a sizable gradient that is beneficial for the gradient-based optimizer~\cite{Mhiri2025}. 

Notice that, in addition to the fixed HW subspace restriction, we could enforce spin symmetries to further reduce the effective subspace in which $\ket{\psi_{\text{warm}}}$ is supported. 
However, we numerically verified that this reduction was not significant for the system sizes analyzed.
Also, spin-projected matrix product states~\cite{Li2017spin} are known to provide good warm starts as well.
However, their preparation requires circuits of challenging sizes, in particular with an extensive number of auxiliary qubits.
For example, recent estimates for industry-relevant molecules give between $4$ and $5$ times more auxiliary qubits required than system qubits themselves \cite[Table 1]{Berry2024}.
In contrast, $\operatorname{HWE}_{k}$ allows us to prepare $\ket{\psi_{\text{warm}}}$ with circuits of depth $\binom{n}{k}$ but without extra qubits. 

\vspace{.2cm}

\paragraph{Avoidance of metric singularities.} 
A common issue of standard QNG methods is the appearance of metric singularities during the optimization, namely a parameter vector $\thetabf$ for which $\det(\metric(\thetabf)) \approx 0$. 
In this case, the metric tensor becomes ill-conditioned and $\metric^{-1}(\thetabf)$ appearing in the natural gradient leads to numerical instability. 
The standard way to mitigate this via standard pre-conditioning techniques, \eg Tikhonov regularization~\cite{Hoerl1970}. 
Here, we avoid this problem using the analytic properties of the ansatz: 
if $\sin(\theta_{j}) \approx \theta_{j} \lesssim \tau$, where $\tau$ is a constant threshold, then we set $\theta_{j} \rightarrow \pi / 2$.
Due to the product structure of the metric components and the diagonal structure of the metric, this suffices to single out the parameter distorting the natural gradient.
We empirically verified that $\tau = 10^{-3}$ yielded stable training for all numerical demonstrations presented in Sec. \ref{sec:results}.

\subsection{Numerical details}
\label{ssec:details_numerical}

The molecular Hamiltonians were taken from \texttt{PennyLane}'s Quantum Dataset~\cite{Pennylane2022, Utkarsh2023}  in the STO-3G basis set and at the equilibrium bond length. All simulations were run using \texttt{Qiboml}'s~\cite{Robbiati2025, QibomlZenodo} integration of the software packages \texttt{Qibo}~\cite{Qibo2021, QiboZenodo} and \texttt{PyTorch}~\cite{Pytorch2019}.
For all the numeric experiments, we worked at infinite precision, and gradients were calculated using \texttt{PyTorch}'s backpropagation method.
When applicable, we used \texttt{Qibo}'s backend that specializes in simulations of Hamming-weight-preserving circuits. 
We used Ref.~\cite{Nogueira2014} for the Bayesian optimization.

For all simulations using $\operatorname{HWE}_{k}$ as circuit ansatz, we compared different optimization schemes consisting in the combination of optimizers and learning rate schedulers. 
All combinations were tested, and we reported the results for the scheme leading the best results (in terms of faster convergence), which are: ($i$) EGT-CG; 
($ii$) a CG optimizer with standard gradient descent~\cite{dai2001} and learning rate schedules based on the strong Wolfe conditions~\cite{sato2022};
($iii$) the original (first-order) QNG~\cite{Stokes2020} as well as its second-order correction~\cite{Halla2025secondorder}, both with learning rates chosen via per-epoch Bayesian optimization~\cite{Movckus1975, Wang2023};
and ($iv$) the \emph{Adam} optimizer~\cite{Kingma2017} with constant learning rates $\eta \in \{0.001, \, 0.005, \, 0.01, \, 0.05, \, 0.1, \, 0.5\}$. 
The performance of each scheme was assessed using the relative error of ground-state energy estimation as a function of the number of iterations (or \emph{epochs}). 
For the \xxz and Ising model simulations, we also tried different ansatze, namely:
($i$) another HW-preserving ansatz composed of \emph{quantum orthogonal layers} (\qol) of nearest- and next-nearest-neighbor connectivity \cite[Fig. $2$]{Robbiati2024}. 
The number of free parameters is $M = \binom{n}{k}-1$;
($ii$) the brickwork \emph{hardware-efficient} (\he) ansatz~\cite{Kandala2017} that parametrizes the full Hilbert space with $M = 2^{n} - 1$ parameters;
($iii$) an overparametrized \he{} ansatz, referred to as \heop, with $M = 2 \, (2^{n} - 1)$ free parameters. 
For these alternative ansatze, only Adam with the aforementioned learning rates was used, as the other optimizers cannot be efficiently implemented.

In all cases, we set a limit of $10^{3}$ epochs, and the optimization was either halted $15$ epochs after reaching chemical accuracy or if an early stop was triggered by either $20$ consecutive epochs of loss function decrease (in magnitude) smaller than $10^{-4}$, or $10$ consecutive epochs of loss function increase.

\subsection{Experimental details}
\label{ssec:details_experimental}

To estimate $\loss(\thetabf^{\star})$ for $\chem{H_{3}^{+}}$ using the quantum hardware, we performed a usual grouping of mutually commuting Pauli observables, yielding a set of $7$ measurement bases for each bond length.
The bond lengths considered were (in \AA): $\{0.5, \, 0.66, \, 0.94, \, 1.26, \, 1.58, \, 1.9\}$.
This amounts to $42$ circuits executed. For each one, we collected $10^4$ samples.
The $\chem{H_{3}^{+}}$ electronic structure is represented by $n=6$ spin orbitals and $k=2$ electrons, with its ground state supported in the $15$-dimensional Hilbert subspace spanned by 6-qubit computational basis states of fixed HW $k = 2$. 
In all considered bond lengths, the classically-trained ground states are supported in the $3$-dimensional subspace spanned by $\left\{\ket{110000}, \, \ket{001100}, \, \ket{000011} \right\}$, which in hyperspherical coordinates can be covered by only $2$ parameters using the generalized RBS (gRBS) gates defined in Ref.~\cite{Farias2025}.
To deploy the circuits on IonQ \texttt{Forte}, we transpiled the gRBS gates into IonQ's so-called \emph{\quotes{quantum information science}} gateset~\cite{qis}. 
This process resulted in each circuit for the state preparation being transpiled into single- and two-qubit gate counts of $86$ and $36$, respectively.

For $\chem{H_2}$, as shown in Fig. \ref{fig:molecules_comparison}, the Hamiltonian can be expressed in $n = 4$ qubits representing spin-orbitals occupied by $k = 2$ electrons.
At each epoch, we estimated the loss function $\loss$ and its gradient $\partialbf_{\thetabf}\loss$ using the loss-function-based measurement scheme derived in App. \ref{app:hwe_gradient}, totaling $11$ loss function estimations and $55$ measurement sets per optimization step.
Due to budget constraints, we tested the strong Wolfe conditions using noisy classical simulations based on IonQ \texttt{Forte}'s reported noise parameters.
In the first epoch, each measured observable was estimated using $2 \times 10^{4}$ samples, while $4 \times 10^{4}$ samples per observable were collected in the remaining steps.

\acknowledgments
\noindent RL was affiliated with the Technology Innovation Institute for the largest part of the development of this work.
We thank Ilia Luchnikov for insightful discussions.
\bibliography{references}

 \newcommand{\noop}[1]{}
\begin{thebibliography}{84}%
\makeatletter
\providecommand \@ifxundefined [1]{%
 \@ifx{#1\undefined}
}%
\providecommand \@ifnum [1]{%
 \ifnum #1\expandafter \@firstoftwo
 \else \expandafter \@secondoftwo
 \fi
}%
\providecommand \@ifx [1]{%
 \ifx #1\expandafter \@firstoftwo
 \else \expandafter \@secondoftwo
 \fi
}%
\providecommand \natexlab [1]{#1}%
\providecommand \enquote  [1]{``#1''}%
\providecommand \bibnamefont  [1]{#1}%
\providecommand \bibfnamefont [1]{#1}%
\providecommand \citenamefont [1]{#1}%
\providecommand \href@noop [0]{\@secondoftwo}%
\providecommand \href [0]{\begingroup \@sanitize@url \@href}%
\providecommand \@href[1]{\@@startlink{#1}\@@href}%
\providecommand \@@href[1]{\endgroup#1\@@endlink}%
\providecommand \@sanitize@url [0]{\catcode `\\12\catcode `\$12\catcode `\&12\catcode `\#12\catcode `\^12\catcode `\_12\catcode `\%12\relax}%
\providecommand \@@startlink[1]{}%
\providecommand \@@endlink[0]{}%
\providecommand \url  [0]{\begingroup\@sanitize@url \@url }%
\providecommand \@url [1]{\endgroup\@href {#1}{\urlprefix }}%
\providecommand \urlprefix  [0]{URL }%
\providecommand \Eprint [0]{\href }%
\providecommand \doibase [0]{https://doi.org/}%
\providecommand \selectlanguage [0]{\@gobble}%
\providecommand \bibinfo  [0]{\@secondoftwo}%
\providecommand \bibfield  [0]{\@secondoftwo}%
\providecommand \translation [1]{[#1]}%
\providecommand \BibitemOpen [0]{}%
\providecommand \bibitemStop [0]{}%
\providecommand \bibitemNoStop [0]{.\EOS\space}%
\providecommand \EOS [0]{\spacefactor3000\relax}%
\providecommand \BibitemShut  [1]{\csname bibitem#1\endcsname}%
\let\auto@bib@innerbib\@empty
\bibitem [{\citenamefont {Cerezo}\ \emph {et~al.}(2021)\citenamefont {Cerezo}, \citenamefont {Arrasmith}, \citenamefont {Babbush}, \citenamefont {Benjamin}, \citenamefont {Endo}, \citenamefont {Fujii}, \citenamefont {McClean}, \citenamefont {Mitarai}, \citenamefont {Yuan}, \citenamefont {Cincio},\ and\ \citenamefont {Coles}}]{Cerezo2021}%
  \BibitemOpen
  \bibfield  {author} {\bibinfo {author} {\bibfnamefont {M.}~\bibnamefont {Cerezo}}, \bibinfo {author} {\bibfnamefont {A.}~\bibnamefont {Arrasmith}}, \bibinfo {author} {\bibfnamefont {R.}~\bibnamefont {Babbush}}, \bibinfo {author} {\bibfnamefont {S.~C.}\ \bibnamefont {Benjamin}}, \bibinfo {author} {\bibfnamefont {S.}~\bibnamefont {Endo}}, \bibinfo {author} {\bibfnamefont {K.}~\bibnamefont {Fujii}}, \bibinfo {author} {\bibfnamefont {J.~R.}\ \bibnamefont {McClean}}, \bibinfo {author} {\bibfnamefont {K.}~\bibnamefont {Mitarai}}, \bibinfo {author} {\bibfnamefont {X.}~\bibnamefont {Yuan}}, \bibinfo {author} {\bibfnamefont {L.}~\bibnamefont {Cincio}},\ and\ \bibinfo {author} {\bibfnamefont {P.~J.}\ \bibnamefont {Coles}},\ }\bibfield  {title} {\bibinfo {title} {\emph{Variational quantum algorithms}},\ }\href {https://doi.org/10.1038/s42254-021-00348-9} {\bibfield  {journal} {\bibinfo  {journal} {Nature Reviews Physics}\ }\textbf {\bibinfo {volume} {3}},\ \bibinfo {pages} {625–644} (\bibinfo {year}
  {2021})}\BibitemShut {NoStop}%
\bibitem [{\citenamefont {Bharti}\ \emph {et~al.}(2022)\citenamefont {Bharti}, \citenamefont {Cervera-Lierta}, \citenamefont {Kyaw}, \citenamefont {Haug}, \citenamefont {Alperin-Lea}, \citenamefont {Anand}, \citenamefont {Degroote}, \citenamefont {Heimonen}, \citenamefont {Kottmann}, \citenamefont {Menke} \emph {et~al.}}]{Bharti2022}%
  \BibitemOpen
  \bibfield  {author} {\bibinfo {author} {\bibfnamefont {K.}~\bibnamefont {Bharti}}, \bibinfo {author} {\bibfnamefont {A.}~\bibnamefont {Cervera-Lierta}}, \bibinfo {author} {\bibfnamefont {T.~H.}\ \bibnamefont {Kyaw}}, \bibinfo {author} {\bibfnamefont {T.}~\bibnamefont {Haug}}, \bibinfo {author} {\bibfnamefont {S.}~\bibnamefont {Alperin-Lea}}, \bibinfo {author} {\bibfnamefont {A.}~\bibnamefont {Anand}}, \bibinfo {author} {\bibfnamefont {M.}~\bibnamefont {Degroote}}, \bibinfo {author} {\bibfnamefont {H.}~\bibnamefont {Heimonen}}, \bibinfo {author} {\bibfnamefont {J.~S.}\ \bibnamefont {Kottmann}}, \bibinfo {author} {\bibfnamefont {T.}~\bibnamefont {Menke}}, \emph {et~al.},\ }\bibfield  {title} {\bibinfo {title} {\emph{Noisy intermediate-scale quantum algorithms}},\ }\href {https://link.aps.org/doi/10.1103/RevModPhys.94.015004} {\bibfield  {journal} {\bibinfo  {journal} {Rev. Mod. Phys.}\ }\textbf {\bibinfo {volume} {94}},\ \bibinfo {pages} {015004} (\bibinfo {year} {2022})}\BibitemShut {NoStop}%
\bibitem [{\citenamefont {Larocca}\ \emph {et~al.}(2025)\citenamefont {Larocca}, \citenamefont {Thanasilp}, \citenamefont {Wang}, \citenamefont {Sharma}, \citenamefont {Biamonte}, \citenamefont {Coles}, \citenamefont {Cincio}, \citenamefont {McClean}, \citenamefont {Holmes},\ and\ \citenamefont {Cerezo}}]{Larocca2025}%
  \BibitemOpen
  \bibfield  {author} {\bibinfo {author} {\bibfnamefont {M.}~\bibnamefont {Larocca}}, \bibinfo {author} {\bibfnamefont {S.}~\bibnamefont {Thanasilp}}, \bibinfo {author} {\bibfnamefont {S.}~\bibnamefont {Wang}}, \bibinfo {author} {\bibfnamefont {K.}~\bibnamefont {Sharma}}, \bibinfo {author} {\bibfnamefont {J.}~\bibnamefont {Biamonte}}, \bibinfo {author} {\bibfnamefont {P.~J.}\ \bibnamefont {Coles}}, \bibinfo {author} {\bibfnamefont {L.}~\bibnamefont {Cincio}}, \bibinfo {author} {\bibfnamefont {J.~R.}\ \bibnamefont {McClean}}, \bibinfo {author} {\bibfnamefont {Z.}~\bibnamefont {Holmes}},\ and\ \bibinfo {author} {\bibfnamefont {M.}~\bibnamefont {Cerezo}},\ }\bibfield  {title} {\bibinfo {title} {\emph{Barren plateaus in variational quantum computing}},\ }\href {https://doi.org/10.1038/s42254-025-00813-9} {\bibfield  {journal} {\bibinfo  {journal} {Nature Reviews Physics}\ }\textbf {\bibinfo {volume} {7}},\ \bibinfo {pages} {174–189} (\bibinfo {year} {2025})}\BibitemShut {NoStop}%
\bibitem [{\citenamefont {Stokes}\ \emph {et~al.}(2020)\citenamefont {Stokes}, \citenamefont {Izaac}, \citenamefont {Killoran},\ and\ \citenamefont {Carleo}}]{Stokes2020}%
  \BibitemOpen
  \bibfield  {author} {\bibinfo {author} {\bibfnamefont {J.}~\bibnamefont {Stokes}}, \bibinfo {author} {\bibfnamefont {J.}~\bibnamefont {Izaac}}, \bibinfo {author} {\bibfnamefont {N.}~\bibnamefont {Killoran}},\ and\ \bibinfo {author} {\bibfnamefont {G.}~\bibnamefont {Carleo}},\ }\bibfield  {title} {\bibinfo {title} {\emph{Quantum natural gradient}},\ }\href {https://doi.org/10.22331/q-2020-05-25-269} {\bibfield  {journal} {\bibinfo  {journal} {{Quantum}}\ }\textbf {\bibinfo {volume} {4}},\ \bibinfo {pages} {269} (\bibinfo {year} {2020})}\BibitemShut {NoStop}%
\bibitem [{\citenamefont {Koczor}\ and\ \citenamefont {Benjamin}(2022)}]{Koczor2022}%
  \BibitemOpen
  \bibfield  {author} {\bibinfo {author} {\bibfnamefont {B.}~\bibnamefont {Koczor}}\ and\ \bibinfo {author} {\bibfnamefont {S.~C.}\ \bibnamefont {Benjamin}},\ }\bibfield  {title} {\bibinfo {title} {\emph{Quantum natural gradient generalized to noisy and nonunitary circuits}},\ }\href {https://link.aps.org/doi/10.1103/PhysRevA.106.062416} {\bibfield  {journal} {\bibinfo  {journal} {Phys. Rev. A}\ }\textbf {\bibinfo {volume} {106}},\ \bibinfo {pages} {062416} (\bibinfo {year} {2022})}\BibitemShut {NoStop}%
\bibitem [{\citenamefont {Absil}\ \emph {et~al.}(2008)\citenamefont {Absil}, \citenamefont {Mahony},\ and\ \citenamefont {Sepulchre}}]{absil2008}%
  \BibitemOpen
  \bibfield  {author} {\bibinfo {author} {\bibfnamefont {P.-A.}\ \bibnamefont {Absil}}, \bibinfo {author} {\bibfnamefont {R.}~\bibnamefont {Mahony}},\ and\ \bibinfo {author} {\bibfnamefont {R.}~\bibnamefont {Sepulchre}},\ }\href {http://www.jstor.org/stable/j.ctt7smmk} {\emph {\bibinfo {title} {Optimization Algorithms on Matrix Manifolds}}}\ (\bibinfo  {publisher} {Princeton University Press},\ \bibinfo {year} {2008})\BibitemShut {NoStop}%
\bibitem [{\citenamefont {Sato}(2021)}]{sato2021}%
  \BibitemOpen
  \bibfield  {author} {\bibinfo {author} {\bibfnamefont {H.}~\bibnamefont {Sato}},\ }\href {https://link.springer.com/book/10.1007/978-3-030-62391-3} {\emph {\bibinfo {title} {Riemannian optimization and its applications}}},\ Vol.\ \bibinfo {volume} {670}\ (\bibinfo  {publisher} {Springer},\ \bibinfo {year} {2021})\BibitemShut {NoStop}%
\bibitem [{\citenamefont {Halla}(2025{\natexlab{a}})}]{Halla2025secondorder}%
  \BibitemOpen
  \bibfield  {author} {\bibinfo {author} {\bibfnamefont {M.}~\bibnamefont {Halla}},\ }\bibfield  {title} {\bibinfo {title} {\emph{Quantum natural gradient with geodesic corrections for small shallow quantum circuits}},\ }\href {https://doi.org/10.1088/1402-4896/add05e} {\bibfield  {journal} {\bibinfo  {journal} {Physica Scripta}\ }\textbf {\bibinfo {volume} {100}},\ \bibinfo {pages} {055121} (\bibinfo {year} {2025}{\natexlab{a}})}\BibitemShut {NoStop}%
\bibitem [{\citenamefont {Halla}(2025{\natexlab{b}})}]{Halla2025conjugate}%
  \BibitemOpen
  \bibfield  {author} {\bibinfo {author} {\bibfnamefont {M.}~\bibnamefont {Halla}},\ }\href {https://arxiv.org/abs/2501.05847} {\bibinfo {title} {\emph{Modified conjugate quantum natural gradient}}} (\bibinfo {year} {2025}{\natexlab{b}}),\ \Eprint {https://arxiv.org/abs/2501.05847} {arXiv:2501.05847 [quant-ph]} \BibitemShut {NoStop}%
\bibitem [{\citenamefont {Haug}\ and\ \citenamefont {Kim}(2022)}]{Haug2022}%
  \BibitemOpen
  \bibfield  {author} {\bibinfo {author} {\bibfnamefont {T.}~\bibnamefont {Haug}}\ and\ \bibinfo {author} {\bibfnamefont {M.~S.}\ \bibnamefont {Kim}},\ }\bibfield  {title} {\bibinfo {title} {\emph{Natural parametrized quantum circuit}},\ }\href {http://dx.doi.org/10.1103/PhysRevA.106.052611} {\bibfield  {journal} {\bibinfo  {journal} {Physical Review A}\ }\textbf {\bibinfo {volume} {106}} (\bibinfo {year} {2022})}\BibitemShut {NoStop}%
\bibitem [{\citenamefont {Yao}\ \emph {et~al.}(2022)\citenamefont {Yao}, \citenamefont {Cussenot}, \citenamefont {Wolf},\ and\ \citenamefont {Miatto}}]{Yao2022}%
  \BibitemOpen
  \bibfield  {author} {\bibinfo {author} {\bibfnamefont {Y.}~\bibnamefont {Yao}}, \bibinfo {author} {\bibfnamefont {P.}~\bibnamefont {Cussenot}}, \bibinfo {author} {\bibfnamefont {R.~A.}\ \bibnamefont {Wolf}},\ and\ \bibinfo {author} {\bibfnamefont {F.}~\bibnamefont {Miatto}},\ }\bibfield  {title} {\bibinfo {title} {\emph{Complex natural gradient optimization for optical quantum circuit design}},\ }\href {http://dx.doi.org/10.1103/PhysRevA.105.052402} {\bibfield  {journal} {\bibinfo  {journal} {Physical Review A}\ }\textbf {\bibinfo {volume} {105}} (\bibinfo {year} {2022})}\BibitemShut {NoStop}%
\bibitem [{\citenamefont {Meyer}\ \emph {et~al.}(2023)\citenamefont {Meyer}, \citenamefont {Scherer}, \citenamefont {Plinge}, \citenamefont {Mutschler},\ and\ \citenamefont {Hartmann}}]{Meyer2023}%
  \BibitemOpen
  \bibfield  {author} {\bibinfo {author} {\bibfnamefont {N.}~\bibnamefont {Meyer}}, \bibinfo {author} {\bibfnamefont {D.~D.}\ \bibnamefont {Scherer}}, \bibinfo {author} {\bibfnamefont {A.}~\bibnamefont {Plinge}}, \bibinfo {author} {\bibfnamefont {C.}~\bibnamefont {Mutschler}},\ and\ \bibinfo {author} {\bibfnamefont {M.~J.}\ \bibnamefont {Hartmann}},\ }\bibfield  {title} {\bibinfo {title} {\emph{{Q}uantum natural policy gradients: {T}owards sample-efficient reinforcement learning}},\ }in\ \href {https://doi.org/10.1109/qce57702.2023.10181} {\emph {\bibinfo {booktitle} {2023 IEEE International Conference on Quantum Computing and Engineering (QCE)}}}\ (\bibinfo  {publisher} {IEEE},\ \bibinfo {year} {2023})\ p.\ \bibinfo {pages} {36–41}\BibitemShut {NoStop}%
\bibitem [{\citenamefont {Qi}\ and\ \citenamefont {Hsieh}(2024)}]{Qi2024}%
  \BibitemOpen
  \bibfield  {author} {\bibinfo {author} {\bibfnamefont {J.}~\bibnamefont {Qi}}\ and\ \bibinfo {author} {\bibfnamefont {M.-H.}\ \bibnamefont {Hsieh}},\ }\bibfield  {title} {\bibinfo {title} {\emph{Federated quantum natural gradient descent for quantum federated learning}},\ }in\ \href {https://doi.org/10.1016/B978-0-44-319037-7.00028-4} {\emph {\bibinfo {booktitle} {Federated Learning}}}\ (\bibinfo  {publisher} {Academic Press},\ \bibinfo {year} {2024})\ pp.\ \bibinfo {pages} {329--341}\BibitemShut {NoStop}%
\bibitem [{\citenamefont {Fitzek}\ \emph {et~al.}(2024)\citenamefont {Fitzek}, \citenamefont {Jonsson}, \citenamefont {Dobrautz},\ and\ \citenamefont {Sch{\"{a}}fer}}]{Fitzek2024}%
  \BibitemOpen
  \bibfield  {author} {\bibinfo {author} {\bibfnamefont {D.}~\bibnamefont {Fitzek}}, \bibinfo {author} {\bibfnamefont {R.~S.}\ \bibnamefont {Jonsson}}, \bibinfo {author} {\bibfnamefont {W.}~\bibnamefont {Dobrautz}},\ and\ \bibinfo {author} {\bibfnamefont {C.}~\bibnamefont {Sch{\"{a}}fer}},\ }\bibfield  {title} {\bibinfo {title} {\emph{Optimizing variational quantum algorithms with q{B}ang: {E}fficiently interweaving metric and momentum to navigate flat energy landscapes}},\ }\href {https://doi.org/10.22331/q-2024-04-09-1313} {\bibfield  {journal} {\bibinfo  {journal} {Quantum}\ }\textbf {\bibinfo {volume} {8}},\ \bibinfo {pages} {1313} (\bibinfo {year} {2024})}\BibitemShut {NoStop}%
\bibitem [{\citenamefont {Dell’Anna}\ \emph {et~al.}(2025)\citenamefont {Dell’Anna}, \citenamefont {Gómez-Lurbe}, \citenamefont {Pérez},\ and\ \citenamefont {Ercolessi}}]{DellAnna2025}%
  \BibitemOpen
  \bibfield  {author} {\bibinfo {author} {\bibfnamefont {F.}~\bibnamefont {Dell’Anna}}, \bibinfo {author} {\bibfnamefont {R.}~\bibnamefont {Gómez-Lurbe}}, \bibinfo {author} {\bibfnamefont {A.}~\bibnamefont {Pérez}},\ and\ \bibinfo {author} {\bibfnamefont {E.}~\bibnamefont {Ercolessi}},\ }\bibfield  {title} {\bibinfo {title} {\emph{Quantum natural gradient optimizer on noisy platforms: {Q}uantum approximate optimization algorithm as a case study}},\ }\bibfield  {journal} {\bibinfo  {journal} {Physical Review A}\ }\textbf {\bibinfo {volume} {112}},\ \href {https://doi.org/10.1103/62wx-tvk5} {10.1103/62wx-tvk5} (\bibinfo {year} {2025})\BibitemShut {NoStop}%
\bibitem [{\citenamefont {G\'{o}mez-Lurbe}(2025)}]{GomezLurbe2025}%
  \BibitemOpen
  \bibfield  {author} {\bibinfo {author} {\bibfnamefont {R.}~\bibnamefont {G\'{o}mez-Lurbe}},\ }\href {https://arxiv.org/abs/2505.09818} {\bibinfo {title} {\emph{Efficient protocol to estimate the quantum {F}isher information matrix for commuting-block circuits}}} (\bibinfo {year} {2025}),\ \Eprint {https://arxiv.org/abs/2505.09818} {arXiv:2505.09818 [quant-ph]} \BibitemShut {NoStop}%
\bibitem [{\citenamefont {Wierichs}\ \emph {et~al.}(2020)\citenamefont {Wierichs}, \citenamefont {Gogolin},\ and\ \citenamefont {Kastoryano}}]{Wierichs2020}%
  \BibitemOpen
  \bibfield  {author} {\bibinfo {author} {\bibfnamefont {D.}~\bibnamefont {Wierichs}}, \bibinfo {author} {\bibfnamefont {C.}~\bibnamefont {Gogolin}},\ and\ \bibinfo {author} {\bibfnamefont {M.}~\bibnamefont {Kastoryano}},\ }\bibfield  {title} {\bibinfo {title} {\emph{Avoiding local minima in variational quantum eigensolvers with the natural gradient optimizer}},\ }\href {http://dx.doi.org/10.1103/PhysRevResearch.2.043246} {\bibfield  {journal} {\bibinfo  {journal} {Physical Review Research}\ }\textbf {\bibinfo {volume} {2}} (\bibinfo {year} {2020})}\BibitemShut {NoStop}%
\bibitem [{\citenamefont {van Straaten}\ and\ \citenamefont {Koczor}(2021)}]{Straaten2021}%
  \BibitemOpen
  \bibfield  {author} {\bibinfo {author} {\bibfnamefont {B.}~\bibnamefont {van Straaten}}\ and\ \bibinfo {author} {\bibfnamefont {B.}~\bibnamefont {Koczor}},\ }\bibfield  {title} {\bibinfo {title} {\emph{Measurement cost of metric-aware variational quantum algorithms}},\ }\href {http://dx.doi.org/10.1103/PRXQuantum.2.030324} {\bibfield  {journal} {\bibinfo  {journal} {PRX Quantum}\ }\textbf {\bibinfo {volume} {2}} (\bibinfo {year} {2021})}\BibitemShut {NoStop}%
\bibitem [{\citenamefont {Gacon}\ \emph {et~al.}(2021)\citenamefont {Gacon}, \citenamefont {Zoufal}, \citenamefont {Carleo},\ and\ \citenamefont {Woerner}}]{Gacon2021}%
  \BibitemOpen
  \bibfield  {author} {\bibinfo {author} {\bibfnamefont {J.}~\bibnamefont {Gacon}}, \bibinfo {author} {\bibfnamefont {C.}~\bibnamefont {Zoufal}}, \bibinfo {author} {\bibfnamefont {G.}~\bibnamefont {Carleo}},\ and\ \bibinfo {author} {\bibfnamefont {S.}~\bibnamefont {Woerner}},\ }\bibfield  {title} {\bibinfo {title} {\emph{Simultaneous perturbation stochastic approximation of the quantum {F}isher information}},\ }\href {https://doi.org/10.22331/q-2021-10-20-567} {\bibfield  {journal} {\bibinfo  {journal} {{Quantum}}\ }\textbf {\bibinfo {volume} {5}},\ \bibinfo {pages} {567} (\bibinfo {year} {2021})}\BibitemShut {NoStop}%
\bibitem [{\citenamefont {Gacon}\ \emph {et~al.}(2023)\citenamefont {Gacon}, \citenamefont {Zoufal}, \citenamefont {Carleo},\ and\ \citenamefont {Woerner}}]{Gacon2023}%
  \BibitemOpen
  \bibfield  {author} {\bibinfo {author} {\bibfnamefont {J.}~\bibnamefont {Gacon}}, \bibinfo {author} {\bibfnamefont {C.}~\bibnamefont {Zoufal}}, \bibinfo {author} {\bibfnamefont {G.}~\bibnamefont {Carleo}},\ and\ \bibinfo {author} {\bibfnamefont {S.}~\bibnamefont {Woerner}},\ }\bibfield  {title} {\bibinfo {title} {\emph{Stochastic approximation of variational quantum imaginary time evolution}},\ }in\ \href {http://dx.doi.org/10.1109/QCE57702.2023.10367741} {\emph {\bibinfo {booktitle} {2023 IEEE International Conference on Quantum Computing and Engineering (QCE)}}}\ (\bibinfo  {publisher} {IEEE},\ \bibinfo {year} {2023})\ p.\ \bibinfo {pages} {129–139}\BibitemShut {NoStop}%
\bibitem [{\citenamefont {Halla}(2025{\natexlab{c}})}]{Halla2025fisher}%
  \BibitemOpen
  \bibfield  {author} {\bibinfo {author} {\bibfnamefont {M.}~\bibnamefont {Halla}},\ }\href {https://arxiv.org/abs/2502.17231} {\bibinfo {title} {\emph{Estimation of quantum {F}isher information via {S}tein's identity in variational quantum algorithms}}} (\bibinfo {year} {2025}{\natexlab{c}}),\ \Eprint {https://arxiv.org/abs/2502.17231} {arXiv:2502.17231 [quant-ph]} \BibitemShut {NoStop}%
\bibitem [{\citenamefont {Schuld}\ \emph {et~al.}(2019)\citenamefont {Schuld}, \citenamefont {Bergholm}, \citenamefont {Gogolin}, \citenamefont {Izaac},\ and\ \citenamefont {Killoran}}]{Schuld2019}%
  \BibitemOpen
  \bibfield  {author} {\bibinfo {author} {\bibfnamefont {M.}~\bibnamefont {Schuld}}, \bibinfo {author} {\bibfnamefont {V.}~\bibnamefont {Bergholm}}, \bibinfo {author} {\bibfnamefont {C.}~\bibnamefont {Gogolin}}, \bibinfo {author} {\bibfnamefont {J.}~\bibnamefont {Izaac}},\ and\ \bibinfo {author} {\bibfnamefont {N.}~\bibnamefont {Killoran}},\ }\bibfield  {title} {\bibinfo {title} {\emph{Evaluating analytic gradients on quantum hardware}},\ }\href {https://doi.org/10.1103/PhysRevA.99.032331} {\bibfield  {journal} {\bibinfo  {journal} {Phys. Rev. A}\ }\textbf {\bibinfo {volume} {99}},\ \bibinfo {pages} {032331} (\bibinfo {year} {2019})}\BibitemShut {NoStop}%
\bibitem [{\citenamefont {Anselmetti}\ \emph {et~al.}(2021)\citenamefont {Anselmetti}, \citenamefont {Wierichs}, \citenamefont {Gogolin},\ and\ \citenamefont {Parrish}}]{Anselmetti2021}%
  \BibitemOpen
  \bibfield  {author} {\bibinfo {author} {\bibfnamefont {G.-L.~R.}\ \bibnamefont {Anselmetti}}, \bibinfo {author} {\bibfnamefont {D.}~\bibnamefont {Wierichs}}, \bibinfo {author} {\bibfnamefont {C.}~\bibnamefont {Gogolin}},\ and\ \bibinfo {author} {\bibfnamefont {R.~M.}\ \bibnamefont {Parrish}},\ }\bibfield  {title} {\bibinfo {title} {\emph{Local, expressive, quantum-number-preserving {VQE} ans\"{a}tze for fermionic systems}},\ }\href {https://doi.org/10.1088/1367-2630/ac2cb3} {\bibfield  {journal} {\bibinfo  {journal} {New Journal of Physics}\ }\textbf {\bibinfo {volume} {23}},\ \bibinfo {pages} {113010} (\bibinfo {year} {2021})}\BibitemShut {NoStop}%
\bibitem [{\citenamefont {Kottmann}\ \emph {et~al.}(2021)\citenamefont {Kottmann}, \citenamefont {Anand},\ and\ \citenamefont {Aspuru-Guzik}}]{Kottmann2021}%
  \BibitemOpen
  \bibfield  {author} {\bibinfo {author} {\bibfnamefont {J.~S.}\ \bibnamefont {Kottmann}}, \bibinfo {author} {\bibfnamefont {A.}~\bibnamefont {Anand}},\ and\ \bibinfo {author} {\bibfnamefont {A.}~\bibnamefont {Aspuru-Guzik}},\ }\bibfield  {title} {\bibinfo {title} {\emph{A feasible approach for automatically differentiable unitary coupled-cluster on quantum computers}},\ }\href {https://doi.org/10.1039/D0SC06627C} {\bibfield  {journal} {\bibinfo  {journal} {Chem. Sci.}\ }\textbf {\bibinfo {volume} {12}},\ \bibinfo {pages} {3497} (\bibinfo {year} {2021})}\BibitemShut {NoStop}%
\bibitem [{\citenamefont {{IonQ Inc.}}(2025{\natexlab{a}})}]{IonQ}%
  \BibitemOpen
  \bibfield  {author} {\bibinfo {author} {\bibnamefont {{IonQ Inc.}}},\ }\href {https://ionq.com/quantum-systems/forte} {\bibinfo {title} {\emph{{IonQ} {F}orte}}} (\bibinfo {year} {2025}{\natexlab{a}})\BibitemShut {NoStop}%
\bibitem [{\citenamefont {Liu}\ \emph {et~al.}(2020)\citenamefont {Liu}, \citenamefont {Yuan}, \citenamefont {Lu},\ and\ \citenamefont {Wang}}]{Liu2020quantum}%
  \BibitemOpen
  \bibfield  {author} {\bibinfo {author} {\bibfnamefont {J.}~\bibnamefont {Liu}}, \bibinfo {author} {\bibfnamefont {H.}~\bibnamefont {Yuan}}, \bibinfo {author} {\bibfnamefont {X.-M.}\ \bibnamefont {Lu}},\ and\ \bibinfo {author} {\bibfnamefont {X.}~\bibnamefont {Wang}},\ }\bibfield  {title} {\bibinfo {title} {\emph{Quantum {F}isher information matrix and multiparameter estimation}},\ }\href@noop {} {\bibfield  {journal} {\bibinfo  {journal} {Journal of Physics A: Mathematical and Theoretical}\ }\textbf {\bibinfo {volume} {53}},\ \bibinfo {pages} {023001} (\bibinfo {year} {2020})}\BibitemShut {NoStop}%
\bibitem [{\citenamefont {Meyer}(2021)}]{Meyer2021}%
  \BibitemOpen
  \bibfield  {author} {\bibinfo {author} {\bibfnamefont {J.~J.}\ \bibnamefont {Meyer}},\ }\bibfield  {title} {\bibinfo {title} {\emph{Fisher information in noisy intermediate-scale quantum applications}},\ }\href {http://dx.doi.org/10.22331/q-2021-09-09-539} {\bibfield  {journal} {\bibinfo  {journal} {Quantum}\ }\textbf {\bibinfo {volume} {5}},\ \bibinfo {pages} {539} (\bibinfo {year} {2021})}\BibitemShut {NoStop}%
\bibitem [{\citenamefont {Farias}\ \emph {et~al.}(2025{\natexlab{a}})\citenamefont {Farias}, \citenamefont {Maciel}, \citenamefont {Camilo}, \citenamefont {Lin}, \citenamefont {Ramos-Calderer},\ and\ \citenamefont {Aolita}}]{Farias2025}%
  \BibitemOpen
  \bibfield  {author} {\bibinfo {author} {\bibfnamefont {R.~M.~S.}\ \bibnamefont {Farias}}, \bibinfo {author} {\bibfnamefont {T.~O.}\ \bibnamefont {Maciel}}, \bibinfo {author} {\bibfnamefont {G.}~\bibnamefont {Camilo}}, \bibinfo {author} {\bibfnamefont {R.}~\bibnamefont {Lin}}, \bibinfo {author} {\bibfnamefont {S.}~\bibnamefont {Ramos-Calderer}},\ and\ \bibinfo {author} {\bibfnamefont {L.}~\bibnamefont {Aolita}},\ }\bibfield  {title} {\bibinfo {title} {\emph{Quantum encoder for fixed-Hamming-weight subspaces}},\ }\href {https://link.aps.org/doi/10.1103/PhysRevApplied.23.044014} {\bibfield  {journal} {\bibinfo  {journal} {Phys. Rev. Appl.}\ }\textbf {\bibinfo {volume} {23}},\ \bibinfo {pages} {044014} (\bibinfo {year} {2025}{\natexlab{a}})}\BibitemShut {NoStop}%
\bibitem [{\citenamefont {Larocca}\ \emph {et~al.}(2023)\citenamefont {Larocca}, \citenamefont {Ju}, \citenamefont {García-Martín}, \citenamefont {Coles},\ and\ \citenamefont {Cerezo}}]{Larocca2023}%
  \BibitemOpen
  \bibfield  {author} {\bibinfo {author} {\bibfnamefont {M.}~\bibnamefont {Larocca}}, \bibinfo {author} {\bibfnamefont {N.}~\bibnamefont {Ju}}, \bibinfo {author} {\bibfnamefont {D.}~\bibnamefont {García-Martín}}, \bibinfo {author} {\bibfnamefont {P.~J.}\ \bibnamefont {Coles}},\ and\ \bibinfo {author} {\bibfnamefont {M.}~\bibnamefont {Cerezo}},\ }\bibfield  {title} {\bibinfo {title} {\emph{Theory of overparametrization in quantum neural networks}},\ }\href {https://doi.org/10.1038/s43588-023-00467-6} {\bibfield  {journal} {\bibinfo  {journal} {Nature Computational Science}\ }\textbf {\bibinfo {volume} {3}},\ \bibinfo {pages} {542–551} (\bibinfo {year} {2023})}\BibitemShut {NoStop}%
\bibitem [{\citenamefont {Haug}\ \emph {et~al.}(2021)\citenamefont {Haug}, \citenamefont {Bharti},\ and\ \citenamefont {Kim}}]{Haug2021capacity}%
  \BibitemOpen
  \bibfield  {author} {\bibinfo {author} {\bibfnamefont {T.}~\bibnamefont {Haug}}, \bibinfo {author} {\bibfnamefont {K.}~\bibnamefont {Bharti}},\ and\ \bibinfo {author} {\bibfnamefont {M.}~\bibnamefont {Kim}},\ }\bibfield  {title} {\bibinfo {title} {\emph{Capacity and quantum geometry of parametrized quantum circuits}},\ }\href {https://doi.org/10.1103/PRXQuantum.2.040309} {\bibfield  {journal} {\bibinfo  {journal} {PRX Quantum}\ }\textbf {\bibinfo {volume} {2}},\ \bibinfo {pages} {040309} (\bibinfo {year} {2021})}\BibitemShut {NoStop}%
\bibitem [{\citenamefont {Haug}\ and\ \citenamefont {Kim}(2024)}]{Haug2024}%
  \BibitemOpen
  \bibfield  {author} {\bibinfo {author} {\bibfnamefont {T.}~\bibnamefont {Haug}}\ and\ \bibinfo {author} {\bibfnamefont {M.}~\bibnamefont {Kim}},\ }\bibfield  {title} {\bibinfo {title} {\emph{Generalization of quantum machine learning models using quantum {F}isher information metric}},\ }\href {https://dx.doi.org/10.1103/physrevlett.133.050603} {\bibfield  {journal} {\bibinfo  {journal} {Physical Review Letters}\ }\textbf {\bibinfo {volume} {133}},\ \bibinfo {pages} {050603} (\bibinfo {year} {2024})}\BibitemShut {NoStop}%
\bibitem [{\citenamefont {Sato}(2022)}]{sato2022}%
  \BibitemOpen
  \bibfield  {author} {\bibinfo {author} {\bibfnamefont {H.}~\bibnamefont {Sato}},\ }\bibfield  {title} {\bibinfo {title} {\emph{Riemannian conjugate gradient methods: {G}eneral framework and specific algorithms with convergence analyses}},\ }\href {http://dx.doi.org/10.1137/21M1464178} {\bibfield  {journal} {\bibinfo  {journal} {SIAM Journal on Optimization}\ }\textbf {\bibinfo {volume} {32}},\ \bibinfo {pages} {2690–2717} (\bibinfo {year} {2022})}\BibitemShut {NoStop}%
\bibitem [{\citenamefont {Wolfe}(1969)}]{wolfe1969}%
  \BibitemOpen
  \bibfield  {author} {\bibinfo {author} {\bibfnamefont {P.}~\bibnamefont {Wolfe}},\ }\bibfield  {title} {\bibinfo {title} {\emph{Convergence conditions for ascent methods}},\ }\href {http://www.jstor.org/stable/2028111} {\bibfield  {journal} {\bibinfo  {journal} {SIAM Review}\ }\textbf {\bibinfo {volume} {11}},\ \bibinfo {pages} {226} (\bibinfo {year} {1969})}\BibitemShut {NoStop}%
\bibitem [{\citenamefont {Wolfe}(1971)}]{wolfe1971}%
  \BibitemOpen
  \bibfield  {author} {\bibinfo {author} {\bibfnamefont {P.}~\bibnamefont {Wolfe}},\ }\bibfield  {title} {\bibinfo {title} {\emph{Convergence conditions for ascent methods. {II}: {S}ome Corrections}},\ }\href {http://www.jstor.org/stable/2028821} {\bibfield  {journal} {\bibinfo  {journal} {SIAM Review}\ }\textbf {\bibinfo {volume} {13}},\ \bibinfo {pages} {185} (\bibinfo {year} {1971})}\BibitemShut {NoStop}%
\bibitem [{\citenamefont {Mo{\v{c}}kus}(1975)}]{Movckus1975}%
  \BibitemOpen
  \bibfield  {author} {\bibinfo {author} {\bibfnamefont {J.}~\bibnamefont {Mo{\v{c}}kus}},\ }\bibinfo {title} {\emph{On {B}ayesian methods for seeking the extremum}},\ in\ \href {https://doi.org/10.1007/978-3-662-38527-2_55} {\emph {\bibinfo {booktitle} {Optimization Techniques IFIP Technical Conference: Novosibirsk, July 1--7, 1974}}}\ (\bibinfo  {publisher} {Springer Berlin Heidelber},\ \bibinfo {address} {Berlin, Heidelberg},\ \bibinfo {year} {1975})\ pp.\ \bibinfo {pages} {400--404}\BibitemShut {NoStop}%
\bibitem [{\citenamefont {Wang}\ \emph {et~al.}(2023)\citenamefont {Wang}, \citenamefont {Jin}, \citenamefont {Schmitt},\ and\ \citenamefont {Olhofer}}]{Wang2023}%
  \BibitemOpen
  \bibfield  {author} {\bibinfo {author} {\bibfnamefont {X.}~\bibnamefont {Wang}}, \bibinfo {author} {\bibfnamefont {Y.}~\bibnamefont {Jin}}, \bibinfo {author} {\bibfnamefont {S.}~\bibnamefont {Schmitt}},\ and\ \bibinfo {author} {\bibfnamefont {M.}~\bibnamefont {Olhofer}},\ }\bibfield  {title} {\bibinfo {title} {\emph{Recent advances in {B}ayesian optimization}},\ }\href {https://doi.org/10.1145/3582078} {\bibfield  {journal} {\bibinfo  {journal} {ACM Comput. Surv.}\ }\textbf {\bibinfo {volume} {55}} (\bibinfo {year} {2023})}\BibitemShut {NoStop}%
\bibitem [{\citenamefont {Jordan}\ and\ \citenamefont {Wigner}(1928)}]{Jordan1928}%
  \BibitemOpen
  \bibfield  {author} {\bibinfo {author} {\bibfnamefont {P.}~\bibnamefont {Jordan}}\ and\ \bibinfo {author} {\bibfnamefont {E.}~\bibnamefont {Wigner}},\ }\bibfield  {title} {\bibinfo {title} {\emph{{\"U}ber das Paulische {\"A}quivalenzverbot}},\ }\href {https://doi.org/10.1007/BF01331938} {\bibfield  {journal} {\bibinfo  {journal} {Zeitschrift f{\"u}r Physik}\ }\textbf {\bibinfo {volume} {47}},\ \bibinfo {pages} {631} (\bibinfo {year} {1928})}\BibitemShut {NoStop}%
\bibitem [{\citenamefont {Tranter}\ \emph {et~al.}(2018)\citenamefont {Tranter}, \citenamefont {Love}, \citenamefont {Mintert},\ and\ \citenamefont {Coveney}}]{Tranter2018}%
  \BibitemOpen
  \bibfield  {author} {\bibinfo {author} {\bibfnamefont {A.}~\bibnamefont {Tranter}}, \bibinfo {author} {\bibfnamefont {P.~J.}\ \bibnamefont {Love}}, \bibinfo {author} {\bibfnamefont {F.}~\bibnamefont {Mintert}},\ and\ \bibinfo {author} {\bibfnamefont {P.~V.}\ \bibnamefont {Coveney}},\ }\bibfield  {title} {\bibinfo {title} {\emph{A comparison of the {B}ravyi–{K}itaev and {J}ordan–{W}igner transformations for the quantum simulation of quantum chemistry}},\ }\href {https://doi.org/10.1021/acs.jctc.8b00450} {\bibfield  {journal} {\bibinfo  {journal} {Journal of {C}hemical {T}heory and {C}omputation}\ }\textbf {\bibinfo {volume} {14}},\ \bibinfo {pages} {5617–5630} (\bibinfo {year} {2018})}\BibitemShut {NoStop}%
\bibitem [{\citenamefont {Tilly}\ \emph {et~al.}(2022)\citenamefont {Tilly}, \citenamefont {Chen}, \citenamefont {Cao}, \citenamefont {Picozzi}, \citenamefont {Setia}, \citenamefont {Li}, \citenamefont {Grant}, \citenamefont {Wossnig}, \citenamefont {Rungger}, \citenamefont {Booth},\ and\ \citenamefont {Tennyson}}]{Tilly2022}%
  \BibitemOpen
  \bibfield  {author} {\bibinfo {author} {\bibfnamefont {J.}~\bibnamefont {Tilly}}, \bibinfo {author} {\bibfnamefont {H.}~\bibnamefont {Chen}}, \bibinfo {author} {\bibfnamefont {S.}~\bibnamefont {Cao}}, \bibinfo {author} {\bibfnamefont {D.}~\bibnamefont {Picozzi}}, \bibinfo {author} {\bibfnamefont {K.}~\bibnamefont {Setia}}, \bibinfo {author} {\bibfnamefont {Y.}~\bibnamefont {Li}}, \bibinfo {author} {\bibfnamefont {E.}~\bibnamefont {Grant}}, \bibinfo {author} {\bibfnamefont {L.}~\bibnamefont {Wossnig}}, \bibinfo {author} {\bibfnamefont {I.}~\bibnamefont {Rungger}}, \bibinfo {author} {\bibfnamefont {G.~H.}\ \bibnamefont {Booth}},\ and\ \bibinfo {author} {\bibfnamefont {J.}~\bibnamefont {Tennyson}},\ }\bibfield  {title} {\bibinfo {title} {\emph{The variational quantum eigensolver: {A} review of methods and best practices}},\ }\href {https://doi.org/10.1016/j.physrep.2022.08.003} {\bibfield  {journal} {\bibinfo  {journal} {Physics Reports}\ }\textbf {\bibinfo {volume} {986}},\ \bibinfo {pages} {1–128}
  (\bibinfo {year} {2022})}\BibitemShut {NoStop}%
\bibitem [{\citenamefont {Szabo}\ and\ \citenamefont {Ostlund}(1996)}]{Szabo1996}%
  \BibitemOpen
  \bibfield  {author} {\bibinfo {author} {\bibfnamefont {A.}~\bibnamefont {Szabo}}\ and\ \bibinfo {author} {\bibfnamefont {N.~S.}\ \bibnamefont {Ostlund}},\ }\href@noop {} {\emph {\bibinfo {title} {{M}odern quantum chemistry: {I}ntroduction to advanced electronic structure theory}}}\ (\bibinfo  {publisher} {Dover Publications},\ \bibinfo {address} {Mineola},\ \bibinfo {year} {1996})\BibitemShut {NoStop}%
\bibitem [{\citenamefont {Gomez}\ \emph {et~al.}(2011)\citenamefont {Gomez}, \citenamefont {Sierra},\ and\ \citenamefont {Ruiz-Altaba}}]{Gomez1996}%
  \BibitemOpen
  \bibfield  {author} {\bibinfo {author} {\bibfnamefont {C.}~\bibnamefont {Gomez}}, \bibinfo {author} {\bibfnamefont {G.}~\bibnamefont {Sierra}},\ and\ \bibinfo {author} {\bibfnamefont {M.}~\bibnamefont {Ruiz-Altaba}},\ }\href {https://doi.org/10.1017/CBO9780511628825} {\emph {\bibinfo {title} {Quantum groups in two-dimensional physics}}},\ Cambridge Monographs on Mathematical Physics\ (\bibinfo  {publisher} {Cambridge University Press},\ \bibinfo {year} {2011})\BibitemShut {NoStop}%
\bibitem [{\citenamefont {{IonQ Inc.}}(2025{\natexlab{b}})}]{debiasing}%
  \BibitemOpen
  \bibfield  {author} {\bibinfo {author} {\bibnamefont {{IonQ Inc.}}},\ }\href {https://www.ionq.com/resources/debiasing-and-sharpening} {\bibinfo {title} {\emph{Debiasing and sharpening}}} (\bibinfo {year} {2025}{\natexlab{b}})\BibitemShut {NoStop}%
\bibitem [{\citenamefont {Sciorilli}\ \emph {et~al.}(2025)\citenamefont {Sciorilli}, \citenamefont {Borges}, \citenamefont {Patti}, \citenamefont {García-Martín}, \citenamefont {Camilo}, \citenamefont {Anandkumar},\ and\ \citenamefont {Aolita}}]{Sciorilli2025}%
  \BibitemOpen
  \bibfield  {author} {\bibinfo {author} {\bibfnamefont {M.}~\bibnamefont {Sciorilli}}, \bibinfo {author} {\bibfnamefont {L.}~\bibnamefont {Borges}}, \bibinfo {author} {\bibfnamefont {T.~L.}\ \bibnamefont {Patti}}, \bibinfo {author} {\bibfnamefont {D.}~\bibnamefont {García-Martín}}, \bibinfo {author} {\bibfnamefont {G.}~\bibnamefont {Camilo}}, \bibinfo {author} {\bibfnamefont {A.}~\bibnamefont {Anandkumar}},\ and\ \bibinfo {author} {\bibfnamefont {L.}~\bibnamefont {Aolita}},\ }\bibfield  {title} {\bibinfo {title} {\emph{Towards large-scale quantum optimization solvers with few qubits}},\ }\href {http://dx.doi.org/10.1038/s41467-024-55346-z} {\bibfield  {journal} {\bibinfo  {journal} {Nature Communications}\ }\textbf {\bibinfo {volume} {16}} (\bibinfo {year} {2025})}\BibitemShut {NoStop}%
\bibitem [{\citenamefont {Motta}\ \emph {et~al.}(2019)\citenamefont {Motta}, \citenamefont {Sun}, \citenamefont {Tan}, \citenamefont {O’Rourke}, \citenamefont {Ye}, \citenamefont {Minnich}, \citenamefont {Brandão},\ and\ \citenamefont {Chan}}]{Motta2019}%
  \BibitemOpen
  \bibfield  {author} {\bibinfo {author} {\bibfnamefont {M.}~\bibnamefont {Motta}}, \bibinfo {author} {\bibfnamefont {C.}~\bibnamefont {Sun}}, \bibinfo {author} {\bibfnamefont {A.~T.~K.}\ \bibnamefont {Tan}}, \bibinfo {author} {\bibfnamefont {M.~J.}\ \bibnamefont {O’Rourke}}, \bibinfo {author} {\bibfnamefont {E.}~\bibnamefont {Ye}}, \bibinfo {author} {\bibfnamefont {A.~J.}\ \bibnamefont {Minnich}}, \bibinfo {author} {\bibfnamefont {F.~G. S.~L.}\ \bibnamefont {Brandão}},\ and\ \bibinfo {author} {\bibfnamefont {G.~K.-L.}\ \bibnamefont {Chan}},\ }\bibfield  {title} {\bibinfo {title} {\emph{Determining eigenstates and thermal states on a quantum computer using quantum imaginary time evolution}},\ }\href {https://doi.org/10.1038/s41567-019-0704-4} {\bibfield  {journal} {\bibinfo  {journal} {Nature Physics}\ }\textbf {\bibinfo {volume} {16}},\ \bibinfo {pages} {205–210} (\bibinfo {year} {2019})}\BibitemShut {NoStop}%
\bibitem [{\citenamefont {Silva}\ \emph {et~al.}(2023)\citenamefont {Silva}, \citenamefont {Taddei}, \citenamefont {Carrazza},\ and\ \citenamefont {Aolita}}]{Silva2023}%
  \BibitemOpen
  \bibfield  {author} {\bibinfo {author} {\bibfnamefont {T.~L.}\ \bibnamefont {Silva}}, \bibinfo {author} {\bibfnamefont {M.~M.}\ \bibnamefont {Taddei}}, \bibinfo {author} {\bibfnamefont {S.}~\bibnamefont {Carrazza}},\ and\ \bibinfo {author} {\bibfnamefont {L.}~\bibnamefont {Aolita}},\ }\bibfield  {title} {\bibinfo {title} {\emph{Fragmented imaginary-time evolution for early-stage quantum signal processors}},\ }\href {http://dx.doi.org/10.1038/s41598-023-45540-2} {\bibfield  {journal} {\bibinfo  {journal} {Scientific Reports}\ }\textbf {\bibinfo {volume} {13}} (\bibinfo {year} {2023})}\BibitemShut {NoStop}%
\bibitem [{\citenamefont {Tosta}\ \emph {et~al.}(2024)\citenamefont {Tosta}, \citenamefont {de~Lima~Silva}, \citenamefont {Camilo},\ and\ \citenamefont {Aolita}}]{Tosta2024}%
  \BibitemOpen
  \bibfield  {author} {\bibinfo {author} {\bibfnamefont {A.}~\bibnamefont {Tosta}}, \bibinfo {author} {\bibfnamefont {T.}~\bibnamefont {de~Lima~Silva}}, \bibinfo {author} {\bibfnamefont {G.}~\bibnamefont {Camilo}},\ and\ \bibinfo {author} {\bibfnamefont {L.}~\bibnamefont {Aolita}},\ }\bibfield  {title} {\bibinfo {title} {\emph{Randomized semi-quantum matrix processing}},\ }\href {http://dx.doi.org/10.1038/s41534-024-00883-0} {\bibfield  {journal} {\bibinfo  {journal} {npj Quantum Information}\ }\textbf {\bibinfo {volume} {10}} (\bibinfo {year} {2024})}\BibitemShut {NoStop}%
\bibitem [{\citenamefont {de~Lima~Silva}\ \emph {et~al.}(2024)\citenamefont {de~Lima~Silva}, \citenamefont {Borges},\ and\ \citenamefont {Aolita}}]{Silva2024}%
  \BibitemOpen
  \bibfield  {author} {\bibinfo {author} {\bibfnamefont {T.}~\bibnamefont {de~Lima~Silva}}, \bibinfo {author} {\bibfnamefont {L.}~\bibnamefont {Borges}},\ and\ \bibinfo {author} {\bibfnamefont {L.}~\bibnamefont {Aolita}},\ }\href {https://arxiv.org/abs/2411.17816} {\bibinfo {title} {\emph{Partition function estimation with a quantum coin toss}}} (\bibinfo {year} {2024}),\ \Eprint {https://arxiv.org/abs/2411.17816} {arXiv:2411.17816 [quant-ph]} \BibitemShut {NoStop}%
\bibitem [{\citenamefont {Lin}\ and\ \citenamefont {Tong}(2022)}]{Lin2022}%
  \BibitemOpen
  \bibfield  {author} {\bibinfo {author} {\bibfnamefont {L.}~\bibnamefont {Lin}}\ and\ \bibinfo {author} {\bibfnamefont {Y.}~\bibnamefont {Tong}},\ }\bibfield  {title} {\bibinfo {title} {\emph{Heisenberg-limited ground-state energy estimation for early fault-tolerant quantum computers}},\ }\href {https://doi.org/10.1103/PRXQuantum.3.010318} {\bibfield  {journal} {\bibinfo  {journal} {PRX Quantum}\ }\textbf {\bibinfo {volume} {3}},\ \bibinfo {pages} {010318} (\bibinfo {year} {2022})}\BibitemShut {NoStop}%
\bibitem [{\citenamefont {Wan}\ \emph {et~al.}(2022)\citenamefont {Wan}, \citenamefont {Berta},\ and\ \citenamefont {Campbell}}]{Wan2022}%
  \BibitemOpen
  \bibfield  {author} {\bibinfo {author} {\bibfnamefont {K.}~\bibnamefont {Wan}}, \bibinfo {author} {\bibfnamefont {M.}~\bibnamefont {Berta}},\ and\ \bibinfo {author} {\bibfnamefont {E.~T.}\ \bibnamefont {Campbell}},\ }\bibfield  {title} {\bibinfo {title} {\emph{Randomized quantum algorithm for statistical phase estimation}},\ }\href {http://dx.doi.org/10.1103/PhysRevLett.129.030503} {\bibfield  {journal} {\bibinfo  {journal} {Physical Review Letters}\ }\textbf {\bibinfo {volume} {129}} (\bibinfo {year} {2022})}\BibitemShut {NoStop}%
\bibitem [{\citenamefont {Berry}\ \emph {et~al.}(2024)\citenamefont {Berry}, \citenamefont {Tong}, \citenamefont {Khattar}, \citenamefont {White}, \citenamefont {Kim}, \citenamefont {Boixo}, \citenamefont {Lin}, \citenamefont {Lee}, \citenamefont {Chan}, \citenamefont {Babbush},\ and\ \citenamefont {Rubin}}]{Berry2024}%
  \BibitemOpen
  \bibfield  {author} {\bibinfo {author} {\bibfnamefont {D.~W.}\ \bibnamefont {Berry}}, \bibinfo {author} {\bibfnamefont {Y.}~\bibnamefont {Tong}}, \bibinfo {author} {\bibfnamefont {T.}~\bibnamefont {Khattar}}, \bibinfo {author} {\bibfnamefont {A.}~\bibnamefont {White}}, \bibinfo {author} {\bibfnamefont {T.~I.}\ \bibnamefont {Kim}}, \bibinfo {author} {\bibfnamefont {S.}~\bibnamefont {Boixo}}, \bibinfo {author} {\bibfnamefont {L.}~\bibnamefont {Lin}}, \bibinfo {author} {\bibfnamefont {S.}~\bibnamefont {Lee}}, \bibinfo {author} {\bibfnamefont {G.~K.-L.}\ \bibnamefont {Chan}}, \bibinfo {author} {\bibfnamefont {R.}~\bibnamefont {Babbush}},\ and\ \bibinfo {author} {\bibfnamefont {N.~C.}\ \bibnamefont {Rubin}},\ }\href {https://arxiv.org/abs/2409.11748} {\bibinfo {title} {\emph{Rapid initial state preparation for the quantum simulation of strongly correlated molecules}}} (\bibinfo {year} {2024}),\ \Eprint {https://arxiv.org/abs/2409.11748} {arXiv:2409.11748 [quant-ph]} \BibitemShut {NoStop}%
\bibitem [{\citenamefont {Li}\ and\ \citenamefont {Chan}(2017)}]{Li2017spin}%
  \BibitemOpen
  \bibfield  {author} {\bibinfo {author} {\bibfnamefont {Z.}~\bibnamefont {Li}}\ and\ \bibinfo {author} {\bibfnamefont {G.~K.-L.}\ \bibnamefont {Chan}},\ }\bibfield  {title} {\bibinfo {title} {\emph{Spin-projected matrix product states: {V}ersatile tool for strongly correlated systems}},\ }\href {https://doi.org/10.1021/acs.jctc.7b00270} {\bibfield  {journal} {\bibinfo  {journal} {Journal of Chemical Theory and Computation}\ }\textbf {\bibinfo {volume} {13}},\ \bibinfo {pages} {2681} (\bibinfo {year} {2017})}\BibitemShut {NoStop}%
\bibitem [{\citenamefont {Mhiri}\ \emph {et~al.}(2025)\citenamefont {Mhiri}, \citenamefont {Puig}, \citenamefont {Lerch}, \citenamefont {Rudolph}, \citenamefont {Chotibut}, \citenamefont {Thanasilp},\ and\ \citenamefont {Holmes}}]{Mhiri2025}%
  \BibitemOpen
  \bibfield  {author} {\bibinfo {author} {\bibfnamefont {H.}~\bibnamefont {Mhiri}}, \bibinfo {author} {\bibfnamefont {R.}~\bibnamefont {Puig}}, \bibinfo {author} {\bibfnamefont {S.}~\bibnamefont {Lerch}}, \bibinfo {author} {\bibfnamefont {M.~S.}\ \bibnamefont {Rudolph}}, \bibinfo {author} {\bibfnamefont {T.}~\bibnamefont {Chotibut}}, \bibinfo {author} {\bibfnamefont {S.}~\bibnamefont {Thanasilp}},\ and\ \bibinfo {author} {\bibfnamefont {Z.}~\bibnamefont {Holmes}},\ }\href {https://arxiv.org/abs/2502.07889} {\bibinfo {title} {\emph{A unifying account of warm start guarantees for patches of quantum landscapes}}} (\bibinfo {year} {2025}),\ \Eprint {https://arxiv.org/abs/2502.07889} {arXiv:2502.07889 [quant-ph]} \BibitemShut {NoStop}%
\bibitem [{\citenamefont {Hoerl}\ and\ \citenamefont {Kennard}(1970)}]{Hoerl1970}%
  \BibitemOpen
  \bibfield  {author} {\bibinfo {author} {\bibfnamefont {A.~E.}\ \bibnamefont {Hoerl}}\ and\ \bibinfo {author} {\bibfnamefont {R.~W.}\ \bibnamefont {Kennard}},\ }\bibfield  {title} {\bibinfo {title} {\emph{{R}idge regression: {B}iased estimation for nonorthogonal problems}},\ }\href@noop {} {\bibfield  {journal} {\bibinfo  {journal} {Technometrics}\ }\textbf {\bibinfo {volume} {12}},\ \bibinfo {pages} {55} (\bibinfo {year} {1970})}\BibitemShut {NoStop}%
\bibitem [{\citenamefont {Bergholm}\ \emph {et~al.}(2022)\citenamefont {Bergholm}, \citenamefont {Izaac}, \citenamefont {Schuld}, \citenamefont {Gogolin}, \citenamefont {Ahmed}, \citenamefont {Ajith}, \citenamefont {Alam}, \citenamefont {Alonso-Linaje}, \citenamefont {AkashNarayanan}, \citenamefont {Asadi} \emph {et~al.}}]{Pennylane2022}%
  \BibitemOpen
  \bibfield  {author} {\bibinfo {author} {\bibfnamefont {V.}~\bibnamefont {Bergholm}}, \bibinfo {author} {\bibfnamefont {J.}~\bibnamefont {Izaac}}, \bibinfo {author} {\bibfnamefont {M.}~\bibnamefont {Schuld}}, \bibinfo {author} {\bibfnamefont {C.}~\bibnamefont {Gogolin}}, \bibinfo {author} {\bibfnamefont {S.}~\bibnamefont {Ahmed}}, \bibinfo {author} {\bibfnamefont {V.}~\bibnamefont {Ajith}}, \bibinfo {author} {\bibfnamefont {M.~S.}\ \bibnamefont {Alam}}, \bibinfo {author} {\bibfnamefont {G.}~\bibnamefont {Alonso-Linaje}}, \bibinfo {author} {\bibfnamefont {B.}~\bibnamefont {AkashNarayanan}}, \bibinfo {author} {\bibfnamefont {A.}~\bibnamefont {Asadi}}, \emph {et~al.},\ }\href {https://arxiv.org/abs/1811.04968} {\bibinfo {title} {\emph{{PennyLane}: {A}utomatic differentiation of hybrid quantum-classical computations}}} (\bibinfo {year} {2022}),\ \Eprint {https://arxiv.org/abs/1811.04968} {arXiv:1811.04968 [quant-ph]} \BibitemShut {NoStop}%
\bibitem [{\citenamefont {Azad}\ and\ \citenamefont {Fomichev}(2023)}]{Utkarsh2023}%
  \BibitemOpen
  \bibfield  {author} {\bibinfo {author} {\bibfnamefont {U.}~\bibnamefont {Azad}}\ and\ \bibinfo {author} {\bibfnamefont {S.}~\bibnamefont {Fomichev}},\ }\href@noop {} {\bibinfo {title} {\emph{{PennyLane} quantum chemistry datasets}}} (\bibinfo {year} {2023})\BibitemShut {NoStop}%
\bibitem [{\citenamefont {Robbiati}\ \emph {et~al.}(2025)\citenamefont {Robbiati}, \citenamefont {Papaluca}, \citenamefont {Pasquale}, \citenamefont {Pedicillo}, \citenamefont {Farias}, \citenamefont {Sopena}, \citenamefont {Robbiano}, \citenamefont {Alramahi}, \citenamefont {Bordoni}, \citenamefont {Candido} \emph {et~al.}}]{Robbiati2025}%
  \BibitemOpen
  \bibfield  {author} {\bibinfo {author} {\bibfnamefont {M.}~\bibnamefont {Robbiati}}, \bibinfo {author} {\bibfnamefont {A.}~\bibnamefont {Papaluca}}, \bibinfo {author} {\bibfnamefont {A.}~\bibnamefont {Pasquale}}, \bibinfo {author} {\bibfnamefont {E.}~\bibnamefont {Pedicillo}}, \bibinfo {author} {\bibfnamefont {R.~M.~S.}\ \bibnamefont {Farias}}, \bibinfo {author} {\bibfnamefont {A.}~\bibnamefont {Sopena}}, \bibinfo {author} {\bibfnamefont {M.}~\bibnamefont {Robbiano}}, \bibinfo {author} {\bibfnamefont {G.}~\bibnamefont {Alramahi}}, \bibinfo {author} {\bibfnamefont {S.}~\bibnamefont {Bordoni}}, \bibinfo {author} {\bibfnamefont {A.}~\bibnamefont {Candido}}, \emph {et~al.},\ }\href {https://arxiv.org/abs/2510.11773} {\bibinfo {title} {\emph{{Qiboml}: {T}owards the orchestration of quantum-classical machine learning}}} (\bibinfo {year} {2025}),\ \Eprint {https://arxiv.org/abs/2510.11773} {arXiv:2510.11773 [quant-ph]} \BibitemShut {NoStop}%
\bibitem [{\citenamefont {Papaluca}\ \emph {et~al.}(2025)\citenamefont {Papaluca}, \citenamefont {Robbiati}, \citenamefont {Pedicillo}, \citenamefont {Farias}, \citenamefont {Laurora}, \citenamefont {Sopena}, \citenamefont {Ramahi}, \citenamefont {Pasquale}, \citenamefont {Carrazza},\ and\ \citenamefont {Candido}}]{QibomlZenodo}%
  \BibitemOpen
  \bibfield  {author} {\bibinfo {author} {\bibfnamefont {A.}~\bibnamefont {Papaluca}}, \bibinfo {author} {\bibfnamefont {M.}~\bibnamefont {Robbiati}}, \bibinfo {author} {\bibfnamefont {E.}~\bibnamefont {Pedicillo}}, \bibinfo {author} {\bibfnamefont {R.~M.~S.}\ \bibnamefont {Farias}}, \bibinfo {author} {\bibfnamefont {N.}~\bibnamefont {Laurora}}, \bibinfo {author} {\bibfnamefont {A.}~\bibnamefont {Sopena}}, \bibinfo {author} {\bibfnamefont {G.~A.}\ \bibnamefont {Ramahi}}, \bibinfo {author} {\bibfnamefont {A.}~\bibnamefont {Pasquale}}, \bibinfo {author} {\bibfnamefont {S.}~\bibnamefont {Carrazza}},\ and\ \bibinfo {author} {\bibfnamefont {A.}~\bibnamefont {Candido}},\ }\href {https://doi.org/10.5281/zenodo.17310379} {\bibinfo {title} {qiboteam/qiboml: qiboml 0.1.0}} (\bibinfo {year} {2025})\BibitemShut {NoStop}%
\bibitem [{\citenamefont {Efthymiou}\ \emph {et~al.}(2021)\citenamefont {Efthymiou}, \citenamefont {Ramos-Calderer}, \citenamefont {Bravo-Prieto}, \citenamefont {P\'{e}rez-Salinas}, \citenamefont {Garc\'{i}a-Mart\'{i}n}, \citenamefont {Garcia-Saez}, \citenamefont {Latorre},\ and\ \citenamefont {Carrazza}}]{Qibo2021}%
  \BibitemOpen
  \bibfield  {author} {\bibinfo {author} {\bibfnamefont {S.}~\bibnamefont {Efthymiou}}, \bibinfo {author} {\bibfnamefont {S.}~\bibnamefont {Ramos-Calderer}}, \bibinfo {author} {\bibfnamefont {C.}~\bibnamefont {Bravo-Prieto}}, \bibinfo {author} {\bibfnamefont {A.}~\bibnamefont {P\'{e}rez-Salinas}}, \bibinfo {author} {\bibfnamefont {D.}~\bibnamefont {Garc\'{i}a-Mart\'{i}n}}, \bibinfo {author} {\bibfnamefont {A.}~\bibnamefont {Garcia-Saez}}, \bibinfo {author} {\bibfnamefont {J.~I.}\ \bibnamefont {Latorre}},\ and\ \bibinfo {author} {\bibfnamefont {S.}~\bibnamefont {Carrazza}},\ }\bibfield  {title} {\bibinfo {title} {\emph{{Q}ibo: A framework for quantum simulation with hardware acceleration}},\ }\href {https://doi.org/10.1088/2058-9565/ac39f5} {\bibfield  {journal} {\bibinfo  {journal} {Quantum Science and Technology}\ }\textbf {\bibinfo {volume} {7}},\ \bibinfo {pages} {015018} (\bibinfo {year} {2021})}\BibitemShut {NoStop}%
\bibitem [{\citenamefont {Farias}\ \emph {et~al.}(2025{\natexlab{b}})\citenamefont {Farias}, \citenamefont {Efthymiou}, \citenamefont {Carrazza}, \citenamefont {Papaluca}, \citenamefont {Robbiati}, \citenamefont {Sopena}, \citenamefont {Bordoni}, \citenamefont {Pedicillo}, \citenamefont {Pasquale}, \citenamefont {Candido} \emph {et~al.}}]{QiboZenodo}%
  \BibitemOpen
  \bibfield  {author} {\bibinfo {author} {\bibfnamefont {R.~M.~S.}\ \bibnamefont {Farias}}, \bibinfo {author} {\bibfnamefont {S.}~\bibnamefont {Efthymiou}}, \bibinfo {author} {\bibfnamefont {S.}~\bibnamefont {Carrazza}}, \bibinfo {author} {\bibfnamefont {A.}~\bibnamefont {Papaluca}}, \bibinfo {author} {\bibfnamefont {M.}~\bibnamefont {Robbiati}}, \bibinfo {author} {\bibfnamefont {A.}~\bibnamefont {Sopena}}, \bibinfo {author} {\bibfnamefont {S.}~\bibnamefont {Bordoni}}, \bibinfo {author} {\bibfnamefont {E.}~\bibnamefont {Pedicillo}}, \bibinfo {author} {\bibfnamefont {A.}~\bibnamefont {Pasquale}}, \bibinfo {author} {\bibfnamefont {A.}~\bibnamefont {Candido}}, \emph {et~al.},\ }\href {https://doi.org/10.5281/zenodo.17822128} {\bibinfo {title} {qiboteam/qibo: qibo 0.2.23}} (\bibinfo {year} {2025}{\natexlab{b}})\BibitemShut {NoStop}%
\bibitem [{\citenamefont {Paszke}\ \emph {et~al.}(2019)\citenamefont {Paszke}, \citenamefont {Gross}, \citenamefont {Massa}, \citenamefont {Lerer}, \citenamefont {Bradbury}, \citenamefont {Chanan}, \citenamefont {Killeen}, \citenamefont {Lin}, \citenamefont {Gimelshein}, \citenamefont {Antiga} \emph {et~al.}}]{Pytorch2019}%
  \BibitemOpen
  \bibfield  {author} {\bibinfo {author} {\bibfnamefont {A.}~\bibnamefont {Paszke}}, \bibinfo {author} {\bibfnamefont {S.}~\bibnamefont {Gross}}, \bibinfo {author} {\bibfnamefont {F.}~\bibnamefont {Massa}}, \bibinfo {author} {\bibfnamefont {A.}~\bibnamefont {Lerer}}, \bibinfo {author} {\bibfnamefont {J.}~\bibnamefont {Bradbury}}, \bibinfo {author} {\bibfnamefont {G.}~\bibnamefont {Chanan}}, \bibinfo {author} {\bibfnamefont {T.}~\bibnamefont {Killeen}}, \bibinfo {author} {\bibfnamefont {Z.}~\bibnamefont {Lin}}, \bibinfo {author} {\bibfnamefont {N.}~\bibnamefont {Gimelshein}}, \bibinfo {author} {\bibfnamefont {L.}~\bibnamefont {Antiga}}, \emph {et~al.},\ }\href {https://arxiv.org/abs/1912.01703} {\bibinfo {title} {\emph{{PyTorch}: {A}n imperative style, high-performance deep learning library}}} (\bibinfo {year} {2019}),\ \Eprint {https://arxiv.org/abs/1912.01703} {arXiv:1912.01703 [cs.LG]} \BibitemShut {NoStop}%
\bibitem [{\citenamefont {Nogueira}(2014)}]{Nogueira2014}%
  \BibitemOpen
  \bibfield  {author} {\bibinfo {author} {\bibfnamefont {F.}~\bibnamefont {Nogueira}},\ }\href {https://github.com/bayesian-optimization/BayesianOptimization} {\bibinfo {title} {\emph{Bayesian optimization: {O}pen source constrained global optimization tool for {P}ython}}} (\bibinfo {year} {2014})\BibitemShut {NoStop}%
\bibitem [{\citenamefont {Dai}\ and\ \citenamefont {Yuan}(2001)}]{dai2001}%
  \BibitemOpen
  \bibfield  {author} {\bibinfo {author} {\bibfnamefont {Y.~H.}\ \bibnamefont {Dai}}\ and\ \bibinfo {author} {\bibfnamefont {Y.}~\bibnamefont {Yuan}},\ }\bibfield  {title} {\bibinfo {title} {\emph{An efficient hybrid conjugate gradient method for unconstrained optimization}},\ }\href {https://doi.org/10.1023/A:1012930416777} {\bibfield  {journal} {\bibinfo  {journal} {Annals of Operations Research}\ }\textbf {\bibinfo {volume} {103}},\ \bibinfo {pages} {33} (\bibinfo {year} {2001})}\BibitemShut {NoStop}%
\bibitem [{\citenamefont {Kingma}\ and\ \citenamefont {Ba}(2017)}]{Kingma2017}%
  \BibitemOpen
  \bibfield  {author} {\bibinfo {author} {\bibfnamefont {D.~P.}\ \bibnamefont {Kingma}}\ and\ \bibinfo {author} {\bibfnamefont {J.}~\bibnamefont {Ba}},\ }\href {https://arxiv.org/abs/1412.6980} {\bibinfo {title} {\emph{{Adam}: {A} method for stochastic optimization}}} (\bibinfo {year} {2017}),\ \Eprint {https://arxiv.org/abs/1412.6980} {arXiv:1412.6980 [cs.LG]} \BibitemShut {NoStop}%
\bibitem [{\citenamefont {Robbiati}\ \emph {et~al.}(2024)\citenamefont {Robbiati}, \citenamefont {Pedicillo}, \citenamefont {Pasquale}, \citenamefont {Li}, \citenamefont {Wright}, \citenamefont {Farias}, \citenamefont {Giang}, \citenamefont {Son}, \citenamefont {Kn\"{o}rzer}, \citenamefont {Goh} \emph {et~al.}}]{Robbiati2024}%
  \BibitemOpen
  \bibfield  {author} {\bibinfo {author} {\bibfnamefont {M.}~\bibnamefont {Robbiati}}, \bibinfo {author} {\bibfnamefont {E.}~\bibnamefont {Pedicillo}}, \bibinfo {author} {\bibfnamefont {A.}~\bibnamefont {Pasquale}}, \bibinfo {author} {\bibfnamefont {X.}~\bibnamefont {Li}}, \bibinfo {author} {\bibfnamefont {A.}~\bibnamefont {Wright}}, \bibinfo {author} {\bibfnamefont {R.~M.~S.}\ \bibnamefont {Farias}}, \bibinfo {author} {\bibfnamefont {K.~U.}\ \bibnamefont {Giang}}, \bibinfo {author} {\bibfnamefont {J.}~\bibnamefont {Son}}, \bibinfo {author} {\bibfnamefont {J.}~\bibnamefont {Kn\"{o}rzer}}, \bibinfo {author} {\bibfnamefont {S.~T.}\ \bibnamefont {Goh}}, \emph {et~al.},\ }\href {https://arxiv.org/abs/2408.03987} {\bibinfo {title} {\emph{Double-bracket quantum algorithms for high-fidelity ground state preparation}}} (\bibinfo {year} {2024}),\ \Eprint {https://arxiv.org/abs/2408.03987} {arXiv:2408.03987 [quant-ph]} \BibitemShut {NoStop}%
\bibitem [{\citenamefont {Kandala}\ \emph {et~al.}(2017)\citenamefont {Kandala}, \citenamefont {Mezzacapo}, \citenamefont {Temme}, \citenamefont {Takita}, \citenamefont {Brink}, \citenamefont {Chow},\ and\ \citenamefont {Gambetta}}]{Kandala2017}%
  \BibitemOpen
  \bibfield  {author} {\bibinfo {author} {\bibfnamefont {A.}~\bibnamefont {Kandala}}, \bibinfo {author} {\bibfnamefont {A.}~\bibnamefont {Mezzacapo}}, \bibinfo {author} {\bibfnamefont {K.}~\bibnamefont {Temme}}, \bibinfo {author} {\bibfnamefont {M.}~\bibnamefont {Takita}}, \bibinfo {author} {\bibfnamefont {M.}~\bibnamefont {Brink}}, \bibinfo {author} {\bibfnamefont {J.~M.}\ \bibnamefont {Chow}},\ and\ \bibinfo {author} {\bibfnamefont {J.~M.}\ \bibnamefont {Gambetta}},\ }\bibfield  {title} {\bibinfo {title} {\emph{Hardware-efficient variational quantum eigensolver for small molecules and quantum magnets}},\ }\href {https://doi.org/10.1038/nature23879} {\bibfield  {journal} {\bibinfo  {journal} {Nature}\ }\textbf {\bibinfo {volume} {549}},\ \bibinfo {pages} {242} (\bibinfo {year} {2017})}\BibitemShut {NoStop}%
\bibitem [{\citenamefont {{IonQ Inc.}}(2025{\natexlab{c}})}]{qis}%
  \BibitemOpen
  \bibfield  {author} {\bibinfo {author} {\bibnamefont {{IonQ Inc.}}},\ }\href {https://docs.ionq.com/api-reference/v0.3/writing-quantum-programs#supported-gates} {\bibinfo {title} {\emph{Supported gates}}} (\bibinfo {year} {2025}{\natexlab{c}})\BibitemShut {NoStop}%
\bibitem [{\citenamefont {O'Neill}(2006)}]{Oneill2006}%
  \BibitemOpen
  \bibfield  {author} {\bibinfo {author} {\bibfnamefont {B.}~\bibnamefont {O'Neill}},\ }\href {https://doi.org/10.1016/C2009-0-05241-6} {\emph {\bibinfo {title} {Elementary differential geometry}}},\ \bibinfo {edition} {{R}ev. 2nd}\ ed.\ (\bibinfo  {publisher} {Elsevier Academic Press},\ \bibinfo {year} {2006})\BibitemShut {NoStop}%
\bibitem [{\citenamefont {Blumenson}(1960)}]{Blumenson1960}%
  \BibitemOpen
  \bibfield  {author} {\bibinfo {author} {\bibfnamefont {L.~E.}\ \bibnamefont {Blumenson}},\ }\bibfield  {title} {\bibinfo {title} {\emph{A derivation of $n$-dimensional spherical coordinates}},\ }\href {https://doi.org/10.2307/2308932} {\bibfield  {journal} {\bibinfo  {journal} {The American Mathematical Monthly}\ }\textbf {\bibinfo {volume} {67}},\ \bibinfo {pages} {63} (\bibinfo {year} {1960})}\BibitemShut {NoStop}%
\bibitem [{Note1()}]{Note1}%
  \BibitemOpen
  \bibinfo {note} {Which is the same Jacobian matrix as the one found in the main text just below Eq.\ref {eq:update_exact}, but without the index $\protect \mathbf {x}$, which is dropped for convenience. For explicit formulas, see Eq.\protect \eqref {eq:jacob_matrix} in App.\ref {app:jacobian_regularization}}\BibitemShut {NoStop}%
\bibitem [{\citenamefont {Bengtsson}\ and\ \citenamefont {{\.Z}yczkowski}(2017)}]{Bengtsson2017}%
  \BibitemOpen
  \bibfield  {author} {\bibinfo {author} {\bibfnamefont {I.}~\bibnamefont {Bengtsson}}\ and\ \bibinfo {author} {\bibfnamefont {K.}~\bibnamefont {{\.Z}yczkowski}},\ }\href {https://doi.org/10.1017/CBO9780511535048} {\emph {\bibinfo {title} {{G}eometry of quantum states: {A}n introduction to quantum entanglement}}}\ (\bibinfo  {publisher} {Cambridge University Press},\ \bibinfo {year} {2017})\BibitemShut {NoStop}%
\bibitem [{\citenamefont {Lee}(2018)}]{Lee2018}%
  \BibitemOpen
  \bibfield  {author} {\bibinfo {author} {\bibfnamefont {J.~M.}\ \bibnamefont {Lee}},\ }\href {https://doi.org/10.1007/978-3-319-91755-9} {\emph {\bibinfo {title} {Introduction to {Riemannian} {Manifolds}}}},\ \bibinfo {series} {Graduate {Texts} in {Mathematics}}, Vol.\ \bibinfo {volume} {176}\ (\bibinfo  {publisher} {Springer International Publishing},\ \bibinfo {year} {2018})\BibitemShut {NoStop}%
\bibitem [{\citenamefont {Fang}(2018)}]{fang2018}%
  \BibitemOpen
  \bibfield  {author} {\bibinfo {author} {\bibfnamefont {K.~W.}\ \bibnamefont {Fang}},\ }\href {https://doi.org/10.1201/9781351077040} {\emph {\bibinfo {title} {Symmetric multivariate and related distributions}}}\ (\bibinfo  {publisher} {Chapman and Hall/CRC},\ \bibinfo {year} {2018})\BibitemShut {NoStop}%
\bibitem [{\citenamefont {Dai}\ and\ \citenamefont {Yuan}(1999)}]{dai1999}%
  \BibitemOpen
  \bibfield  {author} {\bibinfo {author} {\bibfnamefont {Y.~H.}\ \bibnamefont {Dai}}\ and\ \bibinfo {author} {\bibfnamefont {Y.}~\bibnamefont {Yuan}},\ }\bibfield  {title} {\bibinfo {title} {\emph{A nonlinear conjugate gradient method with a strong global convergence property}},\ }\href {https://doi.org/10.1137/S1052623497318992} {\bibfield  {journal} {\bibinfo  {journal} {SIAM Journal on Optimization}\ }\textbf {\bibinfo {volume} {10}},\ \bibinfo {pages} {177} (\bibinfo {year} {1999})}\BibitemShut {NoStop}%
\bibitem [{\citenamefont {Hestenes}\ and\ \citenamefont {Stiefel}(1952)}]{hestenes1952}%
  \BibitemOpen
  \bibfield  {author} {\bibinfo {author} {\bibfnamefont {M.~R.}\ \bibnamefont {Hestenes}}\ and\ \bibinfo {author} {\bibfnamefont {E.}~\bibnamefont {Stiefel}},\ }\bibfield  {title} {\bibinfo {title} {\emph{Methods of conjugate gradients for solving linear systems}},\ }\href {https://api.semanticscholar.org/CorpusID:2207234} {\bibfield  {journal} {\bibinfo  {journal} {Journal of Research of the National Bureau of Standards}\ }\textbf {\bibinfo {volume} {49}},\ \bibinfo {pages} {409} (\bibinfo {year} {1952})}\BibitemShut {NoStop}%
\bibitem [{\citenamefont {Truong}\ and\ \citenamefont {Nguyen}(2021)}]{truong2021}%
  \BibitemOpen
  \bibfield  {author} {\bibinfo {author} {\bibfnamefont {T.~T.}\ \bibnamefont {Truong}}\ and\ \bibinfo {author} {\bibfnamefont {H.-T.}\ \bibnamefont {Nguyen}},\ }\bibfield  {title} {\bibinfo {title} {\emph{Backtracking gradient descent method and some applications in large scale optimisation. {P}art 2: {A}lgorithms and experiments}},\ }\href {https://doi.org/10.1007/s00245-020-09718-8} {\bibfield  {journal} {\bibinfo  {journal} {Applied Mathematics {\&} Optimization}\ }\textbf {\bibinfo {volume} {84}},\ \bibinfo {pages} {2557} (\bibinfo {year} {2021})}\BibitemShut {NoStop}%
\bibitem [{\citenamefont {Press}\ \emph {et~al.}(2007)\citenamefont {Press}, \citenamefont {Teukolsky}, \citenamefont {Vetterling},\ and\ \citenamefont {Flannery}}]{press2007}%
  \BibitemOpen
  \bibfield  {author} {\bibinfo {author} {\bibfnamefont {W.~H.}\ \bibnamefont {Press}}, \bibinfo {author} {\bibfnamefont {S.~A.}\ \bibnamefont {Teukolsky}}, \bibinfo {author} {\bibfnamefont {W.~T.}\ \bibnamefont {Vetterling}},\ and\ \bibinfo {author} {\bibfnamefont {B.~P.}\ \bibnamefont {Flannery}},\ }\href@noop {} {\emph {\bibinfo {title} {Numerical recipes: {T}he art of scientific computing}}},\ \bibinfo {edition} {3rd}\ ed.\ (\bibinfo  {publisher} {Cambridge University Press},\ \bibinfo {year} {2007})\BibitemShut {NoStop}%
\bibitem [{\citenamefont {Gluza}\ \emph {et~al.}(2026)\citenamefont {Gluza}, \citenamefont {Son}, \citenamefont {Tiang}, \citenamefont {Zander}, \citenamefont {Seidel}, \citenamefont {Suzuki}, \citenamefont {Holmes},\ and\ \citenamefont {Ng}}]{Gluza2025}%
  \BibitemOpen
  \bibfield  {author} {\bibinfo {author} {\bibfnamefont {M.}~\bibnamefont {Gluza}}, \bibinfo {author} {\bibfnamefont {J.}~\bibnamefont {Son}}, \bibinfo {author} {\bibfnamefont {B.~H.}\ \bibnamefont {Tiang}}, \bibinfo {author} {\bibfnamefont {R.}~\bibnamefont {Zander}}, \bibinfo {author} {\bibfnamefont {R.}~\bibnamefont {Seidel}}, \bibinfo {author} {\bibfnamefont {Y.}~\bibnamefont {Suzuki}}, \bibinfo {author} {\bibfnamefont {Z.}~\bibnamefont {Holmes}},\ and\ \bibinfo {author} {\bibfnamefont {N.~H.~Y.}\ \bibnamefont {Ng}},\ }\bibfield  {title} {\bibinfo {title} {\emph{Double-bracket quantum algorithms for quantum imaginary-time evolution}},\ }\href {https://link.aps.org/doi/10.1103/rw81-k8vk} {\bibfield  {journal} {\bibinfo  {journal} {Phys. Rev. Lett.}\ }\textbf {\bibinfo {volume} {136}},\ \bibinfo {pages} {020601} (\bibinfo {year} {2026})}\BibitemShut {NoStop}%
\bibitem [{\citenamefont {Suzuki}\ \emph {et~al.}(2025)\citenamefont {Suzuki}, \citenamefont {Gluza}, \citenamefont {Son}, \citenamefont {Tiang}, \citenamefont {Ng},\ and\ \citenamefont {Holmes}}]{Suzuki2025}%
  \BibitemOpen
  \bibfield  {author} {\bibinfo {author} {\bibfnamefont {Y.}~\bibnamefont {Suzuki}}, \bibinfo {author} {\bibfnamefont {M.}~\bibnamefont {Gluza}}, \bibinfo {author} {\bibfnamefont {J.}~\bibnamefont {Son}}, \bibinfo {author} {\bibfnamefont {B.~H.}\ \bibnamefont {Tiang}}, \bibinfo {author} {\bibfnamefont {N.~H.~Y.}\ \bibnamefont {Ng}},\ and\ \bibinfo {author} {\bibfnamefont {Z.}~\bibnamefont {Holmes}},\ }\href {https://arxiv.org/abs/2507.15065} {\bibinfo {title} {\emph{Grover's algorithm is an approximation of imaginary-time evolution}}} (\bibinfo {year} {2025}),\ \Eprint {https://arxiv.org/abs/2507.15065} {arXiv:2507.15065 [quant-ph]} \BibitemShut {NoStop}%
\bibitem [{\citenamefont {Yuan}\ \emph {et~al.}(2019)\citenamefont {Yuan}, \citenamefont {Endo}, \citenamefont {Zhao}, \citenamefont {Li},\ and\ \citenamefont {Benjamin}}]{Yuan2019}%
  \BibitemOpen
  \bibfield  {author} {\bibinfo {author} {\bibfnamefont {X.}~\bibnamefont {Yuan}}, \bibinfo {author} {\bibfnamefont {S.}~\bibnamefont {Endo}}, \bibinfo {author} {\bibfnamefont {Q.}~\bibnamefont {Zhao}}, \bibinfo {author} {\bibfnamefont {Y.}~\bibnamefont {Li}},\ and\ \bibinfo {author} {\bibfnamefont {S.~C.}\ \bibnamefont {Benjamin}},\ }\bibfield  {title} {\bibinfo {title} {\emph{Theory of variational quantum simulation}},\ }\href {https://doi.org/10.22331/q-2019-10-07-191} {\bibfield  {journal} {\bibinfo  {journal} {Quantum}\ }\textbf {\bibinfo {volume} {3}},\ \bibinfo {pages} {191} (\bibinfo {year} {2019})}\BibitemShut {NoStop}%
\bibitem [{\citenamefont {McArdle}\ \emph {et~al.}(2019)\citenamefont {McArdle}, \citenamefont {Jones}, \citenamefont {Endo}, \citenamefont {Li}, \citenamefont {Benjamin},\ and\ \citenamefont {Yuan}}]{McArdle2019}%
  \BibitemOpen
  \bibfield  {author} {\bibinfo {author} {\bibfnamefont {S.}~\bibnamefont {McArdle}}, \bibinfo {author} {\bibfnamefont {T.}~\bibnamefont {Jones}}, \bibinfo {author} {\bibfnamefont {S.}~\bibnamefont {Endo}}, \bibinfo {author} {\bibfnamefont {Y.}~\bibnamefont {Li}}, \bibinfo {author} {\bibfnamefont {S.~C.}\ \bibnamefont {Benjamin}},\ and\ \bibinfo {author} {\bibfnamefont {X.}~\bibnamefont {Yuan}},\ }\bibfield  {title} {\bibinfo {title} {\emph{Variational ansatz-based quantum simulation of imaginary time evolution}},\ }\href {https://doi.org/10.1038/s41534-019-0187-2} {\bibfield  {journal} {\bibinfo  {journal} {npj Quantum Information}\ }\textbf {\bibinfo {volume} {5}},\ \bibinfo {pages} {75} (\bibinfo {year} {2019})}\BibitemShut {NoStop}%
\bibitem [{\citenamefont {Yu}\ \emph {et~al.}(2025)\citenamefont {Yu}, \citenamefont {Moreno}, \citenamefont {Iosue}, \citenamefont {Bertels}, \citenamefont {Claudino}, \citenamefont {Fuller}, \citenamefont {Groszkowski}, \citenamefont {Humble}, \citenamefont {Jurcevic}, \citenamefont {Kirby} \emph {et~al.}}]{Yu2025}%
  \BibitemOpen
  \bibfield  {author} {\bibinfo {author} {\bibfnamefont {J.}~\bibnamefont {Yu}}, \bibinfo {author} {\bibfnamefont {J.~R.}\ \bibnamefont {Moreno}}, \bibinfo {author} {\bibfnamefont {J.~T.}\ \bibnamefont {Iosue}}, \bibinfo {author} {\bibfnamefont {L.}~\bibnamefont {Bertels}}, \bibinfo {author} {\bibfnamefont {D.}~\bibnamefont {Claudino}}, \bibinfo {author} {\bibfnamefont {B.}~\bibnamefont {Fuller}}, \bibinfo {author} {\bibfnamefont {P.}~\bibnamefont {Groszkowski}}, \bibinfo {author} {\bibfnamefont {T.~S.}\ \bibnamefont {Humble}}, \bibinfo {author} {\bibfnamefont {P.}~\bibnamefont {Jurcevic}}, \bibinfo {author} {\bibfnamefont {W.}~\bibnamefont {Kirby}}, \emph {et~al.},\ }\href {https://arxiv.org/abs/2501.09702} {\bibinfo {title} {\emph{Quantum-centric algorithm for sample-based {K}rylov diagonalization}}} (\bibinfo {year} {2025}),\ \Eprint {https://arxiv.org/abs/2501.09702} {arXiv:2501.09702 [quant-ph]} \BibitemShut {NoStop}%
\bibitem [{\citenamefont {Raj}\ and\ \citenamefont {Coyle}(2025)}]{Raj2025}%
  \BibitemOpen
  \bibfield  {author} {\bibinfo {author} {\bibfnamefont {S.}~\bibnamefont {Raj}}\ and\ \bibinfo {author} {\bibfnamefont {B.}~\bibnamefont {Coyle}},\ }\href {https://arxiv.org/abs/2502.06916} {\bibinfo {title} {\emph{Hyper compressed fine-tuning of large foundation models with quantum-inspired adapters}}} (\bibinfo {year} {2025}),\ \Eprint {https://arxiv.org/abs/2502.06916} {arXiv:2502.06916 [cs.LG]} \BibitemShut {NoStop}%
\bibitem [{\citenamefont {Datta}(2010)}]{Datta2010}%
  \BibitemOpen
  \bibfield  {author} {\bibinfo {author} {\bibfnamefont {B.~N.}\ \bibnamefont {Datta}},\ }\href {https://doi.org/10.1137/1.9780898717655} {\emph {\bibinfo {title} {Numerical Linear Algebra and Applications, 2nd Edition}}},\ \bibinfo {edition} {2nd}\ ed.\ (\bibinfo  {publisher} {Society for Industrial and Applied Mathematics},\ \bibinfo {address} {Philadelphia, PA},\ \bibinfo {year} {2010})\BibitemShut {NoStop}%
\bibitem [{\citenamefont {Abbas}\ \emph {et~al.}(2023)\citenamefont {Abbas}, \citenamefont {King}, \citenamefont {Huang}, \citenamefont {Huggins}, \citenamefont {Movassagh}, \citenamefont {Gilboa},\ and\ \citenamefont {McClean}}]{Abbas2023}%
  \BibitemOpen
  \bibfield  {author} {\bibinfo {author} {\bibfnamefont {A.}~\bibnamefont {Abbas}}, \bibinfo {author} {\bibfnamefont {R.}~\bibnamefont {King}}, \bibinfo {author} {\bibfnamefont {H.-Y.}\ \bibnamefont {Huang}}, \bibinfo {author} {\bibfnamefont {W.~J.}\ \bibnamefont {Huggins}}, \bibinfo {author} {\bibfnamefont {R.}~\bibnamefont {Movassagh}}, \bibinfo {author} {\bibfnamefont {D.}~\bibnamefont {Gilboa}},\ and\ \bibinfo {author} {\bibfnamefont {J.}~\bibnamefont {McClean}},\ }\bibfield  {title} {\bibinfo {title} {\emph{On quantum backpropagation, information reuse, and cheating measurement collapse}},\ }in\ \href {https://proceedings.neurips.cc/paper_files/paper/2023/file/8c3caae2f725c8e2a55ecd600563d172-Paper-Conference.pdf} {\emph {\bibinfo {booktitle} {Advances in Neural Information Processing Systems}}},\ Vol.~\bibinfo {volume} {36}\ (\bibinfo  {publisher} {Curran Associates, Inc.},\ \bibinfo {year} {2023})\ pp.\ \bibinfo {pages} {44792--44819}\BibitemShut {NoStop}%
\end{thebibliography}%

\onecolumngrid
\appendix

\section{The differential geometry of hyperspheres}
\label{sec:differential_geometry}

Here, we introduce the main concepts of differential geometry used in this paper, as they apply to the study of hyperspheres as subspaces of Euclidean spaces. We give an elementary account of the subject, as given in Ref.~\cite{Oneill2006}, and introduce also a few sophisticated results that we need.

\textbf{Additional notation:} Besides the conventions given in the main text, we define $\mathbf{e}_{\ell}\coloneq(\delta_{\ell,j})^{d}_{j=1}\in\euclidean$, where $\delta_{\ell,j}$ is the Kronecker delta symbol, for the canonical basis vectors in $\euclidean$, use the letters $\gamma$ and $\tau$ for functions defining differentiable curves on manifolds, and write $\gamma^{\prime},\gamma^{\prime\prime}$ for the first and second derivatives of a real function of a single variable $\gamma$. 
We use the symbol $\langle\,\,;\,\rangle$ to denote the canonical Euclidean inner product function $\langle\,\,;\,\rangle:\euclidean\cross\euclidean\rightarrow\mathbb{R}$ given by $\langle\vecv;\vecu\rangle\coloneq\sum^{d}_{\ell=1}v_{\ell}u_{\ell}$ with $v_{\ell},u_{\ell}$ being the components of the basis vector $\mathbf{e}_{\ell}$ in $\vecv,\vecu$ respectively, and we use the symbol $\norm{\,\,}$ for the induced Euclidean two-norm on $\euclidean$. 

\vspace{.2cm}

\textbf{Tangent spaces:} Consider the vector space $\euclidean$ with its standard Euclidean inner product and recall that the unit hypersphere is the set $\sphere$ of points $\vecx\in\euclidean$ that satisfy the quadratic equation $\norm{\vecx}^{2}=1$. 
Our approach is to define the geometric properties of $\sphere$ by studying the properties of sets of differentiable curves $\gamma: [0, \, 1] \rightarrow \sphere$ on them. 
But first, let us build intuition by considering sets of differentiable curves in $\euclidean$. 
Let $\veca\in\euclidean$ be some point and let $\tau:[0,1]\rightarrow\euclidean$ be an arbitrary differentiable curve with $\tau(0)=\veca$. 
Since $\tau$ is differentiable, there exists a unique tangent vector to $\tau$ at the point $\veca$ given by its derivative $\tau^{\prime}(0)$. 
We call the set of all possible tangent vectors to curves of this form the \emph{tangent space of $\euclidean$ at $\veca$}, denoted by $T_{\veca}\euclidean$.
Since a tangent vector to a curve in $\euclidean$ defines a unique tangent line to that curve, every element of $T_{\veca}\euclidean$ can be obtained by computing tangent vectors to lines of the form $\veca+t\vecv$ for some $\vecv\in\euclidean$. 
This makes $T_{\veca}\euclidean$ an isomorphic copy of $\euclidean$ but centered at the point $\veca$. 
Therefore, the points in $\euclidean$ that represent canonical basis vectors form a basis for the tangent spaces of $\euclidean$ at every point. 
This justifies using the same notation $\mathbf{e}_{j}$ for the $j$th canonical basis vector of any tangent space of $\euclidean$.
Similarly, let $T_{\veca}\sphere$ be the set of all the tangent vectors to the curves $\gamma:[0,1]\rightarrow\sphere$ with $\gamma(0)=\veca$ at the point $\veca\in\sphere$. 
Since for every $t\in[0,1]$ we must have $\langle\gamma(t);\gamma(t)\rangle=1$, we get by differentiation and the properties of the Euclidean inner product that $\langle\gamma(t);\gamma^{\prime}(t)\rangle=0$ for all $t\in[0,1]$, and in particular for $t=0$ we have $\langle\veca;\gamma^{\prime}(0)\rangle=0$, \ie the tangent space $T_{\veca}\sphere$ is the subspace of all $\vecv\in T_{\veca}\euclidean$ such that $\langle\veca;\vecv\rangle=0$. 

\vspace{.2cm}

\textbf{Natural gradient:} Tangent spaces allow us to define linear maps that approximate the effect of smooth maps at a point. 
For example, let $\loss:\euclidean\rightarrow\mathbb{R}$ be a smooth function. We know from vector calculus that near a point $\veca$, we have $\loss(\veca+\vecv)\approx \loss(\veca)+\sum^{n}_{j=1}(\partial_{x_{j}}\loss)_{\veca}v_{j}$. 
The second term of this sum can be interpreted as the action of the linear functional $\text{d}\loss_{\veca}:T_{\veca}\euclidean\rightarrow\mathbb{R}$ on $\vecv\in T_{\veca}\euclidean$ defined by $\text{d}\loss_{\veca}(\vecv)\coloneq\langle(\partialbf_{\vecx}\loss)_{a};\vecv\rangle=\sum^{n}_{j=1}(\partial_{x_{j}}\loss)_{\veca}v_{j}$, where the last equation is the expression of the inner product in the canonical basis of $T_{\veca}\euclidean$.
The linear functional $\text{d}\loss_{\veca}$ is called the \textit{differential} of $\loss$ at $\veca$, and its generalizations play a major role in differential geometry.
The vector $\Grad^{\text{euc}}_{\veca}(\loss)\coloneq(\partialbf_{\vecx}\loss)_{a}\in T_{\veca}\euclidean$ is called the (euclidean) \textit{gradient} of $\loss$ at $\veca$, and gives the direction of steepest ascent of $\loss$ at $\veca$. If $\veca\in\sphere$, we can define the \quotes{spherical} gradient of $\loss$ at $\veca$, also called \emph{natural gradient}, as the vector of steepest ascent of $\loss$ at $\veca$ among the vectors tangent to $\sphere$. 
We can calculate it by computing the differential $\text{d}\loss_{\veca}$ on the vectors in $T_{\veca}\sphere$.
Since $T_{\veca}\sphere$ consists of vectors in $T_{\veca}\euclidean$ that are orthogonal to $\veca$, for any $\vecv\in T_{\veca}\euclidean$ the vector $\Pi_{\veca}\vecv$ is in $T_{\veca}\sphere$, where $\Pi_{\veca}$ is the projection operator on $T_{\veca}\sphere$ with a matrix representation given by $\identity-[\veca][\veca]^{T}$. 
Then, since projection operators are symmetric, we have $\dd \loss_{\veca}(\Pi_{a}\vecv)=\langle(\partialbf_{\vecx}\loss)_{\veca};\Pi_{a}\vecv\rangle=\langle(\Pi_{a}(\partialbf_{\vecx}\loss)_{\veca});\vecv\rangle$, from which we get
\begin{align}
    \Grad^{\text{sph}}_{\veca}(\loss) = \Pi_{\veca}(\partialbf_{\vecx} \loss)_{\veca}\,.
    \label{eq:gradient_sphere}
\end{align}

\vspace{.2cm}

\textbf{Metric and geodesics:} So far, we have been assuming that the tangent spaces $T_{\veca}\euclidean$ have linear and Euclidean structures, in the sense that each $T_{\veca}\euclidean$ naturally inherits the same inner product function as $\euclidean$ for every point $\veca$ under its isomorphism with $\euclidean$. 
It turns out that this fact is the distinguishing feature of Euclidean geometry. A function that associates a positive definite inner product with each point in $\euclidean$ is called a \emph{metric}, and the metric that chooses the standard Euclidean inner product for every $T_{\veca}\euclidean$ is called the \emph{standard metric of }$\euclidean$. 
Fixing a choice of metric for $\euclidean$ transforms it into a \emph{Riemannian manifold}. 
Among other things, the standard metric allows us to define the (standard) length of a differentiable curve $\tau: [0, \, 1] \rightarrow \euclidean$ by integrating $\norm{\tau^{\prime}}$ over its domain, where the norm is defined using the inner product function given above for the tangent space of every point in $\tau$.
Since each $T_{\veca}\sphere$ is a linear subspace of $T_{\veca}\euclidean$ for every $\veca\in\sphere$, it inherits the inner product function of $T_{\veca}\euclidean$, allowing us to define a metric for $\sphere$ called the \emph{round metric}. 
Therefore, the length of a curve $\gamma:[0,1]\rightarrow\sphere$ in the round metric has the same form as if we computed it as a curve in $\euclidean$, given by
\begin{align}
L(\gamma) \coloneqq \int_{0}^{1} \dd{\eta} \, \sqrt{\langle\gamma^{\prime}(\eta);\gamma^{\prime}(\eta)\rangle}\,.
\label{eq:length_curve}
\end{align}
Using this definition of length, we can define a curve that is the analog of a straight line but in $\sphere$, called a \emph{geodesic}, which is a curve on $\sphere$ having minimal length between any pair of its points. 
For each point $\veca\in\sphere$ and each tangent vector $\vecv\in T_{\veca}\sphere$, there is a unique geodesic curve $\gamma:[0,1]\rightarrow\sphere$ with $\gamma(0)=\veca$ and $\gamma^{\prime}(0)=\vecv$, and it is obtained by minimizing $L(\gamma)$ over the set of all possible curves that satisfy these conditions. 
It turns out that this problem is equivalent to finding the minimum of the functional
\begin{align}
E(\gamma) \coloneq \int_{0}^{1} \dd{\eta} \, \left(\frac{1}{2}\langle\gamma^{\prime}(\eta);\gamma^{\prime}(\eta)\rangle + \lambda \left(\langle\gamma(\eta);\gamma(\eta)\rangle - 1\right)\right),
\end{align}
where the curve $\gamma$ represents the trajectory of a free particle, with the first term in parentheses being the particle's energy and the second term being a constraint term imposing that this particle should stay on $\sphere$ at all times, with $\lambda$ being a Lagrange multiplier. 
The solution to this problem is given by solving the Euler-Lagrange equation $\gamma^{\prime\prime} = \lambda \, \gamma$ with the initial conditions $\gamma(0) = \veca$ and $\gamma^{\prime}(0)=\vecv$, resulting in
\begin{align}
\gamma(\eta) = \cos(\eta \, \norm{\vecv}) \, \veca + \sin(\eta \, \norm{\vecv}) \, \frac{\vecv}{\norm{\vecv}} \, .
\label{eq:gamma_solution}
\end{align}
Equation \eqref{eq:gamma_solution} shows that the geodesics of $\sphere$ are great circles, \ie the intersection of the hypersphere with a hyperplane that contains the initial point and the center of the hypersphere.

\vspace{.2cm}

\textbf{Exponential map and parallel transport:} As we have already said, the geodesics of the Riemannian manifold are analogs of straight lines in Euclidean spaces. 
This allows us to define the analog of the straight motion of geometric objects on the manifold. 
Formally, when the transported object is a point this is described by the \emph{Riemannian exponential map} $\expmap_{\veca}: T_{\veca}\sphere \rightarrow \sphere$, that takes a velocity vector $\vecv \in T_{\veca}\sphere$ at the point $\veca$ and maps it to the endpoint $\gamma(1) \in \sphere$ of the geodesic defined by the initial conditions $\gamma(0) = \veca$ and $\gamma^{\prime}(0) = \vecv$. 
To see how to transport tangent vectors, note that the equation of the geodesic curve corresponds to the rotation formula for an arbitrary point in $\euclidean$ on the 2-plane generated by the unit vectors $\veca$ and $\vecv/\norm{\vecv}$. 
If we look at the effect of this rotation on the hyperplane $T_{\veca}\sphere$, as a subset of $\euclidean$, we see that it not only changes the tangency point but also changes the relative orientation of the hyperplane so that it remains tangent to $\sphere$. 
Under this rotation, the straight line generated by a vector $\vecu\in T_{\veca}\sphere$ that is orthogonal to the vector $\vecv$ is mapped to a parallel line. 
This observation motivates the definition of a \emph{parallel transport map} $\mathcal{T}_{\eta \vecv}:T_{\veca}\sphere\rightarrow T_{\gamma(\eta)}\sphere$ given by the 
\begin{align}
\mathcal{T}_{\eta \vecv}(\vecu)  &\coloneq \left(\vecu - \frac{\langle\vecv;\vecu\rangle}{\norm{\vecv}^{2}} \, \vecv \right) + \frac{\langle\vecv;\vecu\rangle}{\norm{\vecv}^{2}} \, \gamma^{\prime}(\eta) \notag \\
&= \vecu - \sin(\eta \, \norm{\vecv}) \, \frac{\langle\vecv;\vecu\rangle}{\norm{\vecv}} \, \veca + (\cos(\eta \, \norm{\vecv}) - 1) \, \frac{\langle\vecv;\vecu\rangle}{\norm{\vecv}^{2}} \, \vecv \, ,
\label{eq:parallel_transport_definition}
\end{align}
where the expression in parentheses is the component of $\vecu$ that is invariant under this rotation. This map is called parallel transport because it changes the tangent vectors in such a way that their relative orientation w.r.t. the tangent hyperplane is the same before and after moving along a geodesic.

\vspace{.2cm}

\textbf{The differential:} To use these results in our method, we need to describe how to connect the description of $\sphere$ as a subspace of $\euclidean$ and its parameterization in hyperspherical coordinates. 
As we will show shortly, a parameterization of a $d_{1}$-dimensional surface $M$ in $\euclidean$ can be realized as a bijection between it and a space of parameters, which is usually some subset of $\mathbb{R}^{d_{1}}$. 
We can use this function to translate back and forth between parameters and points in our surface, so it would be convenient if we could also use this function to somehow translate geometric objects defined in points of our surface into geometric objects in the space of parameters.
This is done using the differential, which we will now proceed to define.
Let $\mathbf{f}:\mathbb{R}^{d_{1}}\rightarrow\mathbb{R}^{d_{2}}$ be a smooth map and $\veca\in\mathbb{R}^{d_{1}}$ be a point, the \emph{differential of $\mathbf{f}$} at $\veca$ is the linear map $\text{d}\mathbf{f}_{\veca}:T_{\veca}\mathbb{R}^{d_{1}}\rightarrow T_{\mathbf{f}(\veca)}\mathbb{R}^{d_{2}}$ defined by $\text{d}\mathbf{f}_{\veca}(\vecv)\coloneq(\partial_{t}\mathbf{f}(\veca+t\vecv))_{0}$ (in case the meaning of this expression is unclear, recall the notational conventions for derivatives given in the main text).
By the chain rule, we get $\text{d}\mathbf{f}_{\veca}(\vecv)=\langle(\partialbf_{\vecx}\mathbf{f})_{\veca};\vecv\rangle$. 
Expanding both $\mathbf{f}$ and $\vecv$ in their respective canonical bases, we obtain the vector $(\sum^{n}_{\ell=1}(\partial_{x_{\ell}}f_{j})_{\veca}v_{\ell})^{n}_{j=1}$, from which we can read the matrix representation $\jacobian_{\mathbf{f}}(\veca)$ of $\text{d}\mathbf{f}_{\veca}$ with coefficients $(j,\ell)$ given by $(\partial_{x_{\ell}}f_{j})_{\veca}$ called the \emph{Jacobian matrix of $\mathbf{f}$} at $\veca$.

\vspace{.2cm}
 
\textbf{Hyperspherical coordinates:} To show how the differential does the translation of geometric objects between a surface and its parameter space, we will use as an example the hyperspherical coordinate system on $\sphere$.
These coordinates are defined by choosing the ordered sequence $(\mathbf{e}_{1},\dots,\mathbf{e}_{d})$ of basis vectors and applying the iterative process described in Ref.~\cite{Blumenson1960}, producing a map $\vecx: D \subset \mathbb{R}^{d-1}\rightarrow\sphere$ where
\begin{align}
\vecx(\thetabf) \coloneqq \cos(\theta_{1}) \, \mathbf{e}_{1} + \sum_{j=1}^{d-2} \, \left(\prod_{\ell=1}^{j} \, \sin(\theta_{\ell})\right) \cos(\theta_{j+1}) \, \mathbf{e}_{j+1} + \left(\prod_{j=1}^{d-1} \, \sin(\theta_{j})\right) \, \mathbf{e}_{d} \, ,
\label{eq:hyperspherical_coordinates}
\end{align}
such that $D\coloneq[0, \, \pi]^{d-2} \cross [0, \, 2\pi)$ with $\thetabf = (\theta_{j})_{j=1}^{d-1}$ being a vector in $D$ representing hyperspherical angles. 
By construction, this map is smooth and has a continuous inverse, but it does not have a smooth inverse over all of $D$. To see this, we can compute the differential $\text{d}\vecx_{\boldsymbol{\phi}}$ at a generic point $\phi\in D\subset\mathbb{R}^{d-1}$ and see if the Jacobian matrix $\jacobian_{\vecx}(\boldsymbol{\phi})$ \footnote{which is the same Jacobian matrix as the one found in the main text just below Eq.\ref{eq:update_exact}, but without the index $\vecx$, which is dropped for convenience. For explicit formulas, see Eq.\eqref{eq:jacob_matrix} in App.\ref{app:jacobian_regularization}} has full rank there. 

\vspace{.2cm}

\textbf{Tangent vectors and pushforwards:} From now on, we will only consider the points of $\boldsymbol{\phi}\in \mathbb{R}^{d-1}$ for which the Jacobian has full rank.
For these points, the action of the differential on the canonical basis vectors of $T_{\boldsymbol{\phi}}\mathbb{R}^{d-1}$ defines a basis on the tangent spaces $T_{\vecx(\boldsymbol{\phi})}\sphere$ called the \emph{hyperspherical coordinate basis}. 
Therefore, the differential \quotes{pushes} of any vector in $T_{\boldsymbol{\phi}}\mathbb{R}^{d-1}$ to some vector in $T_{\vecx(\boldsymbol{\phi})}\sphere$, which is why $\text{d}\vecx_{\boldsymbol{\phi}}$ it is also called the \emph{pushforward} of $\vecx$ at the point $\boldsymbol{\phi}$, where \quotes{forward} refers to the choice of initial and final spaces being the same as the ones for the function $\vecx$.

\vspace{.2cm}

\textbf{Pullback of the metric:} To see what the differential does to the metric on $\sphere$, let $\veca\in\sphere$ be a point for which there is some $\boldsymbol{\phi}$ with $\boldsymbol{\phi}\coloneq\vecx^{-1}(\veca)$. 
Since we can push any pair of vectors $\vecv,\vecu\in T_{\boldsymbol{\phi}}\mathbb{R}^{d-1}$ to a pair of vectors $\text{d}\vecx_{\boldsymbol{\phi}}(\vecv),\text{d}\vecx_{\boldsymbol{\phi}}(\vecu)\in T_{\veca}\sphere$, we can define an inner product function $\langle\,\,;\,\rangle_{\boldsymbol{\phi}}:(T_{\boldsymbol{\phi}}\mathbb{R}^{d-1})^{2}\rightarrow\mathbb{R}$ by $\langle\vecv;\vecu\rangle_{\boldsymbol{\phi}}\coloneq\langle\text{d}\vecx_{\boldsymbol{\phi}}(\vecv);\text{d}\vecx_{\boldsymbol{\phi}}(\vecu)\rangle_{\veca}$ for all $\vecv,\vecu\in T_{\boldsymbol{\phi}}\mathbb{R}^{d-1}$.
This inner product is called the \textit{pullback} of the Euclidean inner product give in $T_{\veca}\sphere$, since it is defined by \quotes{pulling back} the Euclidean inner product from $T_{\veca}\sphere$ to $T_{\boldsymbol{\phi}}\mathbb{R}^{d-1}$.
The matrix representation of this inner product is the \emph{ metric tensor for the hyperspherical coordinates} and is given by $\metric(\boldsymbol{\phi})\coloneq\jacobian_{\vecx}(\boldsymbol{\phi})^{T}\jacobian_{\vecx}(\boldsymbol{\phi})$, which is the same matrix as the one given in Eq.\ref{eq:metric_components} when evaluated on the fixed HW ansatz state.

\vspace{.2cm}

\textbf{Pullback of the natural gradient:} With the pullback of the metric, we can also express the natural gradient of a function $h$ on $\sphere$ at a point $\mathbf{a}$ as a vector in $T_{\boldsymbol{\phi}}\mathbb{R}^{d-1}$, which amounts to defining a \quotes{pullback} for the natural gradient.
First, let $\Tilde{\loss}(\thetabf)\coloneq(\Tilde{\loss}\circ\vecx)(\thetabf)=\loss(\vecx(\thetabf))$, which means that $\Tilde{\loss}:\mathbb{R}^{d-1}\rightarrow\mathbb{R}$ is a smooth function between Euclidean spaces. 
By the definition of the differential we have $\text{d}\Tilde{\loss}_{\boldsymbol{\phi}}(\vecv)=\langle(\partialbf_{\thetabf}\Tilde{\loss})_{\boldsymbol{\phi}};\vecv\rangle$ for all $\vecv\in T_{\boldsymbol{\phi}}\mathbb{R}^{d-1}$.
It is tempting to read the vector $(\partialbf_{\thetabf}\Tilde{\loss})_{\boldsymbol{\phi}}$ as the expression of the natural gradient on $\sphere$ as a vector in $T_{\boldsymbol{\phi}}\mathbb{R}^{d-1}$.
However, since the Euclidean norm in $T_{\boldsymbol{\phi}}\mathbb{R}^{d-1}$ is not the same as the pullback over $\vecx$ of the Euclidean inner product on $\sphere$, the correct expression for the natural gradient must be a vector $\vecu$ that satisfies the equation $\langle\vecu;\vecv\rangle_{\boldsymbol{\phi}}=\langle(\partialbf_{\thetabf}\Tilde{\loss})_{\boldsymbol{\phi}};\vecv\rangle$.
This guarantees that the image of $\vecu$ under the action of $\text{d}\vecx_{\boldsymbol{\phi}}$ is exactly given by $\Grad^{sph}_{\veca}(\loss)$, or in other words, that $\vecu$ is the \textit{pullback} of $\Grad^{sph}_{\veca}(\loss)$ under $\vecx$.
Notice that it is a pullback instead of a pushforward.
This is a consequence of the fact that the natural gradient is the dual vector of a linear functional, a.k.a. a pseudovector, and as such, it is transformed via a pullback under $\vecx$, not a pushforward.

\vspace{.2cm}

\textbf{Natural gradient in different basis:} Using the metric tensor $\metric(\boldsymbol{\phi})$, we can see that the matrix used to express the natural gradient on $\sphere$ as an element in $T_{\boldsymbol{\phi}}\mathbb{R}^{d-1}$ is given by $\metric^{-1}(\boldsymbol{\phi})(\partialbf_{\thetabf}\Tilde{\loss})_{\boldsymbol{\phi}}$, where we do a slight abuse of notation by identifying a vector with its matrix representation. 
This allows us to write the matrix representation of the natural gradient on $\sphere$, now again as a vector in $T_{\veca}\sphere$, but expressed in terms of hyperspherical coordinates, by applying the pushforward under $\vecx$ giving us $\jacobian_{\vecx}(\boldsymbol{\phi})\metric^{-1}(\boldsymbol{\phi})(\partialbf_{\thetabf}\Tilde{\loss})_{\boldsymbol{\phi}}$ which is the same expression found in the main text just below Eq.\ref{eq:update_exact} where the index $\vecx$ was dropped and the substitution $\boldsymbol{\phi}\rightarrow\thetabf_{t}$ was made.

\vspace{.2cm}

This finishes our discussion of all of the necessary background information to understand how our approach to VQA works. 
Next, we will briefly discuss how to extend our method to states with complex amplitudes. 
It will be slightly more technical and will invoke some results without stating them.

\vspace{.2cm}

\textbf{Spheres and the space of quantum states:} It is well known that the space of pure quantum states over $d$ levels is the \emph{complex projective space} $\mathbb{CP}^{d-1}$, which is a Riemannian manifold with the Fubini-Study metric~\cite{Bengtsson2017}. 
It is also a well known result that $\mathbb{CP}^{d-1}=\mathbb{S}^{2d-1}/U(1)$, where $U(1)$ acts as multiplication by a global phase. 
The Fubini-Study metric is the quotient metric associated with the round metric on $\mathbb{S}^{2d-1}$ \cite[Example 2.30]{Lee2018}. 
If we use the general HWE encoder as our ansatz~\cite{Farias2025} to prepare a parametrized quantum state on $\mathbb{CP}^{d-1}$, we can use the fact that a global phase $\varphi_{global}$ is not an observable to prove that for any loss function $\loss(\thetabf,\boldsymbol{\varphi})$, $\partial_{\varphi_{global}}\loss(\thetabf, \, \boldsymbol{\varphi})=0$. This implies that the natural gradient of $\loss$ in $\mathbb{S}^{2d-1}$ does not have any component in the direction of the global phase, so the natural gradient on $\mathbb{CP}^{d-1}$ is given by the same vector \cite[Section 3.6.2]{absil2008}, and a geodesic on $\mathbb{S}^{2d-1}$ that is defined by some natural gradient in $\loss$ is mapped into a geodesic on $\mathbb{CP}^{d-1}$ since the differential of the quotient map from $\mathbb{S}^{2d-1}$ to $\mathbb{CP}^{d-1}$ is a Riemannian isometry~\cite{Bengtsson2017}. Therefore, we can do our optimization completely by embedding the state vectors in the parameterization of $\mathbb{S}^{2d-1}$ given by the ansatz after removing a global phase from the initial data vector.

\section{Barren-plateau analysis}
\label{app:barren-plateau}

In this appendix, we show that the amplitude encoder in hyperspherical coordinates allows exact computation of the variance of a loss function of the type $\loss = \bra{\psi} H \ket{\psi}$. 
The result is summarized in the following theorem: 
\begin{theorem}
    Let $H$ be a real-valued Hamiltonian, $\ket{\psi}$ an $n$-qubit quantum state amplitude-encoded in hyperspherical coordinates and with support in a $d$-dimensional (sub)space, and $\loss = \bra{\psi} \, H \, \ket{\psi}$ a loss function. 
    Denote by $\vecv$ the Riemannian gradient of $\loss$, and by $v_{j}$ its $j$th component. 
    Then, 
    \begin{gather}\label{eq:varL_result}
    \begin{aligned}
        \Var[\loss] = \frac{d \, \trace(H^{2}) - \trace(H)^{2}}{d^{2} \, (d + 2) / 2} \, ,
    \end{aligned}
    \end{gather}    
    and
    \begin{gather}\label{eq:var_result}
    \begin{aligned}
        \Var[v_{j}] = \frac{d \, (d + 2) \, \norm{h_{j}}_{2}^{2} - 2 \, (d + 2) \, H_{jj} \, \trace(H) + \trace(H)^{2}+ 2 \, \trace(H^{2})}{d \, (d + 2) \, (d + 4) / 4} \, ,
    \end{aligned}
    \end{gather}
    where $h_{j}$ is the $j$-th row of $H$, and $\norm{\cdot}_{{2}}$ is the Euclidean norm.
\end{theorem}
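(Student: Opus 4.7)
My approach leverages that, with the hyperspherical amplitude encoder of Ref.~\cite{Farias2025}, the state vector $\vecx\in\sphere\subset\mathbb{R}^d$ is the natural set of variables. Assuming the parameter initialization pulls back to the uniform (round) measure on $\sphere$, every variance reduces to polynomial contractions of $H$ against the spherical moments $\mathbb{E}[x_i x_j]=\delta_{ij}/d$, $\mathbb{E}[x_i x_j x_k x_l]=(\delta_{ij}\delta_{kl}+\delta_{ik}\delta_{jl}+\delta_{il}\delta_{jk})/[d(d+2)]$, and the analogous sixth-moment formula given as a sum over the $15$ perfect matchings of six indices with prefactor $1/[d(d+2)(d+4)]$. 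Writing $\loss=\vecx^T H\vecx$ and, via the projection formula of App.~\ref{sec:calculate_from_g}, the Riemannian gradient as $\vecv = 2(I-\vecx\vecx^T)H\vecx$, the entire proof becomes a Wick-like bookkeeping exercise.

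\textbf{Variance of $\loss$.} I would first compute $\mathbb{E}[\loss]=\trace(H)/d$ from the second moment, then obtain $\mathbb{E}[\loss^2]=[\trace(H)^2+2\trace(H^2)]/[d(d+2)]$ by contracting the fourth-moment formula against $H_{ij}H_{kl}$ and using $H^T=H$ to identify the three Wick pairings. Subtracting $\mathbb{E}[\loss]^2=\trace(H)^2/d^2$ and placing the result over the common denominator $d^2(d+2)$ directly gives~\eqref{eq:varL_result}.

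\textbf{Variance of $v_j$.} Because $v_j = 2(H\vecx)_j - 2(\vecx^T H\vecx)\,x_j$ is odd in $\vecx$, invariance of the round measure under $\vecx\mapsto -\vecx$ yields $\mathbb{E}[v_j]=0$, so $\Var[v_j]=\mathbb{E}[v_j^2]$. Expanding the square produces three structurally distinct pieces: $(H\vecx)_j^2$ (second moment, contributing $\|h_j\|_2^2/d$); the cross term $(H\vecx)_j(\vecx^T H\vecx)x_j$ (fourth moment, contributing $[H_{jj}\trace(H)+2\|h_j\|_2^2]/[d(d+2)]$); and $(\vecx^T H\vecx)^2 x_j^2$, which requires the sixth moment. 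For the last piece, I would partition the $15$ matchings into the $3$ that pair the two copies of index $j$ with each other (yielding $\trace(H)^2+2\trace(H^2)$) and the remaining $12$ that pair them separately with the indices of $H$ (yielding, after exploiting $H^T=H$, the combination $4H_{jj}\trace(H)+8\|h_j\|_2^2$). Assembling the three pieces with the correct signs and coefficients from $v_j^2=4(H\vecx)_j^2 - 8(H\vecx)_j(\vecx^T H\vecx)x_j + 4(\vecx^T H\vecx)^2 x_j^2$, placing them over the common denominator $d(d+2)(d+4)$, and simplifying the numerator should reproduce exactly~\eqref{eq:var_result}.

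\textbf{Main obstacle.} The one delicate step is keeping the sixth-moment sum organized: naively one must track all $15$ matchings contracted against $H_{ab}H_{cd}$ and the two $\delta_{j\cdot}$ factors coming from $x_j^2$. The saving grace is that $H^T=H$ collapses the twelve ``mixed'' matchings into a handful of symmetric contractions, so the computation reduces to assembling just four scalar objects ($\trace(H)$, $\trace(H^2)$, $H_{jj}\trace(H)$, and $\|h_j\|_2^2$) with small integer coefficients whose cancellations yield the clean numerator of~\eqref{eq:var_result}. A minor preliminary point is justifying that the parameter initialization pulls back to the round measure on $\sphere$; this follows from the explicit Jacobian $\prod_\ell \sin^{d-1-\ell}\!\theta_\ell$ of the Blumenson-type coordinate map used by the encoder.
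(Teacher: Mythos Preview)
Your proposal is correct and follows essentially the same route as the paper: both express $\loss=\vecx^T H\vecx$ and $\vecv=2(\identity-\vecx\vecx^T)H\vecx$, invoke the closed-form spherical moments (odd moments vanish; second, fourth, and sixth moments as sums over pairings with the standard prefactors), and then carry out the Wick-type bookkeeping for the three pieces of $v_j^2$, including the same partition of the $15$ sixth-moment pairings into those that pair the two $j$-indices together versus separately. The paper further splits your ``remaining $12$'' into $4+8$ according to whether the leftover pair stays within one $H$ factor or links the two, but this is purely organizational and yields the same $4H_{jj}\trace(H)+8\|h_j\|_2^2$ you quote.
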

\begin{proof}
    In terms of the amplitudes $\vecx \in \sphere$, the loss function takes the quadratic form $\loss(\vecx) = \vecx^{T} \, H \, \vecx$ and the Riemannian gradient is $\vecv = \Pi_{\vecx} \, \partialbf_{\vecx} \loss = 2 \, (\identity - \vecx \vecx^{T}) \, H \, \vecx$, where $\partialbf_{\vecx} \loss = 2 \, H \, \vecx$ is the Euclidean gradient and $\Pi_{\vecx} \coloneqq \identity - \vecx \vecx^{T}$ is the orthogonal projector onto the tangent space to the sphere at $\vecx$. 
    The variance of $\loss$ and $v_{j}$, are, respectively, $\Var[\loss] = \mathbb{E}[\loss^{2}] - \mathbb{E}[\loss]^{2}$ and $\Var[v_{j}] = \mathbb{E}[v_{j}^{2}] - \mathbb{E}[v_{j}]^{2}$.
    The expectation values $\mathbb{E}[f(\vecx)] \coloneqq \int_{\sphere} \, \dd\mu(\vecx) \, f(\vecx)$ are w.r.t. the uniform measure $\dd\mu(x)$ on $\sphere$~\cite{Bengtsson2017}. 
    All the expectation values of interest can be expressed as linear combinations of moments of the random variables $x_j$ (\eg, see Ref.~\cite{fang2018}). 
    All \emph{odd} moments of $\vecx$ vanish, \ie
    \begin{align}\label{eq:odd_moments}
        \mathbb{E}[x_{i_1}x_{i_2}\cdots x_{i_{\ell}}] = 0 \quad \forall \, \ell\text{ odd}\,,
    \end{align}
    while all the even moments have a closed-form expression. 
    In particular,
    \begin{spreadlines}{2ex}
        \begin{align}\label{eq:even_moments}
            \mathbb{E}[x_i \, x_j] &= \frac{\delta_{ij}}{d} \notag\\
            \mathbb{E}[x_i\, x_j \, x_k \, x_\ell] &= \frac{\delta_{ij} \, \delta_{k\ell} + \delta_{ik} \, \delta_{j\ell} + \delta_{i\ell} \, \delta_{jk}}{d \, (d + 2)} \\
            \mathbb{E}[x_i \, x_j\,x_k\,x_\ell \, x_p \, x_q] &= \frac{\delta_{ij} \, \delta_{k\ell} \, \delta_{pq} + \text{permutations}}{d \, (d + 2) \, (d + 4)} \, ,\notag
        \end{align}
    \end{spreadlines}
    where $\delta_{ij}$ is the Kronecker symbol and the permutations in the last line denote $14$ similar terms corresponding to all the ways of partitioning the indices $(i,j,k,\ell,p,q)$ into three unordered and disjoint pairs of indices. 
    The calculation of $\Var[\loss]$ is then straightforward:
    \begin{align}
    \Var[\loss] = \sum_{i, j, k, \ell} \, H_{ij} \, H_{k\ell} \, \mathbb{E}[x_i \, x_j \, x_k \, x_\ell] - \left(\sum_{ij} \, H_{ij}\,\mathbb{E}[x_i \, x_j]\right)^{2} = \frac{2 \, \trace(H^{2}) + \trace(H)^{2}}{d \, (d + 2)} - \frac{\trace(H)^{2}}{d^{2}} \, ,
    \end{align}
    from which Eq. \eqref{eq:varL_result} follows.

    For the $j$-th component of the Riemannian gradient, $v_{j}$, we first notice that $\mathbb{E}\big[v_{j}\big]=0$ follows immediately from (\ref{eq:odd_moments}), while
    $\mathbb{E}\big[v_{j}^{2}\big]$ can be computed using (\ref{eq:even_moments}) after noticing that
    $\frac{1}{4}v_{j}^2=(H\vecx)_j^2-2(H\vecx)_j\,x_j\,(\vecx^T H \vecx)+ x_j^2\,(\vecx^T H \vecx)^2$. 
    Namely, the first term is
    \begin{align}\label{eq:var-grad-1}
    \mathbb{E}\big[(H\vecx)_j^2\big] = \sum_{ik}H_{ji}\,H_{jk}\,\mathbb{E}[x_i\,x_k] = \frac{1}{d}\sum_i H_{ji}^2 = \frac{\norm{h_j}_2^2}{d} \, .
    \end{align}
    The second term is
    \begin{gather}\label{eq:var-grad-2}
    \begin{aligned}
    \mathbb{E}\big[(H\vecx)_j \, x_j \, (\vecx^{T} H \vecx)\big] = \sum_{i, k, \ell} H_{ji} \, H_{k \ell}\,\mathbb{E}[x_j \, x_i \, x_k \, x_\ell] =\frac{H_{jj} \, \trace(H) + 2 \, \norm{h_{j}}_2^{2}}{d \, (d + 2)} \, ,
    \end{aligned}
    \end{gather}
    and the last term reads
    \begin{align*}
        \mathbb{E}[x_j^{2}(\vecx^{T} H \vecx)^{2}] &= \sum_{i, k, p, q} \, H_{ik} \, H_{pq} \, \mathbb{E}\big[x_j \, x_j \,x_i \,x_k \,x_p \,x_q\big] =\frac{1}{d \, (d + 2) \, (d + 4)} \sum_{i, k, p, q} \, H_{ik} \, H_{pq} \big(\delta_{ik} \, \delta_{k\ell} \, \delta_{pq} + \text{permutations}\big) \, .
    \end{align*}
    There are $15$ different pairings of the indices $(j_1, j_2, i, k, p, q)$ into triples of unordered disjoint pairs (here we use subscripts $j_{1}, j_{2}$ to distinguish the two occurrences of the same symbol $j$ for counting purposes). 
    Of these, $3$ have $j_1, \, j_2$ paired together: one of the type $(j_1, \, j_2)(i, \, k)(p, \, q)$, leading to $\sum_{i, k, p, q} \, H_{ik} \, H_{pq} \, \delta_{ik} \, \delta_{pq} = \trace(H)^{2}$, 
    and $2$ of the type $(j_1, j_2) \, (i, p) \, (k, q)$, which give $\sum_{i, k, p, q} \, H_{ik} \, H_{pq} \, \delta_{ip} \, \delta_{kq} = \sum_{i,k} \, H_{ik} \, H_{ik} = \trace(H^{2})$; 
    $4$ pairings have each $j$ paired with a different index and the leftovers pair together, \eg $(j_1, i) \, (j_2, \, k)(p, \, q)$, which yield $\sum_{i, k, p, q} \, H_{ik} \, H_{pq} \, \delta_{ji} \, \delta_{jk} \, \delta_{pq} = H_{jj} \, \trace(H)$; 
    and, finally, $8$ pairings have each $j$ paired with a different index with the left-over pair linking the two $H$ factors, \eg $(j_{1}, \, i) \, (j_{2}, \, p) \, (k, \, q)$, which give $\sum_{i, k, p, q} \, H_{ik} \, H_{pq} \, \delta_{ji} \, \delta_{jp} \, \delta_{kq} = \sum_{k} \, H_{jk} \, H_{jk} = \norm{h_{j}}_{{2}}^{2}$.
    It follows that
    \begin{align}\label{eq:var-grad-3}
        \mathbb{E}[x_{j}^{2} \, (\vecx^{T} H \vecx)^{2}] = \frac{\trace(H)^{2} + 2 \, \trace(H^{2}) + 4 \, H_{jj} \, \trace(H) + 8 \, \norm{h_j}_{{2}}^2}{d \, (d + 2) \, (d + 4)}\,,
    \end{align}
    and putting all together leads to $\Var[v_{j}]$ given by Eq. \eqref{eq:var_result}.
\end{proof}

\section{Riemannian optimization in the hypersphere}
\label{app:riemannian_opt_wolfe}

In optimization tasks, the gradient-descent method is the simplest algorithm to find a minimum of sufficiently smooth functions. 
We first recall that our circuit ansatze~\cite{Farias2025} parametrize quantum states $\ket{\psi(\thetabf(\vecx))}$ as points in a manifold that is the surface of a $(d-1)$-dimensional unit sphere $\sphere$.
We also recall that geodesics that connect two points $\vecx_{t}$ and $\vecx_{t+1}$ are not a straight line.
One way to move along a geodesic is to consider the great circle $\gamma(\etat)$ such that $\gamma(0) = \vecx_{t}$ and $\left(\frac{d}{d\etat}\gamma(\vecv_{t})\right)_{\etat = 0} = \zeta_{t}$.
Thus, we can use the great circle given by the \emph{exponential map}~\cite{absil2008,sato2021}: 
\begin{align}
    \vecx_{t+1} = \expmap_{\vecx_{t}}(\etat \, \vecv_{t}) = \cos(\etat \, \norm{\vecv_{t}}) \, \vecx_{t} + \sin(\etat \, \norm{\vecv_{t}}) \, \frac{\vecv_{t}}{\norm{\vecv_{t}}} \, ,
    \label{eq:update_exact_bla}
\end{align}
where $\norm{\vecv_{t}} \coloneqq \sqrt{\langle \vecv_{t} , \vecv_{t} \rangle_{\vecx_{t}}}$, and, in this context, the scaling factor $\etat$ can be understood as a learning rate. 
Equation \eqref{eq:update_exact} defines a closed line on $\sphere$, with the geodesic coming back to the original point when $\etat \, \norm{\vecv_{t}} = 2  \pi  j$, $\forall \, j \in \mathbb{Z}$.

\begin{algorithm}[!t]
\label{alg:conjugate}
\DontPrintSemicolon
{\bf Input:} Loss function $\loss$, transport function $\mathscr{T}^{(t)}$, tolerance $\varepsilon > 0$, maximum number of iterations $N_{\max}$ \;
{\bf Output:} Critical point $\thetabf_{t}^\star$ and $\loss(\thetabf_{t}^\star)$ \;
\caption{\label{alg:r-cg} Riemannian Conjugate Gradient}
\SetKwFunction{OptStepSize}{\text{\sc{OptStepSize}}}
{\;
    $\vecv_0 \leftarrow -\jacobian(\thetabf_{0}) \, \metric^{-1}(\thetabf_{0}) \, (\partialbf_{\thetabf}\loss(\thetabf))_{\thetabf_{0}}$ \;
    $\vecu_0 \leftarrow \vecv_0$ \;
    \For{$t \leftarrow 0, \, 1, \, \ldots, N_{\max}$} {
        $\mathrm{res}\leftarrow \braket{-\vecv_{t}, \, \vecu_{t}}_{\vecx_{t}}^{2} / \norm{\vecu_{t}}^2$ \;
        \If{$\abs{\mathrm{res}} < \varepsilon$}{
            \textbf{break} \;
        }
        $\etat \leftarrow $ \OptStepSize{$\loss, \, \mathscr{T}^{(t)}, \, \vecx_{t}, \, \vecv_{t}, \, \vecu_{t}$} \;
        $\thetabf_{t+1} \leftarrow \thetabf \left(\expmap_{\vecx_{t}}(\etat \, \vecu_{t}) \right)$ (see Eq. \eqref{eq:update_exact}) \;
        $\vecv_{t+1} \leftarrow -\jacobian(\thetabf_{t+1}) \, \metric^{-1}(\thetabf_{t+1}) \, (\partialbf_{\thetabf}\loss(\thetabf))_{\thetabf_{t+1}}$ \;
        $\beta_{t+1} \leftarrow$  from Eqs. \eqref{eq:beta} \;
        $\vecu_{t+1}\leftarrow$ from Eq. \eqref{eq:r-cgd} \;
        $\thetabf_{t}\leftarrow \thetabf_{t+1}; \,\vecv_{t} \leftarrow \vecv_{t+1}; \,\vecu_{t} \leftarrow \vecu_{t+1}; \, \vecx_{t} \leftarrow \vecx_{t+1}$ \;
    }
}   
\KwRet{$\thetabf_{t}$, $\loss(\thetabf_{t})$}
\end{algorithm}

We enhance the optimization scheme in Eq. \eqref{eq:update_exact_bla} using a Riemannian conjugate gradient (CG) method with convergence guarantees for lower-bounded and sufficiently smooth objective functions.
This addition distances the above optimizer even further from the first-~\cite{Stokes2020} and second-order~\cite{Halla2025secondorder} approximations in the literature.
In flat geometry, we formulate the CG as follows.
For $t = 0$, the direction of descent is given by the natural gradient, $\vecv_{0} = -\jacobian(\thetabf_{0}) \, \metric^{-1}(\thetabf_{0})(\partialbf_{\thetabf}\loss(\thetabf))_{\thetabf_{0}}$.
For $t > 0$, CG uses the natural gradient $\vecv_{t}$ and a memory from the previous direction, $\vecu_{t}$, \ie
\begin{align}
    \vecu_{t+1} = \vecv_{t+1} + \beta_{t+1} \, \vecu_{t},
    \label{eq:cgd}
\end{align}
for some choice of $\beta_{t+1} \geq 0$.
Since the memory direction $\vecu_{t}$ in  Eq. \eqref{eq:cgd} was calculated in the previous iteration, the equivalent update in curved space requires the transport of the vector $\vecu_{t}$ along the geodesic connecting $\vecx_{t}$ and $\vecx_{t+1}$ to properly perform the addition of these two vectors. 
Reference~\cite{sato2022} considers a general framework of Riemannian CG methods where they allow for a general map $\mathscr{T}^{(t)}: T_{\vecx_{t}}\mathcal{M} \rightarrow T_{\vecx_{t+1}}\mathcal{M}$ to transport the direction $\vecu_{t} \in  T_{\vecx_{t}}\mathcal{M}$ to $T_{\vecx_{t+1}}\mathcal{M}$,
\begin{align}
    \vecu_{t+1} =  \vecv_{t+1}+\beta_{t+1}s_{t}\mathscr{T}^{(t)}(\vecu_{t}),
    \label{eq:r-cgd}
\end{align}
where $s_{t}$ is a scaling parameter satisfying
\begin{align}
    0 < s_{t} \leq \min \left\{ 1, \frac{\sqrt{\braket{\vecu_{t}, \vecu_{t}}_{\vecx_{t}}}}{\sqrt{\braket{\mathscr{T}^{(t)}(\vecu_{t}), \, \mathscr{T}^{(t)}(\vecu_{t})}_{\vecx_{t+1}}}} \right\} \, .
    \label{eq:st}
\end{align}
While any choice of $s_{t}$ in the above interval suffices, here we choose $s_{t}$ to saturate the inequality on the right-hand side of Eq. \eqref{eq:st}.
In this work, we generate the sequence $\{\vecx_{t}\}$ on the manifold $\mathcal{M}$ from an initial point $\vecx_0 \in \mathcal{M}$ using the exponential map in Eq. \eqref{eq:update_exact}, and the search direction $\vecu_{t}$ lies in the tangent space $\mathrm{T}_{\vecx_{t}}\mathcal{M}$. 
 Here, we use the exact geodesic transport of an arbitrary vector $\zeta_{t} \in \mathrm{T}_{\vecx_{t}}\mathcal{M}$ on $\sphere$ along the geodesic (see Eq. \eqref{eq:parallel_transport_definition}),
\begin{gather}
\begin{aligned}
    \mathscr{T}^{(t)}(\zeta_{t}) \coloneq \mathcal{T}_{\etat \vecu_{t}}(\zeta_{t})\coloneq \zeta_{t} - \sin(\etat \, \norm{\vecu_{t}}) \, \frac{\braket{\vecu_{t}, \, \zeta_{t}}_{\vecx_{t}}}{\norm{\vecu_{t}}} \, \vecx_{t} + (\cos(\etat \, \norm{\vecu_{t}}) - 1) \, \frac{\braket{\vecu_{t}, \, \zeta_{t}}_{\vecx_{t}}}{\norm{\vecu_{t}}^{2}} \, \vecu_{t} \, .
    \label{eq:parallel-transport}
\end{aligned}
\end{gather}
In addition, we need to choose $\beta_{t+1}$ accordingly. Here, we consider one hybrid Riemannian CG method from Ref.~\cite{sato2022} (initially proposed for Euclidean optimization in Ref.~\cite{dai2001}), using the choices $\beta_{t+1}$ from Dai and Yuan~\cite{dai1999} as well as Hestenes and Stiefel~\cite{hestenes1952},
\begin{spreadlines}{3ex}
\begin{gather}
\begin{aligned}
    &\beta_{t+1} \coloneq \max\big\{0, \, \min\{\beta_{t+1}^{\text{DY}}, \, \beta_{t+1}^{\text{HS}}\}\big\} \, ;\\
    &\beta_{t+1}^{\text{DY}} \coloneqq \frac{\braket{-\vecv_{t+1}, \, -\vecv_{t+1}}_{\vecx_{t+1}}}{\braket{-\vecv_{t+1}, \, s_{t} \, \mathcal{T}_{\etat \vecu_{t}}(\vecu_{t})}_{\vecx_{t+1}} -\braket{-\vecv_{t}, \, \vecu_{t}}_{\vecx_{t}}} \, ; \\ 
    &\beta_{t+1}^{\text{HS}} \coloneqq \frac{\braket{-\vecv_{t+1}, \, -\vecv_{t+1}}_{\vecx_{t+1}} - \braket{-\vecv_{t+1}, \, \ell_{t} \, \mathcal{T}_{\etat \vecu_{t}}(-\vecv_{t})}_{\vecx_{t+1}}}{\braket{-\vecv_{t+1}, \, s_{t} \, \mathcal{T}_{\etat \vecu_{t}}(\vecu_{t})}_{\vecx_{t+1}} - \braket{-\vecv_{t}, \, \vecu_{t}}_{\vecx_{t}}} \, ,
    \label{eq:beta}
\end{aligned}
\end{gather}
\end{spreadlines}
where $\ell_{t}$ has a similar role to Eq. \eqref{eq:st}, but for the transport of the gradient.

So far, we assumed we have an appropriate step size $\etat >0$ --- also known as the learning rate.
However, to guarantee the convergence properties at each step, we need a step size that satisfies the $\mathscr{T}^{(t)}$-Wolfe conditions (\emph{cf.} Ref.~\cite{sato2022}),
\begin{align}
    \loss\left(\thetabf(\vecx_{t+1})\right) - \loss\left(\thetabf(\vecx_{t})\right) &\leq c_{1} \, \etat \, \braket{-\vecv_{t}, \, \vecu_{t}}_{\vecx_{t}} \, , 
    \label{eq:wolfe-1} 
    \\
    \big|\!\braket{-\vecv_{t+1}, \, \mathcal{T}_{\etat \, \vecu_{t}}(\vecu_{t})}_{\vecx_{t+1}}\!\big| & \leq c_{2} \, \big| \braket{-\vecv_{t}, \, \vecu_{t}}_{\vecx_{t}}\big| \, , 
    \label{eq:wolfe-2}
\end{align}
where $0 < c_{1} < c_{2} < 1$. 
In our numerical runs, we chose $c_{1} = 0.485$ and $c_{2} = 0.999$, and the subroutine {\sc OptStepSize} in Alg. \ref{alg:r-cg} can be seen as a standard backtracking algorithm ~\cite{truong2021}, outputting a learning rate that satisfies the strong Wolfe conditions in Eqs. \eqref{eq:wolfe-1}-\eqref{eq:wolfe-2}. 
We can modify this subroutine in problem-specific cases. We considered three heuristics: 
($i$) search for a learning rate in a region given by the power method, \ie, we start with
\begin{align}
\etat^{PM} = c_3 \norm{\vecu_{t}}^{-1} \, \arccos\left[\left(1+\left(\frac{\norm{\vecu_{t}}}{2\, \loss(\theta_{t})}\right)^{2}\right)^{-1/2}\right],
\end{align}
and then backtrack until Eqs. \eqref{eq:wolfe-1}-\eqref{eq:wolfe-2} are satisfied, with $c_3 > 1$ being problem-dependent --- for the spin chains, $c_3=1$ behaved well;
($ii$) in the case of molecules, a heuristic choice for the backtrack starting point that proved to work was $\max \left( \etat^{PM}, \frac{\pi}{4 \norm{\vecu_{t}}} \right)$;
($iii$) perform a \emph{golden-section search}~\cite{press2007} on $\loss\left(\thetabf(\vecx_{t+1})\right)$ to guarantee Eq. \eqref{eq:wolfe-1} and then backtrack until both conditions are satisfied.
Although all the strategies work, the amount of resources necessary for each one may vary in a problem-dependent way, and we reported the results that yielded the best results with the minimum resources.
A description of the entire procedure is presented in Alg. \ref{alg:conjugate}.

\section{Relation between EGT and QITE}
\label{app:relation}

Here, we show how the EGT, at each epoch, is equivalent to a first-order approximation of quantum imaginary-time evolution (QITE).
This result holds for the choices of loss function and Hamiltonians in the main text.
If the Hamiltonian is a projector, then each step of the EGT is \emph{exact} QITE.

The \quotes{classical} version of the loss function presented in the main text is the so-called \emph{Rayleigh quotient}, defined as
\begin{align}
    \operatorname{R}(\vecx) = \frac{\vecx^{T} \, H \, \vecx}{\vecx^{T}\vecx} \, .
    \label{eq:rayleigh_quotient}
\end{align}
The gradient of the Rayleigh quotient can be expressed as
\begin{align}
    \partialbf_{\vecx} \operatorname{R}(\vecx) &= \frac{2}{\normx^{2}} \, (H - \operatorname{R}(\vecx)) \, \vecx = 2 \, (H - \operatorname{R}(\vecx)) \, \vecx \, ,
    \label{eq:rayleigh_quotient_gradient}
\end{align}
where the second equality is valid when $\normx = 1\, , \, \forall \, \vecx$, and the norm was defined in the main text.
However, we can see that the loss function $\loss$ defined in Sec. \ref{sec:results} is
\begin{align}
    \loss(\vecx) &= \bra{\psi(\vecx)} \, H \, \ket{\psi(\vecx)} \nonumber \\
    &= \sum_{\ell, \, m} \, \frac{x_{\ell} \, x_{m}}{\normx^{2}} \, \bra{b_{\ell}} \, H \, \ket{b_{m}} \nonumber \\
    &= \sum_{\ell, \, m} \, \frac{x_{\ell} \, H_{\ell m} \, x_{m}}{\normx^{2}} \nonumber \\
    &\equiv \frac{\vecx^{T} \, H \, \vecx}{\vecx^{T}\, \vecx} = \operatorname{R}(\vecx)
\end{align}
Since the \quotes{quantum} loss $\loss(\vecx)$ and \quotes{classical} Rayleigh quotient $\operatorname{R}(\vecx)$ are effectively the same function, their gradients also match, with their difference being the fact that now the vector $\vecx$ is encoded in the amplitudes of the ($\ell_{2}$-normalized) quantum state $\ket{\psi(\vecx)}$.
Hence, 
\begin{align}
    \partialbf_{\vecx} \, \loss(\vecx) = 2 \, (H - \loss(\vecx)) \, \ket{\psi(\vecx)}
    \label{eq:loss_gradient}
\end{align}
Using \eqref{eq:loss_gradient}, we can calculate the norm of the gradient above as
\begin{align}
    \norm{\partialbf_{\vecx} \, \loss(\vecx)}^{2} = 4 \, \bra{\psi(\vecx)} \left(H^{2} - \loss^{2}(\vecx)\right) \, \ket{\psi(\vecx)} \, .
    \label{eq:norm_gradient}
\end{align}
Now, based on Refs.~\cite{Gluza2025, Suzuki2025}, we define the commutator $W_{H}(\vecx)$ as follows:
\begin{align}
    W_{H}(\vecx) \coloneqq 2 \, \left[\, H, \, \ketbra{\psi(\vecx)} \, \right] \, ,
\end{align}
where $[A, \, B] = A\, B - B \, A$ is the standard commutator.
By its definition above and by Eq. \eqref{eq:loss_gradient}, we can show that the action of $W_{H}(\vecx)$ on $\ket{\psi(\vecx)}$ is equal to the gradient of $\loss$, \ie
\begin{align}
    W_{H}(\vecx) \, \ket{\psi(\vecx)} &= 2 \, \left[\, H, \, \ketbra{\psi(\vecx)} \, \right] \, \ket{\psi(\vecx)} \nonumber \\
    &= 2 \, \left( H \, \ketbra{\psi(\vecx)} - \ketbra{\psi(\vecx)} \, H \right) \, \ket{\psi(\vecx)} \nonumber \\
    &= 2 \, \left( H - \loss(\vecx) \right) \ket{\psi(\vecx)} \nonumber \\
    &= \partialbf_{\vecx} \, \loss(\vecx) \, .
    \label{eq:commutator_on_state}
\end{align}
Using Eqs. \eqref{eq:norm_gradient} and \eqref{eq:commutator_on_state}, we can see that $\ket{\psi(\vecx)}$ is an eigenstate of $W_{H}^{2}(\vecx)$ such that
\begin{align}
    W_{H}^{2}(\vecx) \, \ket{\psi(\vecx)} &= 4 \, \left[\, H, \, \ketbra{\psi(\vecx)} \, \right] \, \left( H - \loss(\vecx)\right) \, \ket{\psi(\vecx)} \nonumber \\
    &= 4 \, \bigg( \loss(\vecx) \, H - \bra{\psi(\vecx)} \, H^{2} \, \ket{\psi(\vecx)} - \loss(\vecx) \, H + \loss^{2}(\vecx) \bigg) \, \ket{\psi(\vecx)} \nonumber \\
    &= - 4 \, \left( \bra{\psi(\vecx)} \, H^{2} \, \ket{\psi(\vecx)} - \loss^{2}(\vecx) \right) \, \ket{\psi(\vecx)} \nonumber \\
    &= - \norm{\partialbf_{\vecx}\,\loss(\vecx)}^{2} \, \ket{\psi(\vecx)} \, .
    \label{eq:eigenvalue_equation}
\end{align}
It immediately follows from Eq. \eqref{eq:eigenvalue_equation} that
\begin{gather}
    \begin{aligned}
        W_{H}^{2\ell}(\vecx) \, \ket{\psi(\vecx)} &= (-1)^{\ell} \, \norm{\partialbf_{\vecx}\,\loss(\vecx)}^{2\ell} \, \ket{\psi(\vecx)} \\
        W_{H}^{2\ell+1}(\vecx) \, \ket{\psi(\vecx)} &= (-1)^{\ell} \, \norm{\partialbf_{\vecx}\,\loss(\vecx)}^{2\ell + 1} \, \frac{W_{H}(\vecx)}{\norm{\partialbf_{\vecx}\,\loss(\vecx)}} \, \ket{\psi(\vecx)} \, .
        \label{eq:powers_of_commutator}
    \end{aligned}
\end{gather}
For $\eta \in \mathbb{R}$, Ref. \cite[Prop. 11 of App. III B]{Gluza2025} has shown that $e^{\eta \, W_{H}(\vecx)} \, \ket{\psi(\vecx)}$ is a first-order approximation of QITE up to corrections of order $\order{\eta^{2}}$.
Reference \cite[Lemma 2]{Suzuki2025} has shown that the same evolution exactly equates QITE if $H$ is either a projector onto a pure state (\ie $H \rightarrow \rho_{t}$, with $\rho_{t}$ being the density matrix of a target state, and $\loss$ becomes the pure state fidelity) or a projection onto a subspace spanned by a set of pure states.
Lastly, we use Eqs. \eqref{eq:commutator_on_state} and \eqref{eq:powers_of_commutator} to show below how, in our case, $e^{\eta \, W_{H}(\vecx)} \, \ket{\psi(\vecx)}$ is equivalent to the exponential map defined in Eq. \eqref{eq:expmap}:
\begin{align}
    e^{\eta \, W_{H}(\vecx)} \, \ket{\psi(\vecx)} &= \sum_{\ell=0}^{\infty} \frac{\eta^{\ell}}{\ell!} \, W_{H}^{\ell}(\vecx) \, \ket{\psi(\vecx)} \label{eq:expansion} \\
    &= \sum_{\ell=0}^{\infty} \, \frac{\eta^{2\ell}}{(2\ell)!} \, W_{H}^{2\ell}(\vecx) \, \ket{\psi(\vecx)} + \sum_{\ell=0}^{\infty} \, \frac{\eta^{2\ell+1}}{(2\ell+1)!} \, W_{H}^{2\ell+1}(\vecx) \, \ket{\psi(\vecx)} \nonumber \\
    &= \left(\sum_{\ell=0}^{\infty} \, \frac{(-1)^{\ell}}{(2\ell)!} \, \left(\eta \, \norm{\partialbf_{\vecx} \, \loss(\vecx)}\right)^{2\ell} \right) \, \ket{\psi(\vecx)} + \left(\sum_{\ell=0}^{\infty} \, \frac{(-1)^{\ell}}{(2\ell+1)!} \, \left(\eta \, \norm{\partialbf_{\vecx} \, \loss(\vecx)}\right)^{2\ell+1} \, \right) \, \frac{W_{H}(\vecx)}{\norm{\partialbf_{\vecx} \, \loss(\vecx)}} \, \ket{\psi(\vecx)} \nonumber \\
    &= \cos(\eta \, \norm{\partialbf_{\vecx} \, \loss(\vecx)}) \, \ket{\psi(\vecx)} + \sin(\eta \, \norm{\partialbf_{\vecx} \, \loss(\vecx)}) \, \frac{\partialbf_{\vecx} \, \loss(\vecx)}{\norm{\partialbf_{\vecx} \, \loss(\vecx)}} \, .
    \label{eq:qite_expmap}
\end{align}
As a consequence of Eq. \eqref{eq:qite_expmap}, for a generic Hamiltonian $H$, each iteration of the EGT optimizer can be interpreted as a first-order approximation of a QITE from the state $\ket{\psi(\vecx)}$ to a linear combination of $\ket{\psi(\vecx)}$ and $\partialbf_{\vecx} \, \loss(\vecx) / \norm{\partialbf_{\vecx} \, \loss(\vecx)}$, with their respective weights characterized by \quotes{imaginary time step} $\eta$.
If the Hamiltonian is a projector, then the QITE is exact.

Note that other methods for implementing \quotes{variational QITE} based on the first-order natural gradient, \eg $\operatorname{VQS}$~\cite{Yuan2019, Gacon2023} and $\operatorname{varQITE}$~\cite{McArdle2019}, can be recovered from Eq. \eqref{eq:qite_expmap}.
Let $\thetabf : \sphere \rightarrow \mathbb{R}^{d-1}$ be the standard hyperspherical coordinate transformation.
The particular form of this map is not relevant for this argument, but for the sake of completeness, see Ref.~\cite{Farias2025}).
For any smooth curve $\gamma: [0,1]\rightarrow\sphere$, the composition $\thetabf \circ \gamma : [0, \ 1] \rightarrow \mathbb{R}^{d}$ is a smooth curve in Euclidean space.
Therefore, we can apply Taylor's theorem to $\gamma$ and build the following first-order approximation:
\begin{align}
\thetabf \circ \gamma(\eta) \approx \thetabf \circ \gamma(0) + \eta \ (\thetabf \circ \gamma)^{\prime}(0) \ .
\end{align}
From Eq. \eqref{eq:qite_expmap}, we have  
\begin{equation}
\label{eq:geodesic_curve}
\gamma(\eta)\coloneq\cos(\eta \, \norm{\partialbf_{\vecx} \, \loss(\vecx)}) \, \vecx_{0} + \sin(\eta \, \norm{\partialbf_{\vecx} \, \loss(\vecx)}) \, \frac{\partialbf_{\vecx} \, \loss(\vecx)}{\norm{\partialbf_{\vecx} \, \loss(\vecx)}} \, 
\end{equation}
where we used the notation $\ket{\psi(\vecx)} \rightarrow \vecx_{0}$ and $\ket{\psi(\thetabf)} \rightarrow \thetabf_{0}$ (\ie, we reverted from Dirac notation to classical vector notation).
This expression implies that $\gamma(0) = \vecx_{0}$, and therefore $\thetabf \circ \gamma(0) = \thetabf(\vecx_{0}) = \thetabf_{0}$.
Also, by the chain rule, we have that
\begin{align}
(\thetabf\circ\gamma)^{\prime}(0) = \jacobian_{\vecx_{0}}^{-1}(\thetabf) \, \gamma^{\prime}(0) = \jacobian_{\vecx_{0}}^{-1}(\thetabf) \, \left(\partialbf_{\vecx}\loss(\vecx)\right)_{\vecx_{0}} \, ,
\end{align}
where we used Eq. \eqref{eq:geodesic_curve} to calculate $\gamma^{\prime}(0)$, and used the definition of $\jacobian^{-1}_{\vecx_{0}}(\thetabf)$ as the coordinate expression of the differential of the map $\thetabf:\sphere \rightarrow \mathbb{R}^{d-1}$.
In other words, we used that $\thetabf : \sphere \rightarrow \mathbb{R}^{d-1}$ is the inverse map of $\vecx : D \subset \mathbb{R}^{d-1} \rightarrow \sphere$, defined by Eq. \eqref{eq:hyperspherical_coordinates}.
Now, from the definition $\metric(\boldsymbol{\phi}) \coloneqq \jacobian_{\vecx}^{T}(\boldsymbol{\phi}) \, \jacobian_{\vecx}(\boldsymbol{\phi})$, we get that
\begin{align}
\jacobian_{\vecx_{0}}^{-1}(\thetabf) = \metric^{-1}(\thetabf) \, \jacobian_{\vecx_{0}}^{T}(\thetabf) \, .
\end{align}
We can now use the identity $\jacobian^{T}_{\vecx_{0}}(\thetabf) \, \left(\partialbf_{\vecx}\loss(\vecx)\right)_{\vecx_{0}} = \partialbf_{\thetabf}\Tilde{\loss}(\thetabf_{0})$, where $\Tilde{\loss}(\thetabf) \coloneqq (\loss \circ \vecx)(\thetabf)$ (see App. \ref{sec:differential_geometry}), to show that
\begin{align}
\thetabf \circ \gamma(\eta) \approx \thetabf_{0} + \eta \, \metric^{-1}(\thetabf) \, \partialbf_{\thetabf} \Tilde{\loss}(\thetabf_{0}) \, .
\label{eq:first_order_theta}
\end{align}
Equation \eqref{eq:first_order_theta} is exactly the quantum natural gradient in $\thetabf$ coordinates, as defined in Ref.~\cite{Stokes2020}, and is equivalent to a first-order approximation of QITE~\cite{Gacon2023, Yuan2019, McArdle2019}.

\section{Global convergence guarantee \emph{vs.} heuristic method}
\label{app:global_convergence}

In this appendix, we analyze the numerical performance of the EGT-CG defined in Eqs. \eqref{eq:expmap} - \eqref{eq:egt-cg-update} when the learning rate(s) $\etat$ and memory scaling factor $\beta_{t + 1}$ are chosen using ($i$) the strong Wolfe conditions described in App. \ref{app:riemannian_opt_wolfe}; or ($ii$) Bayesian optimization techniques~\cite{Movckus1975}.
We tested the Bayesian approach with $25$ and $50$ calls to the loss function to determine the learning rate $\etat$ per epoch $t$.

\begin{figure*}[!t]
    \includegraphics[width=\textwidth]{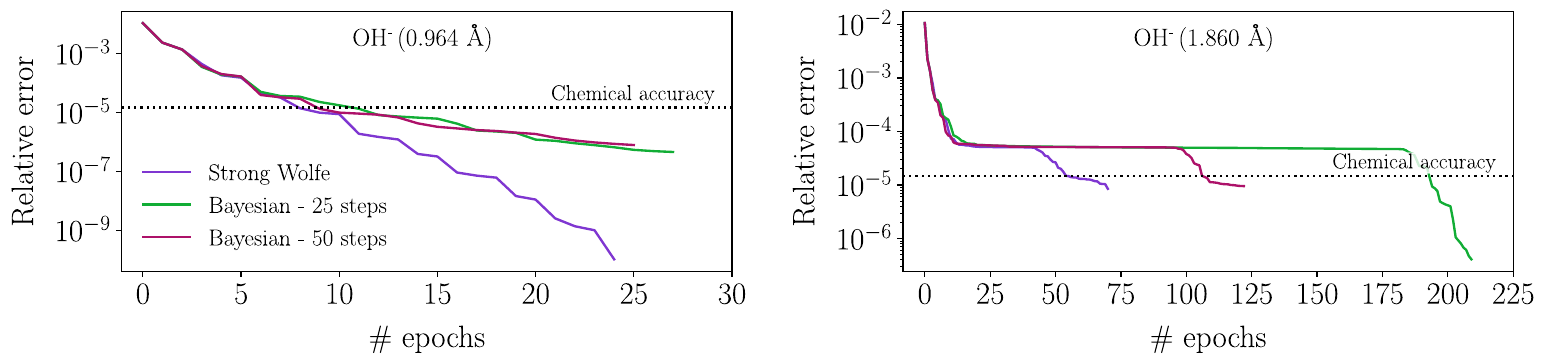}
    \caption{
        \textbf{Relaxation of the global convergence guarantees.}
        Relative ground-state energy estimation error for the \chem{OH^{-}} molecule as a function of the number of epochs, at two different bond lengths.
        Using the Jordan-Wigner transformation and the STO-3G basis set, the ground state is represented by $n = 12$ spin-orbitals as the active space with $k = 10$ electrons.
        Quantum circuit ansatz is the $\operatorname{HWE}_{k}$.
        The optimizer is the EGT-CG with learning-rate scheduling based on the strong Wolfe conditions (purple) and Bayesian optimization with $25$ (green) or $50$ (red) steps.
        Optimization is halted $15$ epochs after chemical accuracy is reached.
        (\emph{left}) Energy estimation at bond length of $0.964 \text{\AA}$. 
        In this case (bond length close to equilibrium), the ground state is not degenerate, and the spectral gap is $0.49\%$ of the ground-state energy.
        The three schedulings perform well, with the strong-Wolfe scheduling slightly outperforming both Bayesian approaches.
        (\emph{right}) Energy estimation at bond length of $1.860 \text{\AA}$.
        In this case, the ground state has degeneracy 6, and the spectral gap is $0.005\%$ of the ground-state energy.
        In this scenario, the strong-Wolfe scheduling reaches chemical accuracy almost $3$ times faster than both scheduling that use Bayesian optimization.
    }
    \label{fig:ohminus_wolfe_bayes}
\end{figure*}

In Fig. \ref{fig:ohminus_wolfe_bayes}(\emph{left}), we show the relative energy-estimation error as a function of the number of epochs in the training procedure.
The \chem{OH^{-}} molecule is at equilibrium bond length ($0.964 \text{\AA}$), and the spectral gap in this case is $\sim 0.49\%$ of the ground energy.
We see that the EGT-CG optimizer reaches chemical accuracy around the same number of epochs ($\sim 10$) for all three methodologies, even though the training method using the strong Wolfe conditions reaches relative energy errors more than $3$ orders of magnitude smaller by the end of training.
We also highlight that for this non-degenerate ground state, the strong Wolfe conditions perform better both in terms of calls to loss per epoch as well as calls to loss overall, since it reaches better precision in the same number of epochs.
The number of calls to the gradient function is similar in all three methods.
The numbers are displayed in Table \ref{tab:resources_oh_minus}.

\begin{table}[!h]
\begin{tabular}{ccccc}
\hhline{=====}
\begin{tabular}[c]{@{}c@{}}Bond\\ length ($\text{\AA}$)\end{tabular} &
  \begin{tabular}[c]{@{}c@{}}Learning rate\\ scheduling\end{tabular} &
  \begin{tabular}[c]{@{}c@{}}Calls\\ to loss\\ (per epoch)\end{tabular} &
  \begin{tabular}[c]{@{}c@{}}Calls\\ to loss\\ (total)\end{tabular} &
  \begin{tabular}[c]{@{}c@{}}Gradient \\ evaluations\end{tabular} \\
\hhline{=====}
\multirow{3}{*}{$0.964$} & Bayesian & $25$       & $669$  & $27$  \\
                         & Bayesian & $50$       & $1198$ & $25$  \\
                         & Strong Wolfe        & $\approx 18$ & $461$ & $24$  \\
\hhline{=====}
\multirow{3}{*}{$1.860$} & Bayesian & $25$       & $5115$ & $209$ \\
                         & Bayesian & $50$       & $5665$ & $122$ \\
                         & Strong Wolfe        & $\approx 17$ & $1245$ & $70$  \\
\hhline{=====}
\end{tabular}
\caption{Resource comparison for \chem{OH^{-}} at two bond lengths — degenerate and non-degenerate cases.  
The total counts reflect values for 15 consecutive epochs after chemical accuracy was achieved.  
In all cases, only one gradient evaluation per epoch was performed (\ie, the second Wolfe condition was always satisfied when the first was).
Here, \quotes{calls to loss} refers to the loss evaluations necessary to test the strong Wolfe conditions or the queries to perform the Bayesian trials only, \ie, it does not include the loss calls necessary for gradient estimation.
}
\label{tab:resources_oh_minus}
\end{table}

In Fig. \ref{fig:ohminus_wolfe_bayes}(\emph{right}), we show the relative energy-estimation error \emph{vs.} the number of epochs for the same molecule but a different bond length, $1.86\text{\AA}$.
At this bond length, the spectral gap is only $0.005\%$ of the ground-state energy.
The proximity between ground and first-excited states makes training more difficult.
We see that all methods display a plateau-like behavior after approximately $10$ epochs.
However, the method that uses the strong Wolfe conditions significantly improves the relative error before the $50$th epoch.
Meanwhile, the Bayesian optimization scheme with $50$ calls to the loss function to estimate the best $\etat$ takes $2\text{x}$ more epochs to start training significantly.
The Bayesian optimization with $25$ calls to the loss function takes approximately $3.5\text{x}$ more epochs than the strong Wolfe conditions.
In Table \ref{tab:resources_oh_minus}, we also see that this increases the number of calls to the loss function for the Bayesian methods by a factor bigger than $4$, while the strong Wolfe conditions require fewer than $3\text{x}$ more calls to the loss function.

\section{\chem{CH_{2}} optimization curves}
\label{app:opt_curves_ch2}

As discussed in the main text, the proposed warm start (Eq. \eqref{eq:warm_start}) is not the most adequate choice for \chem{CH_{2}} --- indeed, as one can verify via the numerical diagonalization, the Hartree state $\ket{1^{8} 0^{6}}$ is not the major contribution in none of its three degenerate ground-states ---, as seen in Fig. \ref{fig:molecules_comparison}(d), the overlap of $\ket{\psi_\text{warm}}$ with the degenerate ground-states subspace is of $\sim 10^{-5}$. 
Despite this, as can be seen in Fig. \ref{fig:ch2}, the conjugate methods, as well as Adam with a relatively large learning rate, were able to achieve chemical accuracy, although even EGT-CG took more epochs than usual when compared to the other molecules (see Fig. \ref{fig:molecules_comparison}(c)). 
This indicates that using the simple proposed warm start may not be the optimal choice for all molecules.
However, we stress that if there is any other initialization inspired by domain knowledge, preparing it with \hwe$_k$ remains trivial, whilst impractical for other ansatze. 

\begin{figure}[!h]
    \includegraphics[width=.5\columnwidth]{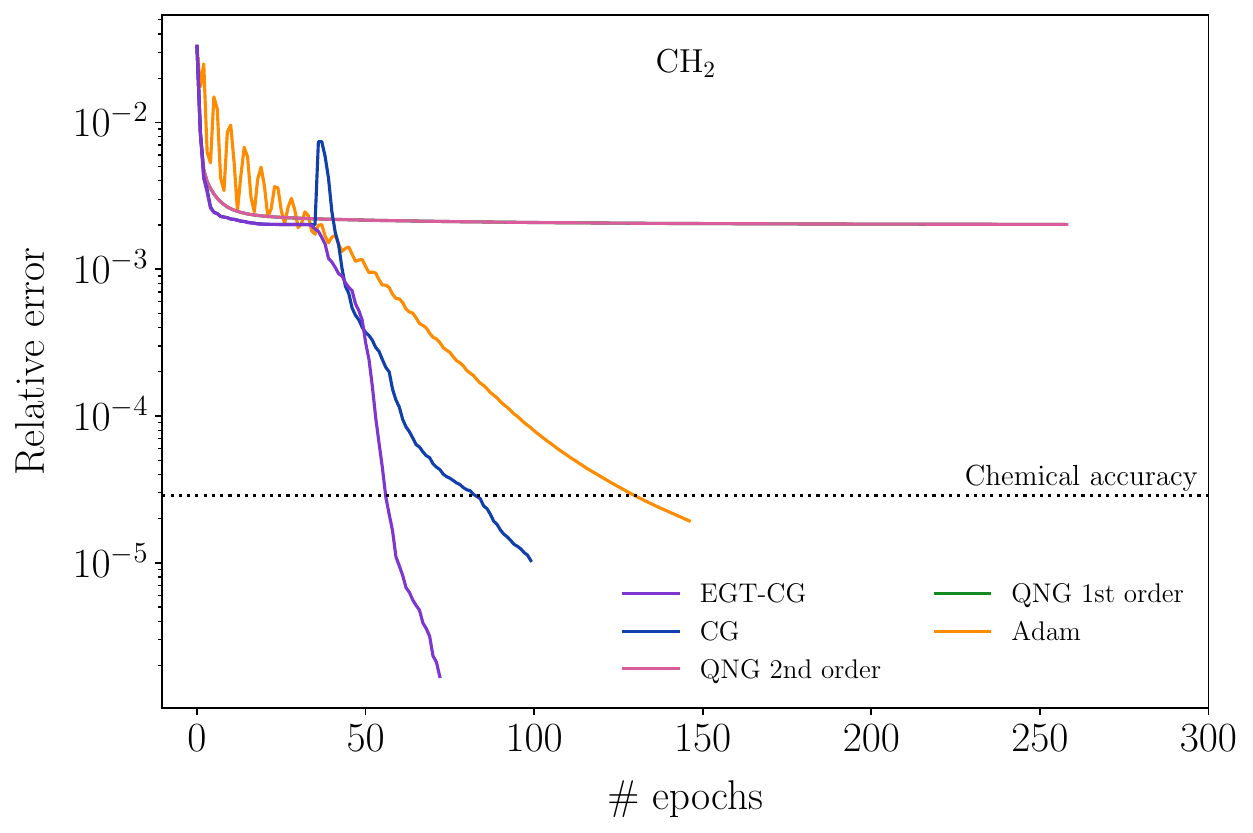}
    \caption{
            Relative ground-state energy estimation error for the \chem{CH_{2}} molecule as a function of the number of epochs.
            Using the Jordan-Wigner transformation and the STO-3G basis set, the ground state is represented by $n = 14$ spin-orbitals as the active space with $k = 8$ electrons. 
            The same warm start $\ket{\psi_\text{warm}}$ ($\alpha = 0.9$) was used, despite it not having a large overlap with the true degenerate ground-space.
            The ground-state has degeneracy $3$, and a spectral gap of $0.2\%$ of the ground-state energy, which favors most optimizer schemes to plateau at the first excited-state energy, whilst the EGT-CG escapes the plateau and converges to chemical accuracy monotonically, given the guarantees of the Strong Wolfe conditions. 
            Despite the non-monotonic path, Adam ($\eta = 0.01)$ escapes the plateau and manages to achieve chemical accuracy.
    }
    \label{fig:ch2}
\end{figure}

As observed for other molecules as well, the optimization curve of Adam is non-monotonic, exhibiting an oscillating behavior which, although may be an indication of an exceedingly large initial learning rate that could lead to divergence, has been shown to lead to faster convergence to chemical accuracy in all the performed simulations --- here, for \chem{CH_{2}}, $\eta = 0.01$, and the initial oscillatory behavior was essential for the optimizer to scape the plateau in the first excited states, which the QNG methods were not able to do. 
Also, we note that the CG optimization scheme, despite achieving chemical accuracy, follows a non-monotonic path, while the loss for EGT-CG is monotonously decreasing.

\section{Quantum resource estimation}
\label{app:quantum_resources}

Here, we present an estimation of the quantum resources necessary to run the ground-state optimization demonstrations for XXZ chains described in the main text. 
We focus on the number of calls to the loss function required to determine the learning rate $\etat$ per epoch $t$ per optimization scheme (Table \ref{tab:resources_xxz}).
We also provide explicit \cnot and parameter counts in each one of the circuit ansatze considered in the numerical demonstrations (Table \ref{tab:ncnots_xxz_ansatze}).

\begin{table}[!h]
\setlength{\tabcolsep}{7.5pt}
\begin{tabular}{c|cccc}
\begin{tabular}[c]{@{}c@{}}Number of \\ qubits ($n$)\end{tabular} &
\begin{tabular}[c]{@{}c@{}}HWE\\(EGT-CG)\end{tabular} &
\begin{tabular}[c]{@{}c@{}}HWE\\(CG)\end{tabular} &
\begin{tabular}[c]{@{}c@{}}HWE\\(QNG 1st)\end{tabular} &
\begin{tabular}[c]{@{}c@{}}HWE\\(QNG 2nd)\end{tabular} \\
\hhline{=====}
\\[-1.2em]
& \multicolumn{4}{c}{\textbf{Calls to loss (average per epoch)}} \\
\hline
4  & 6.4 & 6.4 & 50.0 & 50.0 \\
6  & 5.5 & 5.5 & 50.0 & 50.0 \\
8  & 6.1 & 5.5 & 50.0 & 50.0 \\
10 & 5.4 & 4.4 & 50.0 & 50.0 \\
12 & 5.2 & 3.4 & 50.0 & 50.0 \\
14 & 5.0 & 3.0 & 50.0 & 50.0 \\
16 & 4.2 & 3.1 & 50.0 & 50.0 \\
\hhline{=====}
\end{tabular}
\caption{
Resource comparison for \xxz ground state estimation, in terms of calls to the loss in each epoch necessary to determine the learning rate of each optimization scheme: 
The satisfaction of the strong Wolfe conditions for EGT-CG and CG; 
per-epoch greedy Bayesian optimization for QNG 1st and 2nd; 
constant initial learning rate for Adam (which does not require any additional calls to the loss).
The values are averages over $50$ different Haar-random initializations.  
For the CG methods, the testing of Wolfe conditions requires a variable number of calls to loss per epoch, so the reported number is the average number of calls throughout the optimization.
}
\label{tab:resources_xxz}
\end{table}

\renewcommand{\arraystretch}{1.4}
\begin{table}[!h]
\setlength{\tabcolsep}{7pt}
\centering
\begin{tabular}{c|cccc|cccc}
\multirow{2}{*}{\begin{tabular}[c]{@{}c@{}}Number\\ of qubits ($n$)\end{tabular}} &
\multicolumn{4}{c|}{\textbf{\# of CNOTs in ansatz}} &
\multicolumn{4}{c}{\textbf{\# of parameters in ansatz}} \\
& HWE & QOL & HE OP & HE & HWE & QOL & HE OP & HE \\
\hhline{=========}
4  & 2     & 10     & 28     & 12     & 1     & 5     & 30     & 15     \\
6  & 8     & 38     & 126    & 60     & 2     & 19    & 126    & 63     \\
8  & 102   & 138    & 504    & 248    & 7     & 69    & 510    & 255    \\
10 & 390   & 502    & 2040   & 1020   & 15    & 251   & 2046   & 1023   \\
12 & 3230  & 1846    & 8184    & 4092    & 49    & 923   & 8190    & 4095    \\
14 & 11372 & 6862    & 32760    & 16380    & 132   & 3431   & 32766    & 16383    \\
16 & 45738 & 25738    & 131056    & 65520    & 439   & 12869   & 131070    & 65535    \\
\hhline{=========}
\end{tabular}
\caption{
Number of CNOTs and parameters in each ansatz used for simulations of the XXZ ground state.
Although we did not run experiments for ansatze other than \hwe{}  for $n > 10$, we plot the CNOT counts for these circuits for a scaling comparison.
        Numerical details can be found in Tables \ref{tab:resources_xxz} and \ref{tab:ncnots_xxz_ansatze} in App. \ref{app:quantum_resources}.
}
\label{tab:ncnots_xxz_ansatze}
\end{table}

\section{Ground-state optimization for the Ising model}
\label{app:tfim}

As an additional example of ground states belonging to mixed particle-number subspaces, we consider the one-dimensional $n$-qubit transverse-field Ising model (\tfim), with Hamiltonian $H_{\operatorname{TFIM}} \coloneqq -\sum_{j=0}^{n-1} \, \left( Z_{j} \, Z_{j+1} + h \, X_{j} \right)$ and closed-boundary condition, where $h$ is the transverse field strength. 
As demonstrated in Ref.~\cite{Yu2025}, if $h = \order{(k / n)^{a}}$ with $a > 1/2$ and sufficiently large $n$, the ground state of \tfim{} in the computational basis is fully supported on states of HW at most $k$. 
Thus, the ground state can be searched using a circuit ansatz given by the union of $\operatorname{HWE}_{k}$ ansatze with all the Hamming weights up to $k$, \ie $\operatorname{HWE}_{\leq k} \coloneqq \bigcup_{j=0}^{k} \, \operatorname{HWE}_{j}$~\cite{Farias2025, Raj2025}.

\begin{figure}[!h]
    \includegraphics[width=0.5\columnwidth]{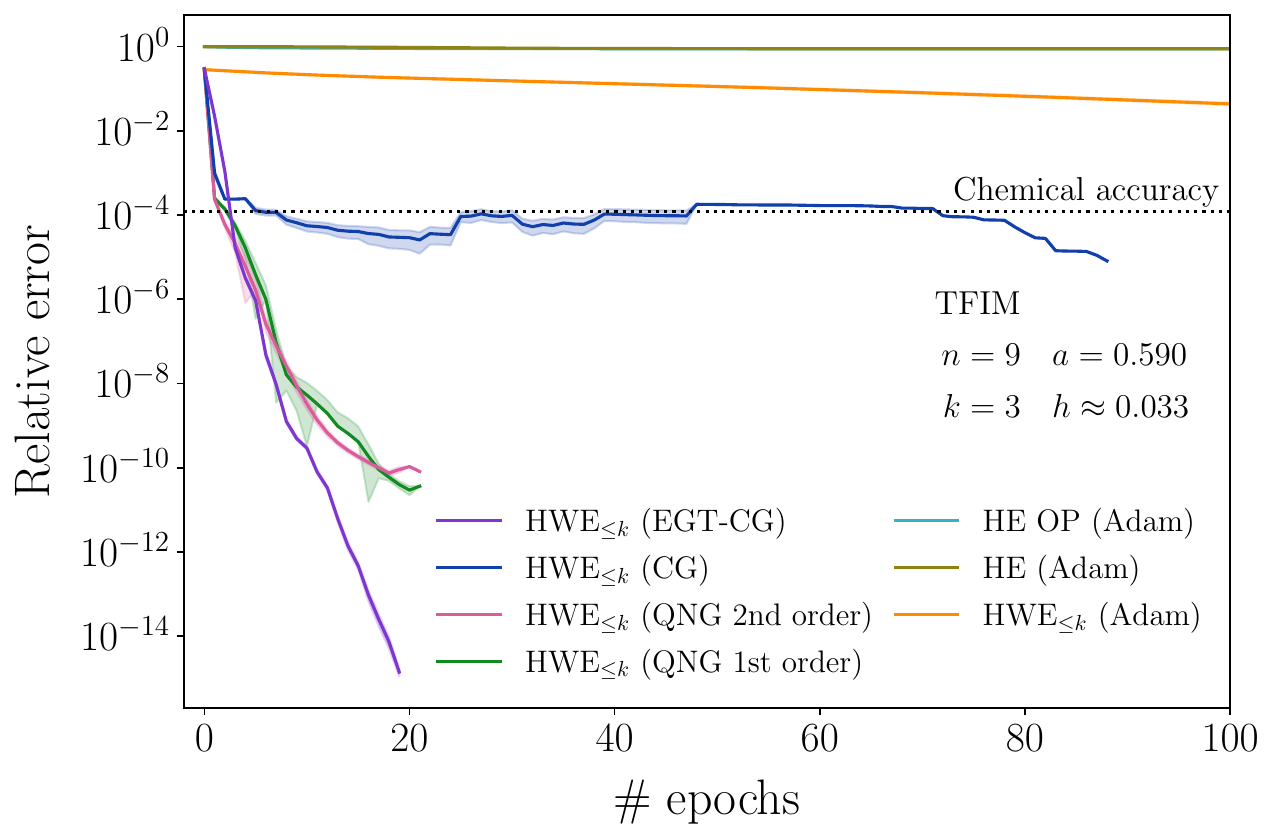}
    \caption{
        \textbf{Ground-state optimization for the TFIM Hamiltonian.}
        Optimization curves comparing different circuit ansatze and optimization schemes.
        With $\operatorname{HWE}_{\leq k}$, we used all the considered optimization schemes.
        We used Adam with QOL, HE, and HE OP.
        See \ref{ssec:details_numerical} for details.
        Each point is the average over $50$ Haar-random initializations, with error shades indicating the $95\%$ confidence interval.
        Chemical accuracy is plotted as a dotted black line.
    }
    \label{fig:tfim}
\end{figure}

Figure \ref{fig:tfim} shows the average relative error in ground-state energy estimation for a chain of $n = 9$ qubits as a function of the number of epochs.
We chose the parameters $a = 0.59$ and $h \approx 0.033$, with the ground state having support on computational basis states of HW up to $k = 3$.
We use the same $50$ Haar-random states as starting points of the optimizations for all combinations of circuits and optimization schemes.
We see that all versions of the natural gradient method, as well as the two conjugate gradient methods, achieve relative chemical accuracy, on average, in $3$ to $5$ epochs. 
While the geometry-aware methods continue to improve relative error estimation, the flat-space CG optimizer, on average, does not provide any improvement beyond the chemical accuracy level until around the $80$th epoch.
Meanwhile, all three geometry-aware methods continuously improve accuracy until being stopped at around the $20$th epoch.
The EGT-CG again displays the best performance at the end of the training, reaching an average relative energy-estimation error lower than $10^{-14}$.
The first- and second-order QNGs reach an average relative error of $\sim 10^{-10}$, which is $5$ orders of magnitude better than flat-space CG at the end of training but still $4$ orders of magnitude worse than the results using the EGT-CG optimizer.
As in the example of the \chem{H_{5}} molecule in the previous section, the EGT-CG performs better than flat-space CG around small-gradient landscape regions.
Even though it is not displayed in Fig. \ref{fig:tfim} for convenience, the combination of $\operatorname{HWE}_{\leq k}$ ansatz and Adam optimizer reached chemical accuracy around the $311$th epoch on average.
That is approximately $60$ times more epochs than the geometry-aware optimizers and flat-space CG.
The \he{} and \heop{} ansatze with Adam optimizer did not achieve chemical accuracy, being early-stopped, on average, at the $128$th and $159$th epochs, respectively.

\section{Efficient implementations of coordinate transformation and regularization}
\label{app:jacobian_regularization}

As mentioned in the main text, there are two steps required to estimate the natural gradient $\vecv_{t}$ in $\vecx$ coordinates.
The first is the estimation of the Euclidean gradient $(\partialbf_{\thetabf}\loss)_{\thetabf_{t}}$ obtained using quantum hardware measurements, while the second involves the computation of the product $\jacobian(\thetabf_{t}) \, \metric^{-1}(\thetabf_{t}) \, (\partialbf_{\thetabf}\loss)_{\thetabf_{t}}$. 
We now calculate the cost of the two remaining products to numerically calculate $\vecv_{t}$.
The cost to classically calculate the natural gradient $\metric^{-1}(\thetabf_{t}) \, (\partialbf_{\thetabf}\loss)_{\thetabf_{t}}$ is of $d$ floating-point multiplications.
Since the metric $\metric$ is a diagonal matrix in the coordinate basis, it can be element-wise inverted efficiently.
Moreover, the aforementioned product has the same numerical cost as an element-wise multiplication between the two $(d-1)$-dimensional arrays $(\partialbf_{\thetabf}\loss)_{\thetabf_{t}}$ and $\operatorname{diag}(\metric^{-1})$.

We now estimate the numerical cost of applying the coordinate transformation using the Jacobian matrix $\jacobian$.
Under the assumption of $\vecx \in \sphere$, the Jacobian matrix for the hyperspherical coordinate transformation can be straightforwardly calculated using its definition as
\begin{spreadlines}{3ex}
    \begin{align}
        \jacobian(\thetabf) &= \begin{pmatrix}
        \partial_{\theta_{1}} \! \cos(\theta_{1}) & \partial_{\theta_{2}} \! \cos(\theta_{1}) & \cdots & \partial_{\theta_{d-1}} \! \cos(\theta_{1}) \\
        \partial_{\theta_{1}} \! \sin(\theta_{1})\cos(\theta_{2}) & \partial_{\theta_{2}} \! \sin(\theta_{1})\cos(\theta_{2}) & \cdots & \partial_{\theta_{d-1}} \! \sin(\theta_{1})\cos(\theta_{2}) \\
        \vdots & \vdots & \ddots  & \vdots \\
        \partial_{\theta_{1}} \! \prod_{\ell=1}^{d-2}\sin(\theta_{\ell}) \! \cos(\theta_{d-1}) & \partial_{\theta_{2}} \! \prod_{\ell=1}^{d-2}\sin(\theta_{\ell}) \! \cos(\theta_{d-1}) & \cdots & \partial_{\theta_{d-1}} \! \prod_{\ell=1}^{d-2}\sin(\theta_{\ell}) \! \cos(\theta_{d-1}) \\
        \partial_{\theta_{1}} \! \prod_{\ell=1}^{d-1}\sin(\theta_{\ell}) & \partial_{\theta_{2}} \! \prod_{\ell=1}^{d-1}\sin(\theta_{\ell}) & \cdots & \partial_{\theta_{d-1}} \! \prod_{\ell=1}^{d-1}\sin(\theta_{\ell}) \\
        \end{pmatrix} 
        \notag \\
        &=\begin{pmatrix}
        -\sin(\theta_{1}) & 0 & \cdots & 0 \\
        \cos(\theta_{1})\cos(\theta_{2}) & -\sin(\theta_{1})\sin(\theta_{2}) & \cdots & 0 \\
        \vdots & \vdots & \ddots & \vdots \\
        \cos(\theta_{1})\!\prod_{l=2}^{d-2}\sin(\theta_{l})\!\cos(\theta_{d-1}) & \sin(\theta_{1})\!\cos(\theta_{2})\!\prod_{l=3}^{d-2}\sin(\theta_{l})\!\cos(\theta_{d-1}) & \cdots & -\prod_{l=1}^{d-2}\!\sin(\theta_l)\!\sin(\theta_{d-1}) \\
        \cos(\theta_{1})\!\prod_{l=2}^{d-1}\!\sin(\theta_{l}) & \sin(\theta_{1})\!\cos(\theta_{2})\!\prod_{l=3}^{d-1}\sin(\theta_{l}) & \cdots & \prod_{l=1}^{d-2}\!\sin(\theta_{l})\cos(\theta_{d-1}) \\
        \end{pmatrix} \, .
    \label{eq:jacob_matrix}
    \end{align}
\end{spreadlines}

Equation \eqref{eq:jacob_matrix} shows that $\jacobian(\thetabf) \in \mathbb{R}^{d \times d-1}$ is a \emph{tall lower Hessenberg} matrix, \ie a tall matrix that has nonzero elements only in the lower triangle and the first diagonal above the main diagonal (known as the first \emph{superdiagonal}) \cite[Chapter 2]{Datta2010}.
Then, the cost of the product between the Jacobian and the natural gradient array can be implemented efficiently by element-wise multiplication of the nonzero elements of each row of $\jacobian$ with the corresponding element(s) of the natural gradient.
Since the number of nonzero elements in $\jacobian$ goes from $1$ in the first row to $d-1$ in the last two rows, the cost of the matrix multiplication is $(d + 2)(d-1)/ 2$ multiplications and $(d-2)(d+1) / 2$ additions.
Hence, the overall time complexity of numerically calculating $\vecv_{t}$ is $\order{d^{2}}$.
Additionally, the nonzero elements of the $j$th Jacobian row can be reused in the calculation of the $(j+1)$th row.
Therefore, the entire matrix product can be performed while only one Jacobian row is stored in memory at a time, giving a space complexity of $\order{d}$.

\vspace{0.2cm}

For the aforementioned procedure to work, it is necessary to guarantee the invertibility of the metric tensor $\metric$ during training. 
Here, we describe how this can be efficiently guaranteed for the EGT optimizer.
As stated in the main text, the metric of $\sphere$ is diagonal in the coordinate basis, with components $g_{11} = 1$ and $g_{jj} = \prod_{\ell = 1}^{j-1} \, \sin^{2}(\theta_{\ell})$ for $j \in [2, \, d - 1]$.
The eigenvalues of $\metric$ are a function of $\thetabf$ and, more importantly, each eigenvalue $g_{jj}$ is associated with all angles $\{\theta_{\ell}\}_{\ell \in [j]}$.
If any of the angles decrease in magnitude such that $\theta_{\ell} \ll 1$ during training, then all eigenvalues $g_{jj} \rightarrow 0 \, , \forall \, j \geq \ell$.
Such a metric then becomes degenerate, and all directions $\theta_{j} \, , \forall \, j \geq \ell$ become unexplored during training even under the use of pseudo-inversion.
Instead of resorting to standard techniques, \eg Tikhonov regularization, to solve this issue, we use the connection between the eigenvalues $g_{jj}$ and angles $\theta_{j}$ to create the following heuristic rule:
if $\sin(\theta_{j}) \approx \theta_{j} \lesssim \tau$, then $\theta_{j} \rightarrow \pi / 2$.
The constant $\tau$ is a threshold used to reset training in the $j$th direction and maintain the invertibility of $\metric$.
We empirically verified that $\tau = 10^{-3}$ yielded stable training for all numerical demonstrations presented in Sec. \ref{sec:results}.

\section{Estimating gradients with \hwe$_k$ ansatz without using the parameter-shift rule}
\label{app:hwe_gradient}

In this appendix, we derive an explicit expression for the gradient of the loss $\loss_{\psi}(\thetabf) = \bra{\psi(\thetabf)} H \ket{\psi(\thetabf)}$ on the state prepared by the ansatze used in this manuscript, presenting an improvement in terms of quantum resources when compared to the standard parameter-shift rule (PSR).
For this appendix, we introduce the subscript $\psi$ to $\loss$, indicating that the loss function is calculated w.r.t. state $\ket{\psi(\thetabf)}$.
This distinction will become relevant later.

Given $n$ qubits, the ansatze prepare quantum states with support on $d$-dimensional (sub)spaces.
The ansatz explores the effective Hilbert space using $M = d-1$ angles represented by the vector $\thetabf \in \mathbb{R}^{d-1}$.
We can write the prepared state $\ket{\psi(\thetabf)}$ as
\begin{align}
\ket{\psi(\thetabf)} = \cos(\theta_{1}) \, \ket{b_{1}} + \sum_{j=2}^{d-1} \left(\prod_{m=1}^{j-1}\sin(\theta_{m})\right) \, \cos(\theta_{j}) \, \ket{b_{j}} + \left(\prod_{j=1}^{d-1} \, \sin(\theta_{j})\right) \ket{b_{d}} \, .
\label{eq:state_expanded}
\end{align}
The $\ell$th component of the gradient of the state in Eq. \eqref{eq:state_expanded} is then given by
\begin{align}
\ket{\partial_{\theta_{\ell}}\psi(\thetabf)} &= \delta_{\ell 1} \! \cos\left(\theta_{\ell} + \frac{\pi}{2}\right) \! \ket{b_{1}} + \sum_{j=2}^{d-1} \! \omega_{\ell j} \! \prod_{m=1}^{j-1} \! \sin\left(\theta_{m} + \delta_{\ell m} \! \frac{\pi}{2}\right) \! \cos\left(\theta_{j} + \delta_{\ell j} \! \frac{\pi}{2}\right) \! \ket{b_{j}} + \prod_{j=1}^{d-1} \! \sin\!\left(\theta_{j} + \delta_{\ell j} \! \frac{\pi}{2}\right) \! \ket{b_{d}}, 
\label{eq:del_state_expanded}
\end{align}
where $\delta_{lj}$ is the Kronecker delta and $\omega_{\ell j}$ is the discrete step function $\omega_{\ell j} = 1$ if $\ell \leq j$, and zero otherwise.
From the definition of the metric $\metric$ for $\sphere$ in Eq. \eqref{eq:metric_components}, we see that $\ket{\varphi_{\ell}(\thetabf)} \coloneqq g_{\ell\ell}^{-1/2}(\thetabf) \, \ket{\partial_{\theta_{\ell}}\psi(\thetabf)}$ is a normalized quantum state $\forall \, \ell \in [d - 1]$, with nonzero $j$th components in $j \geq \ell$.
Defining the quantum state $\ket{\phi_{\ell}(\thetabf)} \coloneqq \left(\ket{\psi(\thetabf)} + \ket{\varphi_{\ell}(\thetabf)} \right) / \sqrt{2}$, we obtain
\begin{align}
(\partial_{\theta_{l}} \loss)_{\thetabf_{t}} = g_{\ell\ell}^{1/2} \, \left(2 \, \loss_{\phi_{\ell}}(\thetabf_{t}) - \loss_{\varphi_{\ell}}(\thetabf_{t}) -  \loss_{\psi}(\thetabf_{t})\right) \, ,
\label{eq:grad_loss_comps_final}
\end{align}
where, as introduced at the beginning of this appendix, the terms $\loss_{\phi_{\ell}}$, $\loss_{\varphi_{\ell}}$, and $\loss_{\psi}$ indicate the loss function calculated w.r.t. the quantum states $\ket{\phi_{\ell}(\thetabf_{t})}$, $\ket{\varphi_{\ell}(\thetabf_{t})}$, and $\ket{\psi(\thetabf_{t})}$.
Despite the similarity with PSR due to the parameters $\theta_{\ell}$ being shifted by $\frac{\pi}{2}$, Eq. \ref{eq:grad_loss_comps_final} is not a proper PSR since it lacks opposing, negative shifts, \ie shifts by $-\frac{\pi}{2}$.

\vspace{0.2cm}

Now, we follow Ref.~\cite{Abbas2023} in the way we analyze the time complexity of estimating $(\partialbf_{\theta}\loss)_{\thetabf_{t}}$ using quantum hardware measurements.
There, the time complexity to execute the loss function once is denoted as $\Time(\loss)$.
Meanwhile, Refs.~\cite{Anselmetti2021, Kottmann2021} derived parameter-shift rules for quantum circuits composed of $M$ Givens rotations and showed that the time complexity of estimating the entire gradient vector is $\Time(\text{PSR}) = 4 \, M \, \Time(\loss)$.
With that in mind, we now analyze the time complexity for Eq. \eqref{eq:grad_loss_comps_final}.
Since the last term on the right-hand side (r.h.s.) of Eq. \eqref{eq:grad_loss_comps_final}, $\loss_{\psi}$, is constant, it can be estimated only once and therefore has a time complexity of $\Time(\loss)$ independent of $\ell$.
The first term on the r.h.s. of Eq. \eqref{eq:grad_loss_comps_final} is the most interesting of the three.
As mentioned above, the state $\ket{\varphi_{\ell}(\thetabf)}$ only has nonzero amplitudes in the components $j \geq \ell$.
This means that this state does not require all $M$ parameters to be prepared, instead requiring $d-\ell$ parametrized gates.
Hence, if the number of parametrized rotations needed to estimate each $\loss_{\varphi_{\ell}}$ is $m$, then the complexity of estimating all $\ell$ elements is $\sum_{m = 1}^{M} \, m = \frac{M + 1}{2} \, \Time(\loss)$.
The first term on the r.h.s. of Eq. \eqref{eq:grad_loss_comps_final} depends on the preparation of a superposition between $\ket{\psi(\thetabf)}$ and $\ket{\varphi_{\ell}(\thetabf)}$.
Even though $\ket{\varphi_{l}(\thetabf)}$ can be, in general, individually prepared with a reduced number of parametrized gates, the same cannot be said about $\ket{\psi(\thetabf)}$.
Therefore, the time complexity of estimating all components $\loss_{\phi_{l}}$ is $M \, \Time(\loss)$.
Finally, the sum of the three components gives us the time complexity of estimating the full gradient vector, \ie $\Time((\partialbf_{\thetabf}\loss)_{\thetabf_{t}}) = \frac{3 \, (M + 1)}{2} \, \Time(\loss)$.
For sufficiently large $M$, we have
\begin{align}
    \Time((\partialbf_{\thetabf}\loss)_{\thetabf_{t}}) \approx \frac{3}{8} \, \Time(\text{PSR}) \, .
    \label{eq:time_complexity}
\end{align}
Equation \eqref{eq:time_complexity} demonstrates that estimating gradient vectors using Eq. \eqref{eq:grad_loss_comps_final} instead of PSR provides a $62.5\%$ reduction in time complexity and quantum resources used.
We also highlight that, even though significant, the resource reduction presented in Eq. \eqref{eq:time_complexity} is constant, and therefore it is still in line with the complexity results obtained in Ref.~\cite{Abbas2023}.

\end{document}